\newcommand{\spread}{{\rm spread}}
\newcommand{\ceil}[1]{\lceil #1 \rceil}
\newcommand{\floor}[1]{{\lfloor #1 \rfloor}}
\newtheorem{lemma}{Lemma}
\newtheorem{proposition}{Proposition}
\newtheorem{theorem}{Theorem}
\newtheorem{remark}{Remark}
\begin{document}

\title{Diffuse Reflection Radius in a Simple Polygon}
\author{
  Eli Fox-Epstein\thanks{Department of Computer Science, Brown University, Providence, RI. \texttt{ef@cs.brown.edu}} \and
  Csaba D. T\'oth\thanks{Department of Mathematics, California State University, Northridge, Los Angeles, CA. \texttt{cdtoth@acm.org}} \and
  Andrew Winslow\thanks{D\'{e}partement d'Informatique, Universit\'{e} Libre de Bruxelles, Brussels, Belgium. \texttt{awinslow@ulb.ac.be}}
}
\date{}
\maketitle

\begin{abstract}
It is shown that every simple polygon in general position with $n$ walls can be illuminated from a single point light source $s$ after at most $\lfloor (n-2)/4\rfloor$ \emph{diffuse reflections}, and this bound is the best possible. A point $s$ with this property can be computed in $O(n\log n)$ time.
It is also shown that the minimum number of diffuse reflections needed to illuminate a given simple polygon from a single point can be approximated up to an additive constant in polynomial time.
\end{abstract}

\section{Introduction}
When light diffusely reflects off of a surface, it scatters in all directions.
This is in contrast to specular reflection,
  where the angle of incidence equals the angle of reflection.
We are interested in the minimum number of diffuse reflections needed to illuminate all points in the interior of a simple polygon $P$ with $n$ vertices from a single light source $s$ in the interior of $P$.
A \emph{diffuse reflection path} is a polygonal path $\gamma$ contained in $P$ such that every interior vertex of $\gamma$ lies in the relative interior of some edge of $P$, and the relative interior of every edge of $\gamma$ lies in the interior of $P$ (see Fig.~\ref{fig:diffuse-ex1} for an example). Our main result is the following.

\begin{theorem}\label{thm:radius}
For every simple polygon $P$ with $n\geq 3$ vertices in general position (i.e., no three collinear vertices), there is a point $s\in {\rm int}(P)$ such that for all $t \in {\rm int}(P)$,
  there is a diffuse reflection path from $s$ to $t$ with at most $\floor{(n-2)/4}$ internal vertices.
This lower bound is the best possible.
A point $s\in {\rm int}(P)$ with this property can be computed in $O(n \log n)$ time.
\end{theorem}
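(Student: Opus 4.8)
The plan is to establish the upper bound, the $O(n\log n)$ construction, and the matching lower bound in turn.

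\emph{Setup and reduction for the upper bound.} Fix any triangulation $\mathcal{T}$ of $P$ using only vertices of $P$, and let $T$ be its dual tree: one node per triangle, with an edge between two triangles that share a diagonal. Then $T$ has $n-2$ nodes, so it has a \emph{center} triangle $\Delta^\ast$ --- one minimizing the maximum tree-distance to all other triangles --- and this maximum, the radius of $T$, is at most $\floor{(n-2)/2}$, since the radius of an $m$-node tree is $\ceil{{\rm diam}/2}\le\floor{m/2}$. I would place $s$ at the centroid of $\Delta^\ast$ and reduce the theorem to the following claim: \emph{for every $k\ge 0$, the set of points reachable from $s$ by a diffuse reflection path with at most $k$ internal vertices contains every triangle of $\mathcal{T}$ within tree-distance $2k+1$ of $\Delta^\ast$.} Granting the claim, taking $k=\floor{(n-2)/4}$ yields $2k+1\ge\floor{(n-2)/2}\ge$ the radius of $T$, so every $t\in{\rm int}(P)$ (lying in some triangle) is reached, which is the bound.

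\emph{The induction and the main obstacle.} The base case $k=0$ is easy: two triangles sharing a diagonal glue into a (possibly non-convex) quadrilateral $Q\subseteq P$, and any segment with one endpoint in each triangle crosses the shared diagonal and so stays in $Q$; hence $s$ sees all of $\Delta^\ast$ and of every triangle adjacent to it. The inductive step --- pushing two triangulation steps deeper per additional reflection --- is where I expect the real work. Given a triangle $\Delta$ at tree-distance $\le 2k+1$ that is already fully reached and a chain $\Delta,\Delta',\Delta''$ descending the tree, one must produce, for each target $t\in\Delta''$, a reflection point $x$ in the \emph{relative interior of a polygon edge} such that $x$ is itself already reached with $\le k$ reflections and the segment $xt$ lies in ${\rm int}(P)$. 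The candidate edges are the polygon edges bounding $\Delta$, $\Delta'$, $\Delta''$, and the verification splits into cases according to the degrees of $\Delta'$ and $\Delta''$ in $T$ and to which of their sides are diagonals versus polygon edges; the general-position hypothesis enters precisely here, to ensure the required reflection point can be taken in an open edge rather than at a vertex. I anticipate that the inductive hypothesis will need strengthening --- e.g.\ to also assert that an open neighborhood of a suitable endpoint of each relevant diagonal is reached --- so that both sub-segments of the two-link extension can be routed around the reflex vertices cutting off the pockets.

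\emph{Algorithm.} Compute $\mathcal{T}$ in $O(n\log n)$ time by a standard method, build $T$ in $O(n)$, find a center of $T$ in $O(n)$ (two breadth-first searches locate a diametral pair of triangles, whose tree-midpoint is a center), and return the centroid of that triangle; the total is $O(n\log n)$.

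\emph{Lower bound (tightness).} For the matching bound I would exhibit, for each $n$, an explicit polygon in general position --- a chain of $\Theta(n)$ ``reflection-forcing'' niches (a spiral-like or comb-like family using a constant number of vertices per niche) --- arranged so that (i) illuminating the extreme niche from any interior source requires drilling through the whole chain in order, (ii) a single diffuse reflection deepens the illuminated region past only a bounded number of niches, and (iii) even a centrally placed source remains $\Theta(n)$ niches from its worst target, gaining only a factor of two over an extreme source, so the two extremes meet. The delicate part is the exact accounting of where reflection points on the niche walls can be placed, forcing the constant --- and the floor --- to come out to exactly $\floor{(n-2)/4}$.
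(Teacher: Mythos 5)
Your approach---triangulate $P$, take a center triangle $\Delta^\ast$ of the dual tree, place $s$ at its centroid, and argue that $k$ reflections reach every triangle within tree-distance $2k+1$---is genuinely different from the paper's, but it fails already at the base case. You claim that a segment with one endpoint in each of two triangles sharing a diagonal $uv$ must cross $uv$ and hence stays in $\Delta^\ast\cup\Delta$. In fact such a segment crosses the \emph{line} through $u$ and $v$, but when $\Delta^\ast\cup\Delta$ is non-convex it may do so outside the segment $uv$ and exit the quadrilateral through a polygon edge incident to its reflex vertex. Concretely, take $u=(0,0)$, $v=(0.1,0)$, $w=(0,10)$, $x=(-5,-1)$, and $P=\Delta(u,v,w)\cup\Delta(u,v,x)$, a simple non-convex quadrilateral with $u$ reflex. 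The centroid of $\Delta^\ast=\Delta(u,v,w)$ is about $(0.03,3.3)$, and the segment from it to $x\in\Delta(u,v,x)$ meets the $x$-axis near $(-3.8,0)$, well outside $uv$ and outside $P$. So that centroid does not see all of the adjacent triangle, nor all of $P$---which already falsifies your claim at $n=4$, where the theorem requires a kernel point, and the centroid of a triangle in an arbitrary triangulation is not one. Beyond the base case, you flag the inductive step as unproven, and I do not see any mechanism by which a fixed triangulation's dual-tree metric could track diffuse-reflection depth at the rate $2k+1$: the dual tree is combinatorial and oblivious to visibility, and a single chord can cut across many triangles or none. The lower-bound paragraph is a statement of intent rather than a construction.

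For comparison, the paper works with the pockets of $V_0(s)$ instead of a triangulation. Lemma~\ref{lem:condition} shows that if every pocket of $V_0(s)$ has size at most $\floor{n/2}-1$ and every dependent pair of pockets has joint size at most $\floor{n/2}-1$, then ${\rm int}(P)\subseteq V_{\floor{(n-2)/4}}(s)$; the key ingredient is the diameter bound $D\le\floor{n/2}-1$ of Barequet et al., applied inside an auxiliary polygon built from each saturated pocket (Lemma~\ref{lem:saturated}). A Helly-type theorem (Theorem~\ref{thm:Breen}) then shows the generalized kernel $K_{\floor{n/2}-1}(P)$ is nonempty, giving a point whose pockets are all small, and a segment-sweep together with the ``double violator'' case analysis disposes of the dependency condition; these same steps yield the $O(n\log n)$ algorithm. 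The tight lower bound comes from the zig-zag family of polygons.
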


\begin{figure}[htb]
\centering
\includegraphics[width=\textwidth]{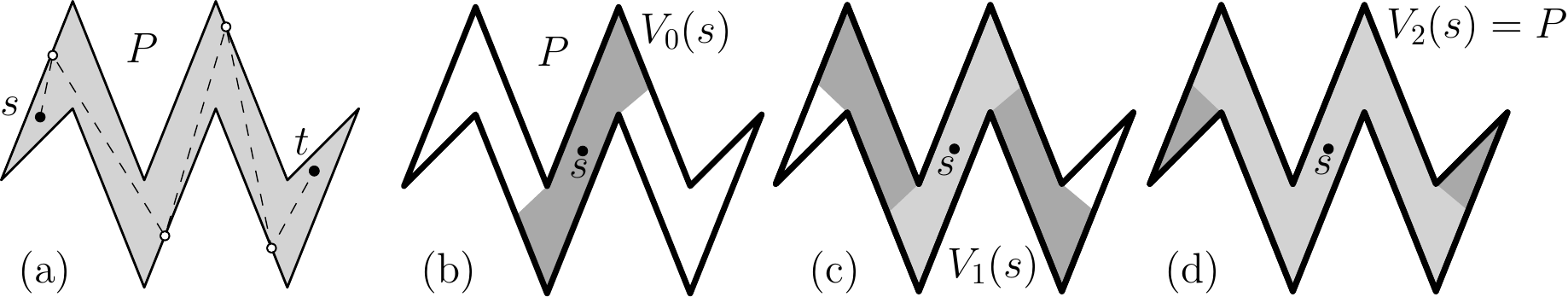}
\caption{%
(a) A diffuse reflection path between $s$ to $t$ in a simple polygon $P$. (b)--(d) The regions of a polygon illuminated by a light source $s$ after 0, 1, and 2 diffuse reflections. The diffuse reflection radius of a zig-zag polygon with $n$ vertices is $\floor{(n-2)/4}$.}
\label{fig:diffuse-ex1}
\end{figure}

This result is, in fact, a tight bound on the worst-case diffuse reflection radius (defined below) for simple polygons.
Denote by $V_k(s)\subseteq P$ the part of the polygon illuminated by a light source $s$ after at most $k$ diffuse reflections. Formally, $V_k(s)$ is the set of points $t\in P$ such that there is a diffuse reflection path from $s$ to $t$ with at most $k$ interior vertices. Hence, $V_0(s)$ is the visibility polygon of point $s$ within the polygon $P$ if $s\in {\rm int}(P)$. The \emph{diffuse reflection depth} of a point $s\in {\rm int}(P)$ is the smallest integer $r\geq 0$ such that ${\rm int}(P)\subseteq V_r(s)$. The \emph{diffuse reflection radius} $R(P)$ of a simple polygon $P$ is the minimum diffuse reflection depth over all points $s\in {\rm int}(P)$, and \emph{diffuse reflection center} of $P$ is the set of points $s\in {\rm int}(P)$ that attain this minimum. With this terminology, Theorem~\ref{thm:radius} implies that $R(P)\leq \floor{(n-2)/4}$ for every simple polygon $P$ with $n\geq 3$ vertices in general position. A family of zig-zag polygons (e.g.\ the polygon in Fig.~\ref{fig:diffuse-ex1}) shows that this bound is the best possible for all $n\geq 3$. We note here that the \emph{diffuse reflection diameter} $D(P)$ of $P$ is the \emph{maximum} diffuse reflection depth over all $s\in {\rm int}(P)$.

No polynomial-time algorithm is known for computing $R(P)$ for a given polygon $P$ with $n$ vertices. We show, however, that $R(P)$ can be approximated up to a constant additive error in polynomial time.
\begin{theorem}\label{thm:apx-compute-radius}
 Given a simple polygon $P$ with $n$ vertices in general position,
 one can compute in time polynomial in $n$:
\begin{enumerate}
\item an integer $k\in \mathbb{N}_0$ such that $k-1\leq R(P)\leq k+1$, and
\item a point $s\in {\rm int}(P)$ such that ${\rm int}(P)\subseteq V_{k+1}(s)$.
\end{enumerate}
\end{theorem}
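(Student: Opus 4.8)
\medskip
\noindent\textbf{Proof idea for Theorem~\ref{thm:apx-compute-radius}.}
The plan is to pin $R(P)$ down to a window of length~$2$ by sandwiching it between a combinatorial lower estimate $\underline R$, computable in polynomial time, and $\underline R+2$, and to exhibit an explicit point realizing the upper bound. Outputting $k=\underline R+1$ then gives $k-1\le R(P)\le k+1$, and the point produced by the upper-bound argument is the required $s$ with ${\rm int}(P)\subseteq V_{k+1}(s)$.

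\medskip
\noindent\emph{The combinatorial estimate and the lower bound.}
I would work with the \emph{edge-visibility graph} $G$ of $P$: its vertices are the $n$ edges of $P$, and two edges are adjacent iff their relative interiors are joined by an open segment lying in ${\rm int}(P)$ (weak visibility of edges); $G$ is computable in polynomial time. For $q\in{\rm int}(P)$ let $E(q)$ be the set of edges whose relative interior meets $V_0(q)$, and let ${\rm ecc}(q)=\max_{e}d_G(E(q),e)$ be the eccentricity of $E(q)$ under the shortest-path metric $d_G$ of $G$. The set $E(q)$ is constant on each cell of a sufficiently fine subdivision of $P$ (say, the arrangement of the $O(n^2)$ lines through pairs of vertices, which refines the visibility decomposition of $P$), so there are only polynomially many candidate sets $E(q)$; computing $G$, its all-pairs distances, and one sample point per cell yields $\underline R:=\min_q{\rm ecc}(q)$ and an attaining point $q_0$ in polynomial time. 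For the lower bound, suppose ${\rm int}(P)\subseteq V_r(s)$. A routine limiting argument — using that the set of points visible from a fixed point is closed, and that $V_r(s)$ is a finite union of sets each determined by a fixed sequence of at most $r$ reflecting edges — shows that every edge $f$ has a relative-interior point in $V_r(s)$, reached by a diffuse reflection path $s=w_0,w_1,\dots,w_m,w_{m+1}$ with $m\le r$. Consecutive vertices of this path lie on \emph{distinct} edges whose relative interiors are joined by a segment through ${\rm int}(P)$, so the edges carrying $w_1,\dots,w_m$ form a walk in $G$ from $E(s)$ to an edge equal or adjacent to $f$; hence $d_G(E(s),f)\le r$, and taking the maximum over $f$ gives $\underline R\le{\rm ecc}(s)\le r$. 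Thus $\underline R\le R(P)$.

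\medskip
\noindent\emph{The upper bound and the witness point.}
It remains to prove ${\rm int}(P)\subseteq V_{\underline R+2}(q_0)$; this gives $R(P)\le\underline R+2=k+1$ and $s=q_0$. The strategy is to \emph{flood} $P$ from $q_0$ along $G$, but tracking not only which edges are partially illuminated but \emph{how large} the illuminated portion of each is. One shows that after an initial $O(1)$ reflections the lit portion of every edge reached so far is already wide enough that one further reflection forwards illumination to every $G$-neighbour and widens those neighbours as well; consequently, after $\underline R+1$ reflections every edge of $P$ is \emph{entirely} illuminated, whence after one more reflection every interior point — each of which sees a positive-length arc of some edge — is illuminated. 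Here I expect to reuse the structural "sweep" analysis developed for Theorem~\ref{thm:radius}, which already controls how illumination widens as it advances into the polygon.

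\medskip
\noindent\emph{Main obstacle.}
The crux is this flooding lemma, and the real difficulty is amortization. Partial illumination of an edge does not propagate on its own: a diffuse path cannot travel along a single edge, and the lit part of an edge need not be the part that sees a prescribed neighbour, so some "wasted" reflections are genuinely needed to widen an edge's lit interval before it can forward illumination. The delicate claim is that this overhead is bounded by an \emph{absolute} constant \emph{and does not accumulate} across the $\underline R$ flooding layers — it must be a one-time cost, paid during the warm-up near $q_0$, not a per-edge cost (which would give only a multiplicative, not additive, bound). If the sharpest bound obtainable this way were $R(P)\le\underline R+c$ with $c>2$, one would also have to strengthen the lower estimate — for instance by subdividing each long edge at the endpoints of its windows and working in the resulting finer visibility graph — so as to shrink the interval containing $R(P)$ back to length~$2$, which is exactly what the $\pm1$ guarantee of Theorem~\ref{thm:apx-compute-radius} demands.
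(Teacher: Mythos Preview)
Your approach is genuinely different from the paper's, and the gap you yourself flag as the ``main obstacle'' is real and unresolved.

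\medskip
\noindent\textbf{The upper bound is not established.}
Your flooding lemma --- that ${\rm int}(P)\subseteq V_{\underline R+2}(q_0)$ --- is the whole content of the upper bound, and you do not prove it. The difficulty you identify is exactly the point: a walk $e_1,\ldots,e_m$ in the edge-visibility graph $G$ does not yield a diffuse reflection path, because the particular point you reach on $e_i$ need not see any point of $e_{i+1}$. Your hope that the widening overhead is an absolute $O(1)$ constant paid once near $q_0$ is not supported by the machinery of Theorem~\ref{thm:radius}: the regions $R_k$ there are analyzed from a \emph{fixed} source and advance by weakly covering one or two new edges per step inside each pocket, which is precisely a per-layer (not one-time) accounting. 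Nothing in that analysis bounds, uniformly along an arbitrary $G$-walk, the number of extra reflections needed to convert $G$-adjacency into an actual diffuse step. Indeed, the closest known result relating link-type distances to diffuse reflection is the multiplicative $3$-approximation of Ghosh et al.~\cite{GGM+12}; an additive-constant comparison of the kind you need would be new and requires a new idea, not a reuse of the sweep for Theorem~\ref{thm:radius}.

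\medskip
\noindent\textbf{The lower bound also leaks one reflection.}
Your limiting argument for ``every edge $f$ has a relative-interior point in $V_r(s)$'' is not quite right: $V_r(s)$ is generally not closed. Along a convergent sequence of diffuse reflection paths, an interior reflection point can slide to a vertex of $P$, or a path edge can fall onto $\partial P$; the limit is then not a valid diffuse reflection path, and general position of $V$ and $s$ does not preclude this. What one can say cleanly is that ${\rm int}(P)\subseteq V_r(s)$ implies $\partial P\subseteq V_{r+1}(s)$ (one more reflection from a nearby interior point onto the edge), which gives only $\underline R\le R(P)+1$ and widens your window to length~$3$.

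\medskip
\noindent\textbf{How the paper avoids both issues.}
The paper does not use any graph proxy. It fixes a finite set $Q=\{v^-,v^+:v\in V\}$ of $2n$ boundary points placed just off each vertex, computes $V_k(q)\cap\partial P$ directly for each $q\in Q$ using the known $V_k$ algorithms, and takes $k$ minimal with $C_k=\bigcap_{q\in Q}V_k(q)\cap{\rm int}(P)\neq\emptyset$. The upper bound is then a one-line triangulation argument: any $t\in{\rm int}(P)$ lies in some triangle $v_1v_2v_3$ and hence sees some $v_j^{\pm}\in Q\subseteq V_k(s)$, so $t\in V_{k+1}(s)$. The lower bound uses exactly the ``one extra reflection to reach $\partial P$'' step you are missing. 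This sidesteps entirely the question of whether edge-visibility distance tracks diffuse reflection distance to within an additive constant.
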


\paragraph{\bf Motivation and Related Work.}
Diffuse reflection paths are special cases of \emph{link paths}, which have been studied extensively due to applications in motion planning, robotics, and curve compression~\cite{G07,MSD00}.
The \emph{link distance} between two points, $s$ and $t$, in a simple polygon $P$ is the minimum number of edges in a polygonal path between $s$ and $t$ that lies entirely in $P$.
In a polygon $P$ with $n$ vertices, the link distance between two points can be computed in $O(n)$ time~\cite{S86}.
The \emph{link depth} of a point $s$ is the smallest integer $d\geq 0$ such that all other points in $P$ are within link distance $d$ of $s$.
The \emph{link radius} is the minimum link depth over all points in $P$, and the \emph{link center} is the set of points with minimum link depth.
It is known that the link center is a convex region and can be computed in $O(n\log n)$ time~\cite{DLS92}.
The \emph{link diameter} of $P$, the maximum link depth over all points in $P$, can also be computed in $O(n\log n)$ time~\cite{S90}.

The \emph{geodesic center} of a simple polygon is a point inside the polygon which minimizes the maximum shortest-path distance  (also known as geodesic distance) to any point in the polygon. Asano and Toussaint~\cite{AT85} proved that the geodesic center is unique. Pollack et al.~\cite{PSR89} show how to compute the geodesic center of a simple polygon with $n$ vertices in $O(n \log n)$ time; this was recentpy improved to $O(n)$ time by Ahn et al.~\cite{ABB+15}. Hershberger and Suri~\cite{HS97} give an $O(n)$ time algorithm for computing the \emph{geodesic diameter}. Schuirer~\cite{Sch94} gives $O(n)$ time algorithms for the geodesic center and diameter under the $L_1$ metric in rectilinear polygons. Bae et al.~\cite{BKOW14} show that the $L_1$-geodesic diameter and center can be computed in $O(n)$ time in every simple polygon with $n$ vertices.

Diffuse reflection paths have received increasing attention since the mid-1990s when Tokarsky~\cite{Tok95} answered a question of Klee~\cite{Kle69,Kle79}, proving that a light source may not cover the interior of the simple polygon using \emph{specular} reflection (where the angle of incidence equals the angle of reflection in the reflection path).
He constructed a simple polygon $P$ and two points $s,t\in {\rm int}(P)$ such that there is no specular reflection path from $s$ to $t$. It is not difficult to see that all points $t\in {\rm int}(P)$ can be reached from any $s\in {\rm int}(P)$ on a diffuse reflection path. However, the maximum number of reflections, in terms of the number of vertices, have been determined only recently. Barequet et al.~\cite{Us} proved, confirming a conjecture by Aanjaneya et al.~\cite{ABP08}, that $D(P)\leq \lfloor n/2\rfloor-1$ for all simple polygons with $n$ vertices, and this bound is the best possible.

The link distance, geodesic distance and the $L_1$-geodesic distance are all metrics; but the minimum number of edges on a diffuse reflection path between two points is \emph{not} a metric. Specifically, the triangle inequality need not hold (note that for $a,b,c\in {\rm int}(P)$, the concatenation of an two diffuse reflection paths, $a$-to-$b$ and $b$-to-$c$, need not be a diffuse reflection path since it may have an interior vertex at $b\in {\rm int}(P)$). This explains, in part, the difficulty of handling diffuse reflections.
Brahma et al.~\cite{BPS04} constructed examples where $V_2(s)$ (the set of points reachable from $s$ after at most two diffuse reflections) is not simply connected, and where $V_3(s)$ has $\Omega(n)$ holes. In general, the maximum complexity of $V_k(s)$ is known to be $\Omega(n^2)$ and $O(n^9)$~\cite{ADI+06}. In contrast to link paths, the best known algorithm for computing a minimum diffuse reflection path (one with the minimum number of reflections) between two points in a simple polygon with $n$ vertices takes $O(n^9)$ time~\cite{ADI+06,G07}.
Ghosh et al.~\cite{GGM+12} give a 3-approximation for this problem that runs in $O(n^2)$ time. Bishnu et al.~\cite{BGG+14} define a \emph{constrained} version of diffuse reflection paths that can be computed in $O(n^3)$ time.

Khan et al.~\cite{KPA+13} study two weaker models of diffuse reflections, in which some edges of a diffuse reflection path may overlap with the boundary of the polygon $P$. They establish upper and lower bounds for
the diffuse reflection radius under these weaker models for simple polygons that can be decomposed into convex quadrilaterals. No previous bound has been known for the diffuse reflection radius under the standard model that we use in this paper.

\paragraph{\bf Proof Technique.}
The regions $V_k(s)$ are notoriously difficult to handle. Instead of $V_k(s)$, we rely on the simply connected regions $R_k(s)\subseteq V_k(s)$ defined by Barequet et al.~\cite{Us} and show that ${\rm int}(P)\subseteq R_{\floor{(n-2)/4}}(s)$ for some point $s\in {\rm int}(P)$. In Section~\ref{sec:prelim}, we establish a simple sufficient condition (Lemma~\ref{lem:condition}) for ${\rm int}(P)\subseteq R_{\floor{(n-2)/4}}(s)$ in terms of the visibility polygon $V_0(s)$ that can be verified in $O(n)$ time. Except for two extremal cases that are resolved directly (Section~\ref{ssec:double}), we prove that there \emph{exists} a point satisfying these conditions in Section~\ref{sec:center}.

The two main geometric tools we use are a generalization of a kernel of a simple polygon (Section~\ref{ssec:kernel}) and the weak visibility polygon for a line segment (Section~\ref{ssec:witness}). Finally, the existential proof can be turned into an efficient algorithm: the generalized kernel can be computed in $O(n\log n)$ time, and the visibility polygon for a point moving along a line segment can be maintained with a persistent data structure. The combination of these methods helps finding a witness point $s\in {\rm int}(P)$ with ${\rm int}(P)\subseteq V_{\floor{(n-2)/4}}(s)$  in $O(n \log n)$ time.

\section{Preliminaries}
\label{sec:prelim}

For a set $U\subset \mathbb{R}^2$ in the plane, let ${\rm int}(U)$ denote the interior, $\partial U$ the boundary, and ${\rm cl}(U)$ the closure of $U$. Let $P$ be a simply connected closed polygonal domain (for short, \emph{simple polygon}) with $n$ vertices. A \emph{chord} of $P$ is a closed line segment $ab$ such that $a,b\in \partial P$ and the relative interior of $ab$ is in ${\rm int}(P)$.

We assume that the vertices of $P$ are in general position (that is, no three collinear vertices), and we only consider light sources $s\in {\rm int}(P)$ that do not lie on any line spanned by two vertices of $P$. Recall that $V_0(s)$ is the visibility polygon of the point $s\in {\rm int}(P)$ with respect to $P$.
The \emph{pockets} of $V_0(s)$ are the connected components of $P\setminus {\rm cl}(V_0(s))$.
See Fig.~\ref{fig:pockets}(a) for examples.
The common boundary of $V_0(s)$ and a pocket is a chord $ab$ of $P$ (called a \emph{window}) such that $a$ is a reflex vertex of $P$ that lies in the relative interior of segment $sb$. We say that a pocket with a window $ab$ is \emph{induced by} the reflex vertex $a$. Note that every reflex vertex induces at most one pocket of $V_0(s)$. We define the \emph{size} of a pocket as the number of vertices of $P$ on the boundary of the pocket. Since the pockets of $V_0(s)$ are pairwise disjoint, the sum of the sizes of the pockets is at most $n$, the number of vertices of $P$.

\begin{figure}[ht]
  \centering
  \includegraphics[width=.9\textwidth]{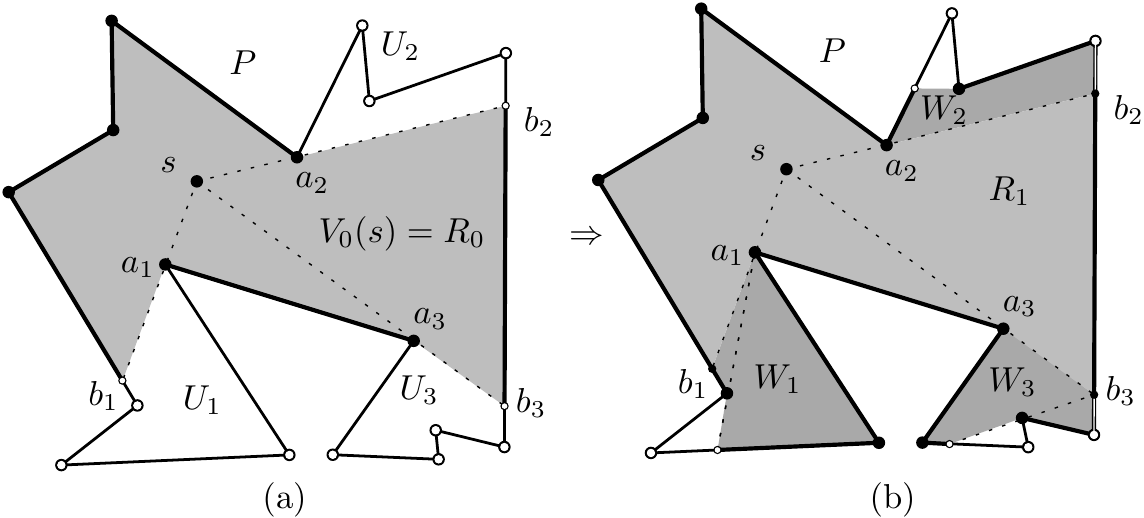}
  \caption{\label{fig:pockets}
(a) A polygon $P$ where $V_0(s)$ has three pockets $U_1$, $U_2$ and $U_3$,
    of size 4, 4, and 5, respectively. The left pockets are $U_1$ and $U_2$,
   the only right pocket is $U_3$. Pocket $U_1$ is independent of
   both $U_2$ and $U_3$; but $U_2$ and $U_3$ are dependent.
(b) The construction of region $R_1$ from $R_0=V_0(s)$ in~\cite{Us}.
Pocket $U_1$ is saturated, and pockets $U_2$ and $U_3$ are unsaturated.}
\end{figure}

A pocket is a \emph{left} (resp., \emph{right}) pocket if it lies on the left (resp., right) side of the directed line $\overrightarrow{ab}$. Two pockets of $V_0(s)$ are \emph{dependent} if some chord of $P$ crosses the window of both pockets; otherwise they are \emph{independent}.
One pocket is called independent if it is independent of all other pockets.

\begin{proposition} \label{prop:independent}
All left (resp., right) pockets of $V_0(s)$ are pairwise independent.
\end{proposition}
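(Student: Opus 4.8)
The plan is to argue by contradiction, exploiting the fact that every window is a sub-segment of a ray emanating from $s$. Indeed, if $ab$ is the window of a pocket $U$, then $a$ lies in the relative interior of $sb$, so $s,a,b$ are collinear and $ab$ lies on the ray $\rho$ from $s$ through $a$; write $\theta(ab)$ for the direction of $\rho$. Near a point in the relative interior of $ab$, the window locally separates ${\rm cl}(V_0(s))$ from $U$, and if $U$ is a left pocket then $U$ lies on the left of $\overrightarrow{ab}$, i.e.\ on the side of the supporting line of $\rho$ whose angular coordinate around $s$ exceeds $\theta(ab)$, while ${\rm cl}(V_0(s))$ lies on the side of smaller angular coordinate.

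I would then suppose, for contradiction, that two distinct left pockets $U_1,U_2$, with windows $w_1,w_2$, are dependent via a chord $cd$ meeting the relative interior of $w_1$ at $p_1$ and that of $w_2$ at $p_2$. Since $cd$ is transversal to each window, it is not collinear with either window's supporting line, and it does not pass through $s$: otherwise (using that each such supporting line passes through $s$ and through the corresponding reflex vertex) $s$ would lie on the line through the two distinct reflex vertices of $w_1$ and $w_2$, contradicting general position. In particular $w_1$ and $w_2$ have distinct directions. Since $s\notin cd$, the angular coordinate around $s$ is strictly monotone along $cd$, with total variation less than $\pi$; after relabeling $U_1\leftrightarrow U_2$ and reorienting $cd$, I may assume this coordinate increases from $c$ to $d$ and $\theta(w_1)<\theta(w_2)<\theta(w_1)+\pi$, so $p_1$ precedes $p_2$ on $cd$. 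Also $cd\cap w_1=\{p_1\}$ (a segment meets a line at most once), and $cd$ meets $\partial P$ only at $c$ and $d$.

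The core of the argument is to walk along $cd$ from $p_1$ toward $d$. The angular coordinate increases from $\theta(w_1)$, passes through $\theta(w_2)$ — so this sub-path contains $p_2$, and in fact $p_2$ precedes $d$ strictly — and ends at $\arg(d-s)>\theta(w_1)$; hence this entire sub-path lies on the larger-angle side of the supporting line of $w_1$. Immediately past $p_1$ that side is locally $U_1$, so the sub-path enters $U_1$. Now the boundary of the pocket $U_1$ consists of $w_1$ together with a single arc of $\partial P$ (each pocket has exactly one window), while the sub-path avoids $\partial P$ except possibly at $d$ and meets $w_1$ only at $p_1$; by connectedness it therefore cannot leave $U_1$ before reaching $d$, so $p_2\in U_1$. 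But $p_2$ lies on $w_2\subseteq{\rm cl}(V_0(s))$, which is disjoint from $U_1$ — a contradiction. The right-pocket case follows by reversing all orientations.

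The step I expect to be most delicate is the ``cannot leave $U_1$'' claim: it rests on the structural fact that a pocket is bounded by exactly one window plus one arc of $\partial P$, together with a brief connectedness argument showing the walk stays inside the single region $U_1$ until it crosses that boundary. Some care is also needed to dispose of the degenerate cases where $cd$ passes through $s$ or is collinear with a window, which is exactly where the hypothesis that $s$ avoids every line through two vertices of $P$ is used.
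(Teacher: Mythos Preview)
Your argument is correct. Both your proof and the paper's rest on the same key observation---each window lies on a ray from $s$, so the two supporting lines meet at $s$---but the executions diverge. The paper looks at the sub-segment $\ell'$ of the chord between the two crossing points and argues it must lie in the intersection of the two ``right'' (visible-side) halfplanes, a wedge with apex $s$; since one of the windows fails to border that wedge, $\ell'$ cannot have both endpoints where it should, a contradiction. You instead parametrize the chord by angular coordinate around $s$, note that this coordinate is strictly monotone because $s\notin cd$, and show that once the chord crosses $w_1$ it enters the pocket $U_1$ and is \emph{trapped} there: the only exits are back through $w_1$ (ruled out by monotonicity) or through $\partial P$ (ruled out before $d$), so $p_2\in U_1$, contradicting $p_2\in w_2\subset{\rm cl}(V_0(s))$. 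Your route is a little longer but makes the connectedness step explicit and is more careful about the degeneracies (you actually use the general-position hypothesis on $s$ to exclude $s\in cd$, which the paper does not mention). The paper's wedge phrasing is slicker once one sees why $\ell'$ must sit in that wedge; your trapping phrasing makes that ``why'' visible.
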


\begin{proof}
Consider two left pockets of $V_0(s)$, lying on the left side of the windows $\overrightarrow{a_1b_1}$ and $\overrightarrow{a_2b_2}$, respectively (see Fig.~\ref{fig:pockets}(a)). Suppose, for contradiction, that some chord $\ell$ of $P$ intersects both windows. Let $\ell'\subset \ell$ be the segment of $\ell$ between $a_1b_1$ and $a_2b_2$. Segment $\ell'$ lies in the right halfplane of both $\overrightarrow{a_1b_1}$ and $\overrightarrow{a_2b_2}$. The intersection of these two halfplanes is a wedge with the apex at $s$, and either $a_1b_1$ or $a_2b_2$ is not incident to this wedge.
This contradiction implies that no chord $\ell$ can cross both windows $a_1b_1$ and $a_2b_2$.
\end{proof}
The main result of Section~\ref{sec:prelim} is a sufficient condition (Lemma~\ref{lem:condition})
for a point $s\in {\rm int}(P)$ to fully illuminate ${\rm int}(P)$ within $\floor{(n-2)/4}$
diffuse reflections.
The proof of the lemma is postponed to the end of Section~\ref{sec:prelim}.
It relies on the techniques developed by Barequet et al.~\cite{Us} and the bound
$D(P)\leq \floor{n/2}-1$ on the diffuse reflection diameter.

\begin{lemma} \label{lem:condition}
We have ${\rm int}(P)\subseteq V_{\floor{(n-2)/4}}(s)$
for a point $s\in {\rm int}(P)$ if the pockets of $V_0(s)$
satisfy these conditions:
\begin{enumerate}\itemsep -2pt
  \item[$\mathbf{C}_1$] every pocket has size at most $\floor{n/2}-1$; and
  \item[$\mathbf{C}_2$] the sum of the sizes of any two dependent pockets is at most $\floor{n/2}-1$.
\end{enumerate}
\end{lemma}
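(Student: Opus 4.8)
The plan is to avoid the ill-behaved regions $V_k(s)$ and instead prove the stronger statement ${\rm int}(P)\subseteq R_{\floor{(n-2)/4}}(s)$, where $R_k(s)\subseteq V_k(s)$ are the simply connected regions of Barequet et al.~\cite{Us}; since $R_k(s)\subseteq V_k(s)$, this implies the lemma. Recall that $R_0(s)=V_0(s)$ and that $R_{k+1}(s)$ is built from $R_k(s)$ by advancing the illumination one reflection deeper through \emph{every} window of $R_k(s)$ at once: each window sits on an edge $e$ of $P$, and one adds the weak visibility region of the maximal subsegment of $e$ lying on $\partial R_k(s)$. From this construction, together with the diameter bound $D(P)\le\floor{n/2}-1$ applied to suitable subpolygons, I would extract three facts: (a) the fronts advancing into distinct \emph{independent} pockets never interact, so the number of reflections needed to cover ${\rm int}(P)$ equals the maximum, over the pockets (with a dependent pair treated as a single bundle), of the number of reflections needed to absorb that pocket or bundle into $R_k(s)$; (b) an independent pocket $U$ of size $m$ satisfies ${\rm cl}(U)\subseteq R_{\floor{m/2}}(s)$; and (c) a dependent pair of pockets of sizes $m'$ and $m''$ is absorbed by step $\floor{(m'+m'')/2}$.

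Granting (a)--(c), the lemma follows from bookkeeping with the floors. By $\mathbf{C}_1$, every pocket has size $m\le\floor{n/2}-1\le(n-2)/2$, so $\floor{m/2}\le\floor{(n-2)/4}$; hence by (a) and (b) every independent pocket is absorbed by step $\floor{(n-2)/4}$. For the dependent pockets, Proposition~\ref{prop:independent} shows that a dependent pair necessarily consists of one left and one right pocket, so the dependence relation is bipartite and its edges can be treated one at a time: for a dependent pair of sizes $m'$ and $m''$, condition $\mathbf{C}_2$ gives $m'+m''\le\floor{n/2}-1\le(n-2)/2$, hence $\floor{(m'+m'')/2}\le\floor{(n-2)/4}$, and (a) and (c) finish the argument. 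Since $R_0(s)=V_0(s)$ already covers the part of ${\rm int}(P)$ outside all pockets, this gives ${\rm int}(P)\subseteq R_{\floor{(n-2)/4}}(s)$.

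The delicate step is (c), and within it the case of a left pocket $U$ dependent with several right pockets $W_1,\dots,W_t$ at once: I must argue that the relevant count is $\max_i(m_U+m_{W_i})$ rather than $m_U+\sum_i m_{W_i}$, i.e., that no ``size of three or more pockets'' condition is needed. The geometry I would exploit is that, viewed from $s$, the chords witnessing the dependences emanate inside a bounded wedge, which forces the windows of $W_1,\dots,W_t$ to be ordered so that the advancing $R_k(s)$-front reaches each $W_i$ through a channel whose vertex count is controlled by $m_U+m_{W_i}$ alone; proving (c) then amounts to showing that such a channel is, in the worst case, a single zig-zag of that length, to which the diameter bound applies. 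Facts (a) and (b) are close to statements already implicit in \cite{Us}; it is the precise behaviour of the illumination fronts inside a channel shared by several pockets, and the matching against the floor functions in $\mathbf{C}_1$ and $\mathbf{C}_2$, where the real work lies.
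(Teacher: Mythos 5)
Your framework is exactly the paper's: drop down to the simply connected regions $R_k(s)\subseteq V_k(s)$, absorb each pocket of $V_0(s)$ into $R_k$ separately, and apply the diameter bound $D(P')\le\lfloor |P'|/2\rfloor-1$ on a subpolygon $P'$ built from a single pocket. Your fact~(b) is the paper's Lemma~\ref{lem:saturated} (applied to independent, hence saturated, pockets of $V_0(s)$), and your fact~(c) is the paper's Lemma~\ref{lem:dependent}, after which the floor arithmetic you perform is correct. The paper also proves a cleaner intermediate statement (Lemma~\ref{lem:condition+}: both bounds $\le 2k+1$ implies ${\rm int}(P)\subseteq V_k(s)$), which makes the case split explicit: pockets of size $\le k+1$ are handled by a direct edge-counting bound (Lemma~\ref{lem:allpockets}), and only pockets of size $\ge k+2$ invoke (b) or (c) --- which also supplies the strict inequality $m'<m$ that Lemma~\ref{lem:dependent} requires and that your formulation of (c) omits.

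The real gap is exactly where you point it: (c) is not proved, only sketched, and your sketch of the proof does not match the actual difficulty. The paper does not argue by isolating a ``single zig-zag channel''; rather, it tracks the number of boundary edges of the pocket weakly covered by $R_k$ (Propositions~\ref{pp:Us} and~\ref{pp:Us+}), shows that all pockets dependent on $U$ are absorbed by step $m'-1$, and then performs a case analysis on whether the single remaining pocket of $R_k$ inside $U$ is saturated. The subtlety you should watch for is captured in Remark~\ref{remark:1}: after the first round of unsaturated propagation, the residual pocket inside $U$ can be independent of all other pockets and yet \emph{unsaturated}, because $R_k$ may fail to contain a boundary sliver of the adjacent pocket; when this happens the covering front temporarily gains only one edge per step, and the proof must show the residual pocket becomes genuinely saturated one step later so that the $\lfloor m/2\rfloor$-type bound from (b) can take over. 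Without this accounting, the claimed bound $\lfloor(m+m')/2\rfloor$ in (c) does not follow from the construction of $R_k$, so as written your argument has a hole precisely in the step you yourself flag as delicate.
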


\subsection{Review of regions $R_k$.}

We briefly review the necessary tools developed by Barequet et al.~\cite{Us}. Let $s \in {\rm int}(P)$ be a point in general position with respect to the vertices of $P$. Recall that $V_k(s)$, the set of points reachable from $s$ with at most $k$ diffuse reflections, is not necessarily simply connected when $k\geq 1$~\cite{BPS04}. Instead of tackling $V_k(s)$ directly, Barequet et al.~\cite{Us} recursively define simply connected regions $R_k=R_k(s)$, where $R_k(s)\subseteq V_k(s)$, for all $k\in \mathbb{N}_0$. For $k=0$, we have $R_0=V_0(s)$. We now review how $R_{k+1}$ is constructed from $R_k$. Each region $R_k$ is bounded by chords of $P$ and segments along the boundary $\partial P$. The connected components of $P\setminus {\rm cl}(R_k)$ are the \emph{pockets} of $R_k$. Each pocket $U_{ab}$ of $R_k$ is bounded by a chord $ab$ such that $a$ is a reflex vertex of $P$, $b$ is an interior point of an edge of $P$, and the two edges of $P$ incident to $a$ are on the same side of the line $ab$ (these properties are maintained recursively for $R_k$).

A pocket $U_{ab}$ of $R_k$ is \emph{saturated} if every chord of $P$ that crosses $ab$ has one endpoint in $R_k$ and the other endpoint in $U_{ab}$. Otherwise, $U_{ab}$ is \emph{unsaturated}. Recall that for a point $s'\in P$, $V_0(s')$ is the set of points in $P$ visible from $s'$. We also introduce an analogous notation for a line segment $pq\subseteq P$: let $V_0(pq)$ denote the set of points in $P$ visible from any point in $pq$.

For a given point $s\in {\rm int}(P)$, the regions $R_k$ are defined as follows (refer to Fig.~\ref{fig:pockets}(b)). Let $R_0=V_0(s)$. If ${\rm int}(P)\subseteq R_k$,
then let $R_{k+1}={\rm cl}(R_k)=P$. If ${\rm int}(P)\not\subseteq R_k$, then $R_k$ has at least one pocket.
For each pocket $U_{ab}$, we define a set $W_{ab}\subseteq U_{ab}$:
If $U_{ab}$ is saturated, then let $W_{ab}=V_0(ab)\cap U_{ab}$.
If $U_{ab}$ is unsaturated, then let $p_{ab}\in R_k\cap \partial P$ be a point close to $b$
such that no line determined by two vertices of $P$ separates $b$ and $p_{ab}$;
and then let $W_{ab}=V_0(p_{ab})\cap U_{ab}$. Let $R_{k+1}$ be the union of ${\rm cl}(R_k)$
and the sets $W_{ab}$ for all pockets $U_{ab}$ of $R_k$. Barequet et al.~\cite{Us} prove
that $R_k\subseteq V_k(s)$ for all $k\in \mathbb{N}_0$.

\begin{remark}\label{remark:1}{\rm
Note that when a pocket $U_{ab}$ is unsaturated, then $p_{ab}$ is an interior point of
some edge $e$ of $P$. Since light does not propagate along the edge $e$, the regions $W_{ab}$ and $R_{k+1}$ do not contain $e\cap U_{ab}$. Consequently, there is a fine difference between \emph{independent} and \emph{saturated} pockets. Every saturated pocket of $R_k$ is independent from all other pockets (by definition), but an independent pocket of $R_k$ is not necessarily saturated. In Fig.~\ref{fig:remark} (a), $U_1$ and $U_2$ are dependent
pockets of $R_0$; region $R_1$ covers the interior of $U_2$, but not its boundary, and it
has a pocket $U_5\subset U_1$. Even though $U_5$ is independent of all other pockets of $R_1(s)$,
it is unsaturated: a chord between $U_5$ and the uncovered part of $U_2$ crosses $a_5b_5$.
Since $s$ is in general position, this phenomenon does not occur for $k=0$, and
every independent pocket of $V_0(s)$ is saturated.}
\end{remark}

\begin{figure}[htp]
  \centering
  \includegraphics[width=.9\textwidth]{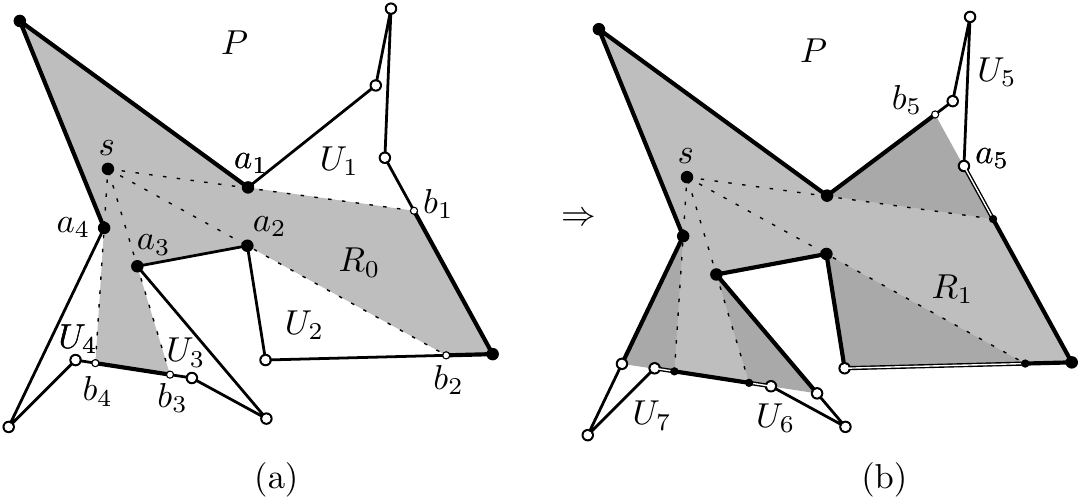}
  \caption{\label{fig:remark}
(a) A polygon $P$ where $R_0=V_0(s)$ has four unsaturated pockets: $U_1,\ldots , U_4$.
(b) The white lines on the boundary of $R_1$ are not part of $R_1$.
Consequently, pocket $U_5$ of $R_1$ is unsaturated, although it is independent of all other pockets.
Pockets $U_6$ and $U_7$ of $R_1$ are independent and saturated.}
\end{figure}

We say that a region $R_k$ \emph{weakly covers} an edge of $P$ if the boundary $\partial R_k$
intersects the relative interior of that edge. On the boundary of every pocket $U_{ab}$ of $R_k$,
there is an edge of $P$ that $R_k$ does not weakly cover, namely, the edge of $P$ incident to $a$.
We call this edge the \emph{lead edge} of $U_{ab}$. The following observation follows from
the way the regions $R_k$ are constructed in~\cite{Us}.

\begin{proposition}[\cite{Us}]\label{pp:Us}
For every pocket $U$ of region $R_k$, $k\in \mathbb{N}_0$,
the lead edge of $U$ is weakly covered by region $R_{k+1}$
and \emph{is not} weakly covered by $R_k$.
\end{proposition}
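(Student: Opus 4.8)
The plan is to prove the two halves of the statement separately. That $R_k$ does \emph{not} weakly cover the lead edge $e$ of a pocket $U_{ab}$ is, in essence, the property that makes the term \emph{lead edge} meaningful, and I would derive it from the structural invariant maintained by the construction of the regions $R_k$ in~\cite{Us}. Namely, $U_{ab}$ is a connected component of the relatively open set $P\setminus{\rm cl}(R_k)$, and its boundary consists of the chord $ab$ together with an arc of $\partial P$, the chord $ab$ being the only part of $\partial U_{ab}$ lying in ${\rm cl}(R_k)$. Since the lead edge $e$ is the edge of $P$ incident to $a$ that lies along $\partial U_{ab}$, and since $e$ and $ab$ can meet only at $a$ (a second common point would force $e$ and $ab$ to overlap in a segment, placing part of the relative interior of $ab$ on $\partial P$, contrary to $ab$ being a chord), we obtain $e\cap{\rm cl}(R_k)\subseteq\{a\}$; hence the relative interior of $e$ is disjoint from ${\rm cl}(R_k)$, in particular from $\partial R_k$.

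For the substantive half — that $R_{k+1}$ weakly covers $e$ — recall that $R_{k+1}$ is the union of ${\rm cl}(R_k)$ with the sets $W_{a'b'}$ over all pockets $U_{a'b'}$ of $R_k$, so it is enough to produce a point $x$ in the relative interior of $e$ that lies in the single set $W_{ab}$: then $x\in R_{k+1}\cap\partial P\subseteq\partial R_{k+1}$, which is precisely weak coverage of $e$. I would take $x$ on $e$ close to $a$; any such $x$ lies in $U_{ab}$, because it lies on $\partial U_{ab}$ and, by the previous paragraph, off ${\rm cl}(R_k)$, so it only remains to see that $x$ is visible from the source used to build $W_{ab}$. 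The governing local fact, common to both cases of the construction, is that because the two edges of $P$ incident to $a$ lie on one side of the line $ab$, the reflex wedge of $P$ at $a$ contains the direction of the chord $ab$ as well as the direction of $e$; in particular $e$ leaves $a$ to the side of line $ab$, so a short piece of $e$ at $a$ is visible both from $b$ and from points of $ab$ close to $a$ — the vertex $a$ does not get in the way. In the saturated case $W_{ab}=V_0(ab)\cap U_{ab}$, and for $x$ close enough to $a$ the triangle with vertices $a$, a point $q\in ab$ near $a$, and $x$ is contained in $P$, so $x\in V_0(ab)$ and hence $x\in W_{ab}$. In the unsaturated case $W_{ab}=V_0(p_{ab})\cap U_{ab}$ for a point $p_{ab}\in R_k\cap\partial P$ close to $b$ and not separated from $b$ by any line through two vertices of $P$; here I would first note that $b$ sees a short piece of $e$ at $a$, then transfer this visibility to $p_{ab}$ via the non-separation condition, which forces $V_0(p_{ab})$ to agree combinatorially with $V_0(b)$ near $a$. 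The caveat of Remark~\ref{remark:1} poses no obstruction, since $e$, being incident to $a$, is not the edge of $P$ containing $p_{ab}$. In either case $W_{ab}$, and therefore $R_{k+1}$, contains a point in the relative interior of $e$.

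I expect the unsaturated case to be the main obstacle: $p_{ab}$ may lie on an edge of $P$ different from the one containing $b$, and possibly on the far side of $b$, so one must argue with some care that it still inherits a line of sight to a piece of the lead edge, and this is exactly where the defining features of $p_{ab}$ — proximity to $b$ together with the absence of a vertex-line separating them — are indispensable rather than replaceable by a generic perturbation. A secondary point that leans on the inductively maintained invariants of~\cite{Us}, rather than on a self-contained argument, is the identification of the part of $\partial U_{ab}$ lying in ${\rm cl}(R_k)$ with the chord $ab$, used in the first half.
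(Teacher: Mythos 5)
The paper does not prove this proposition at all: it is stated with the citation \emph{[Us]} (Barequet et al.) and the preceding sentence simply says that it ``follows from the way the regions $R_k$ are constructed in~\cite{Us}.'' So your blind attempt is not competing with a proof in this paper — it is reconstructing an argument the authors delegate to their earlier work. With that understood, your reconstruction is essentially sound, and it isolates the right geometric insight: near the reflex vertex $a$, the pocket $U_{ab}$ is precisely the wedge between the chord $ab$ and the lead edge $e$, both of which lie in the interior reflex angle at $a$. That wedge is contained in $P$, which is exactly why a short piece of $e$ adjacent to $a$ is visible from any point of $ab$ near $a$ (saturated case), from $b$ itself, and — for a sufficiently close $p_{ab}$ — from $p_{ab}$ as well (unsaturated case); and the same local picture is also what makes the lead edge the one edge of $\partial U_{ab}$ that the chord of $R_k$ cannot touch.

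Two points worth tightening. First, you flag the unsaturated case as the main danger, in particular that ``$p_{ab}$ may lie on an edge of $P$ different from the one containing $b$''; this worry is unfounded — the construction requires $b$ to be an interior point of an edge of $P$ and $p_{ab}\in R_k\cap\partial P$ to be close to $b$, so $p_{ab}$ lies on the same edge, just on the $R_k$ side of $b$. The real care needed in the unsaturated case is that $b$ itself lies on a degenerate position (an endpoint of a window), so invoking ``$V_0(p_{ab})$ agrees combinatorially with $V_0(b)$'' is slightly delicate; it is cleaner to argue directly that $p_{ab}$ sees a piece of $e$ near $a$, using that $p_{ab}a$ is close to the chord $ba$ and that the directions swept near $a$ stay inside the wedge $\angle(ab,e)\subset U_{ab}\subset P$. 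Second, as you yourself note, the ``not weakly covered by $R_k$'' half ultimately rests on the inductively maintained invariant from~\cite{Us} that $\partial U_{ab}\cap{\rm cl}(R_k)$ is exactly the chord $ab$; since the paper also takes this for granted (that sentence defining the lead edge already asserts the non-coverage without argument), relying on the invariant matches the level of rigor of the source. On balance this is a correct and more self-contained account than the paper gives; the only thing the paper's approach buys is brevity by outsourcing the verification to~\cite{Us}.
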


\begin{proposition}\label{pp:Us+}
If a pocket $U_{ab}$ of $V_0(s)$ has size $m$, then $R_k$
weakly covers at least $\min(k+1,m)$ edges of $P$ on the boundary of $U$.
\end{proposition}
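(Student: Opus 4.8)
The plan is to induct on $k$, tracking the pockets that nest inside the given pocket $U=U_{ab}$ of $V_0(s)$. Since $U$ has size $m$, its boundary is the window $ab$ together with a simple chain of $\partial P$ running from $a$ through $m-1$ further vertices to $b$; this chain meets the relative interiors of exactly $m$ edges of $P$, namely the $m-1$ edges joining consecutive chain vertices and the edge $e$ of $P$ containing $b$. Let $c_k$ denote the number of these $m$ edges whose relative interior meets $\partial R_k$, i.e.\ that $R_k$ weakly covers. The assertion is precisely $c_k\ge\min(k+1,m)$, so it suffices to show $c_0\ge 1$, that $c_k$ is nondecreasing in $k$, and that $c_{k+1}\ge c_k+1$ whenever $c_k<m$.

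First I would record one elementary fact: since $R_j\subseteq P$, every point of $\partial P$ lying in ${\rm cl}(R_j)$ in fact lies on $\partial R_j$ (a neighbourhood of such a point contains points outside $P$, hence outside $R_j$, as well as points of $R_j$). This has two consequences. Because ${\rm cl}(R_j)\subseteq R_{j+1}$, it follows that once $R_j$ weakly covers an edge of $P$, so does $R_{j'}$ for every $j'\ge j$; hence $c_k$ is nondecreasing, which disposes of the inductive step whenever $c_k\ge m$. It also follows that if an edge $f$ of $P$ on $\partial U$ is \emph{not} weakly covered by $R_k$, then the relative interior of $f$ is disjoint from ${\rm cl}(R_k)$, and therefore lies entirely within a single pocket of $R_k$ (it is a connected subset of the relatively open set $P\setminus{\rm cl}(R_k)$). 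The base case $c_0\ge 1$ is immediate: $ab\subset\partial R_0$ and $b$ lies in the relative interior of $e$, so $R_0$ weakly covers $e$.

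For the inductive step, suppose $c_k<m$, so one of the $m$ edges, say $e'$, is not weakly covered by $R_k$; by the previous paragraph its relative interior lies in some pocket $U'$ of $R_k$. Since the regions are nested, ${\rm cl}(R_0)\subseteq{\rm cl}(R_k)$, so the connected set $U'$ is contained in a single pocket of $R_0$; and because the relative interior of $e'$ lies in both $U'$ and $U$, that pocket must be $U$. Hence $U'\subseteq U$, and in particular $\partial U'\cap\partial P\subseteq\partial U\cap\partial P$, the latter being exactly the chain described above. Now apply Proposition~\ref{pp:Us} to the pocket $U'$ of $R_k$: its lead edge is an edge of $P$ that $R_k$ does not weakly cover but $R_{k+1}$ does. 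This lead edge contains a positive-length piece on $\partial U'\cap\partial P\subseteq\partial U\cap\partial P$, so its relative interior meets $\partial U\cap\partial P$; since edges of $P$ have pairwise disjoint relative interiors, the lead edge must coincide with one of our $m$ edges. Therefore the family of those $m$ edges weakly covered by $R_{k+1}$ strictly contains the family weakly covered by $R_k$, giving $c_{k+1}\ge c_k+1$. Together with the base case and monotonicity, this yields $c_k\ge\min(k+1,m)$ for every $k$.

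I expect the only delicate point to be the two containment claims in the inductive step: that the pocket $U'$ of $R_k$ carrying the uncovered edge $e'$ satisfies $U'\subseteq U$, and that its lead edge lands on the boundary chain $\partial U\cap\partial P$ rather than on the window portion of $\partial U'$. Both hinge on the nesting of the $R_j$ and on the observation that $\partial P$‑points contained in ${\rm cl}(R_j)$ lie on $\partial R_j$; once these are set down, the induction goes through routinely.
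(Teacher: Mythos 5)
Your proof is correct and follows essentially the same route as the paper: both arguments define the count of weakly covered edges on $\partial U$, establish the base case from the edge containing $b$, use nesting for monotonicity, and invoke Proposition~\ref{pp:Us} to obtain one newly covered edge per step as long as some edge of $U$ remains uncovered. Your write-up is somewhat more careful at the points the paper leaves implicit (that an uncovered edge's relative interior lies in a single pocket of $R_k$, that this pocket sits inside $U$, and that its lead edge must be one of the $m$ edges of $U$), but the underlying argument is the same.
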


\begin{proof}
  For every $k\in \mathbb{N}_0$, let $\mu_k$ denote the number of edges of $P$ on the boundary of $U_{ab}$ that are weakly covered by $R_k$.
$U_{ab}$ is bounded by a chord and $m$ edges of $P$.
One of these edges (the edge that contains $b$) is weakly covered by $V_0(s)$,
  hence $\mu_0\geq 1$.
Since $R_k\subseteq R_{k+1}$ for all $k\in \mathbb{N}_0$,
  $\mu_k$ is monotonically increasing, and every pocket of $R_k$
  that intersects $U$ is contained in $U_{ab}$.
In each pocket of $R_k$, by Proposition~\ref{pp:Us}, region $R_{k+1}$ weakly covers at least one new edge of $P$.
Consequently, we have $\mu_{k+1}\geq \min(\mu_k+1,m)$ for all $k\in \mathbb{N}_0$.
Induction on $k\in \mathbb{N}_0$ yields $\mu_k\geq \min(k+1,m)$.
\end{proof}

\subsection{Incrementally covering the pockets of $V_0(s)$}

In this subsection, we present three technical lemmas that yield upper
bounds on the minimum $k$ for which $V_k(s)$ contains the interior of
a given pocket of $V_0(s)$.
The following lemma is a direct consequence of Proposition~\ref{pp:Us+}.
It will be used for unsaturated pockets of $V_0(s)$.

\begin{lemma}\label{lem:allpockets}
If $U$ is a size-$m$ pocket of $V_0(s)$, then ${\rm int}(U)\subseteq R_{m-1}$.
\end{lemma}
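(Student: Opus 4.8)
The plan is to derive the lemma directly from Proposition~\ref{pp:Us+}, using Proposition~\ref{pp:Us} only to turn the statement ``every edge of $P$ on $\partial U$ is weakly covered'' into ``${\rm int}(U)$ is contained in $R_{m-1}$''. The first step is to count the edges of $P$ on the boundary of a size-$m$ pocket $U=U_{ab}$ of $V_0(s)$: the boundary of $U$ consists of the window $ab$ together with a sub-arc $\alpha$ of $\partial P$ running from $a$ to $b$, and $\alpha$ passes through exactly $m$ vertices of $P$, so it is covered by exactly $m$ edges of $P$ ($m-1$ full edges together with the partial edge containing $b$). Since a chord meets ${\rm int}(P)$ in its relative interior, $\partial P$ cannot cross the window, so $\partial P\cap{\rm cl}(U)=\alpha$; hence these $m$ edges are precisely the edges of $P$ whose relative interior meets ${\rm cl}(U)$.

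Next I would apply Proposition~\ref{pp:Us+} with $k=m-1$: region $R_{m-1}$ weakly covers at least $\min((m-1)+1,\,m)=m$ of the edges of $P$ on $\partial U$, hence all of them. Now suppose, for contradiction, that ${\rm int}(U)\not\subseteq R_{m-1}$. Then $R_{m-1}$ has a pocket $U'$ meeting ${\rm int}(U)$, and since the construction of the regions $R_k$ never moves points out of a pocket of $V_0(s)$ --- every pocket of $R_k$ meeting $U$ is contained in $U$, a fact already used in the proof of Proposition~\ref{pp:Us+} --- we may assume $U'\subseteq{\rm cl}(U)$. By Proposition~\ref{pp:Us}, the lead edge $e$ of $U'$ is not weakly covered by $R_{m-1}$. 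On the other hand $e$ is an edge of $P$ lying on $\partial U'\subseteq{\rm cl}(U)$, so its relative interior meets $\alpha$; as distinct edges of $P$ overlap only at vertices, $e$ is one of the $m$ edges of $P$ on $\partial U$, all of which $R_{m-1}$ weakly covers --- a contradiction. Therefore ${\rm int}(U)\subseteq R_{m-1}$.

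I expect the one mildly delicate point to be this last identification: certifying that the lead edge of a nested pocket $U'$ of $R_{m-1}$ really is one of the $m$ edges counted by Proposition~\ref{pp:Us+}, i.e., that ``edge of $P$ on $\partial U'$'' forces ``edge of $P$ on $\partial U$''. The identity $\partial P\cap{\rm cl}(U)=\alpha$ together with the fact that two edges of $P$ share at most a vertex is exactly what makes this work, but it deserves a careful line --- including the borderline case in which $e$ is the partial edge through $b$, which is weakly covered already by $R_0=V_0(s)\subseteq R_{m-1}$, so the contradiction persists in that case too.
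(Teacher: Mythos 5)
Your proof is correct and follows essentially the same route as the paper's: invoke Proposition~\ref{pp:Us+} with $k=m-1$ to conclude that $R_{m-1}$ weakly covers all $m$ edges of $P$ on $\partial U$, then rule out a remaining pocket of $R_{m-1}$ inside $U$ via Proposition~\ref{pp:Us}. The only cosmetic difference is that you use the ``is not weakly covered by $R_k$'' clause of Proposition~\ref{pp:Us} to contradict full coverage of $\partial U$ by $R_{m-1}$ directly, whereas the paper uses the ``is weakly covered by $R_{k+1}$'' clause to observe that $U\cap R_m$ would then cover $m+1$ edges; both are one-line contradictions from the same proposition.
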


\begin{proof}
By Proposition~\ref{pp:Us+}, $R_{m-1}$ weakly covers all edges of $P$ on the boundary of $U$.
Consequently, $U$ cannot contain any pocket of $R_{m-1}$ (otherwise $U\cap R_m$ would weakly
cover at least $m+1$ edges by Proposition~\ref{pp:Us}).
Thus ${\rm int}(U)\subseteq R_{m-1}$, as claimed.
\end{proof}

For saturated pockets, the diameter bound~\cite{Us} allows a better result.

\begin{lemma} \label{lem:saturated}
If $U$ is a size-$m$ saturated pocket of $R_k$,
then ${\rm int}(U)\subseteq R_{k+\floor{m/2}}$.
\end{lemma}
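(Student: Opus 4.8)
The plan is to treat the saturated pocket $U=U_{ab}$ essentially as a stand-alone polygon and apply the diffuse reflection diameter bound $D(P')\leq \floor{|P'|/2}-1$ of Barequet et al.\ to it. First I would make the following observation: because $U$ is saturated, the construction of $R_{k+1}$ sets $W_{ab}=V_0(ab)\cap U$, i.e.\ every point of $U$ visible from the entire window chord $ab$ is already covered in one step. So after one reflection off the window, light behaves inside $U$ exactly as if $ab$ were a ``fully lit'' edge. I would formalize this by considering the simple polygon $P_U$ obtained from $U$ by treating the chord $ab$ as an ordinary edge; $P_U$ has $m+1$ vertices ($m$ vertices of $P$ on $\partial U$, plus possibly the point $b$ which lies in the relative interior of an edge of $P$ — in fact $b$ is a new vertex, so $P_U$ has at most $m+1$ vertices).

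The key step is a lit-edge version of the diameter bound. From the Barequet et al.\ analysis, if a polygon $P'$ with $n'$ vertices has one edge $e$ which is ``already illuminated'' (i.e.\ we start not from a point but from the weak visibility region $V_0(e)$), then every point of $P'$ is reached within $\floor{n'/2}-1$ further reflections — and when we start from a full edge rather than a point we save roughly one step, giving a bound of the form $\floor{(n'-1)/2}-1$ or so. Concretely, I would argue that $V_0(ab)\cap U = W_{ab}$ already weakly covers the edge of $P$ containing $b$ and the window $ab$, so inside $P_U$ we have weakly covered two consecutive ``edges,'' and then iterate Proposition~\ref{pp:Us}/Proposition~\ref{pp:Us+} style reasoning: each subsequent $R$-step weakly covers at least one new edge of $P_U$, and pockets of $R$ that meet $U$ are contained in $U$. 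Counting: $U$ has $m$ edges of $P$ on its boundary; one ($b$'s edge) is weakly covered by $R_k$ itself, and the window $ab$ is ``covered'' for free by saturation; so in $R_{k+1}$ we effectively have a head start. The quantity $\floor{m/2}$ then falls out because the diameter-type argument pairs up edges two at a time — roughly, $\lceil (m-1)/2\rceil$ or $\floor{m/2}$ additional steps suffice to cover all $m$ edges of $\partial U$, after which the no-extra-pocket argument of Lemma~\ref{lem:allpockets} (combined with Proposition~\ref{pp:Us}) forces ${\rm int}(U)\subseteq R_{k+\floor{m/2}}$.

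More carefully, I would set this up as an induction mirroring the proof of $D(P)\leq\floor{n/2}-1$ in \cite{Us}, but localized to $U$. Base cases: if $m\leq 2$ then $\floor{m/2}\leq 1$ and one checks directly that $W_{ab}=V_0(ab)\cap U$ already covers ${\rm int}(U)$ (a pocket bounded by a chord and at most two edges is weakly visible from its window). Inductive step: if ${\rm int}(U)\not\subseteq R_{k+1}$, then $R_{k+1}$ has pockets inside $U$; each such pocket $U'$ is saturated or unsaturated. Here the delicacy flagged in Remark~\ref{remark:1} bites: a pocket of $R_{k+1}$ inside $U$ need not be saturated even if it was ``inherited'' from a saturated one. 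But crucially each such pocket $U'$ has size at most $m - (\text{edges of }\partial U\text{ already weakly covered})$, and by Proposition~\ref{pp:Us+} applied within $U$, after the first step at least two edges of $\partial U$ are weakly covered (the $b$-edge and the lead edge of the covered region), so $|U'|\leq m-2$. Applying the (appropriate, possibly unsaturated) covering lemma — Lemma~\ref{lem:allpockets} gives ${\rm int}(U')\subseteq R_{(k+1)+(|U'|-1)}$, which is too weak; so instead I would prove and use a strengthened statement for all pockets arising inside a saturated pocket, by induction on size, that a size-$m'$ pocket appearing at level $j$ inside an originally-saturated pocket satisfies ${\rm int}(U')\subseteq R_{j+\floor{m'/2}}$. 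Then $j+\floor{m'/2}\leq (k+1)+\floor{(m-2)/2} = k+\floor{m/2}$, closing the induction.

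The main obstacle, as the above makes clear, is handling the ``saturation is not inherited'' subtlety of Remark~\ref{remark:1}: I cannot simply recurse on ``saturated pocket of smaller size,'' because the sub-pockets that appear need not be saturated, and the generic unsaturated bound (Lemma~\ref{lem:allpockets}, losing one step per edge) is a factor of two too weak. The fix is to strengthen the induction hypothesis to cover \emph{all} pockets nested inside an originally saturated pocket regardless of their own saturation status, exploiting that two edges of $\partial U$ get weakly covered in the very first step so that every nested pocket has size at most $m-2$, which is exactly the slack needed to absorb the extra $+1$ level into the $\floor{\cdot/2}$. A secondary technical point is verifying that $b$ counts correctly as a vertex of the auxiliary polygon $P_U$ (so that ``size $m$'' of the pocket and ``$m$ edges of $P$ on $\partial U$'' are consistent with the edge count used in the diameter bound); this is bookkeeping, handled once at the start.
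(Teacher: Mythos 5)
You have the right high-level idea---reduce the problem inside $U$ to the diameter bound $D(P') \leq \floor{|P'|/2} - 1$ for an auxiliary polygon---but you miss the key construction that makes this work cleanly, and your fallback argument has a gap.

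The paper does \emph{not} close $U$ with the chord $ab$ to get an $(m+1)$-gon $P_U$, nor does it invoke any ``lit-edge'' variant of the diameter bound (no such variant is proved or cited anywhere in the paper or in~\cite{Us}, so you would need to establish it from scratch). Instead, it attaches a tiny ``cap'' to $U$ on the far side of $ab$: the polygon $P'$ is bounded by $\partial P\cap U$ together with the polygonal path $(a,x,y,b)$, where $x$ is near $a$ and $y$ is on the line of the edge of $P$ containing $b$ (so $b$ is not a corner of $P'$). This $P'$ has exactly $m+2$ vertices: the $m$ vertices of $P$ on $\partial U$ plus $x$ and $y$. Placing the point source $s'$ on line $ab$ with $a$ in the relative interior of $s'b$ makes $U$ a pocket of $V_0(s')$ in $P'$, and the \emph{ordinary} diameter bound gives $D(P') \leq \floor{(m+2)/2}-1 = \floor{m/2}$. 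Saturation is then used exactly once and globally: any chord of $P'$ through $ab$ corresponds to a chord of $P$ through $ab$, which by saturation extends to an $s$-to-$U$ diffuse path with at most $k$ interior vertices, so a $\floor{m/2}$-reflection path from $s'$ in $P'$ becomes a $(k+\floor{m/2})$-reflection path from $s$ in $P$. No induction inside $U$, no tracking of sub-pocket saturation, no Remark~1 subtleties---the black-box diameter bound handles all of that internally.

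Your fallback induction has a concrete unverified step. You claim ``every nested pocket has size at most $m-2$'' after the first step. But a pocket of $R_{k+1}$ inside $U$ can be incident to every vertex of $\partial U$ except $a$ itself: weakly covering the lead edge only puts a point of its relative interior into $R_{k+1}$, and the lead edge's far endpoint may still lie on the pocket. So the nested pocket can have size $m-1$, not $m-2$. With a size-$(m-1)$ sub-pocket the recursion gives $(k+1)+\floor{(m-1)/2}$, which for odd $m$ equals $k+\ceil{m/2} = k+\floor{m/2}+1$, losing one step. Your strengthened inductive statement might be repairable by a closer look at which sub-pocket shapes can actually occur, but as written it does not close. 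The paper's construction of $P'$ avoids the need for any of this bookkeeping.
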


\begin{proof}
Let $ab$ be the window of $U$.
Since $a$ is a reflex vertex of $P$, it is a convex vertex of the pocket $U$.
Refer to Fig.~\ref{fig:pprime}. Since $U_{ab}$ is saturated, every chord that crosses $ab$
is part of a diffuse reflection path that starts at $s$
and enters the interior of $U_{ab}$ after at most $k$ reflections.

\begin{figure}[htp]
  \centering
  \includegraphics[width=.8\textwidth]{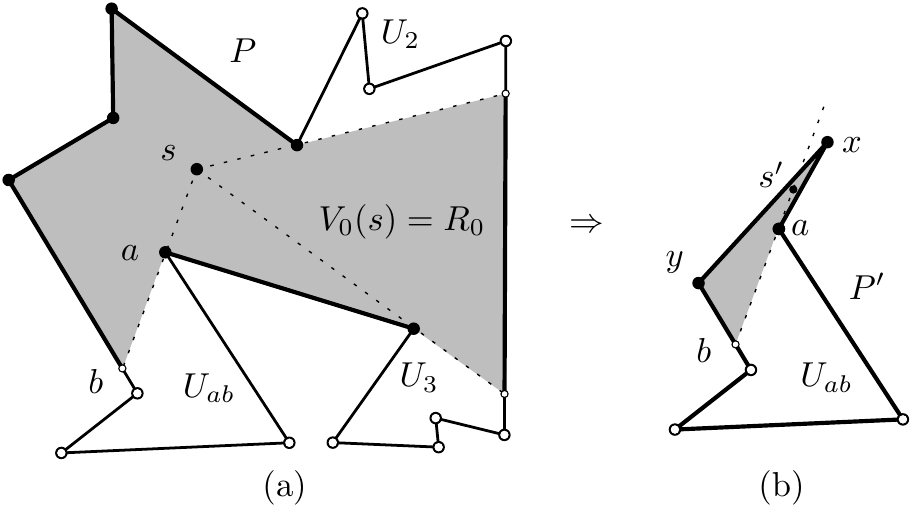}
  \caption{\label{fig:pprime}
(a) A polygon $P$ from Fig.~\ref{fig:pockets} with saturated pocket $U_{ab}$.
(b) Polygon $P'$ for the pocket $U_1$.}
\end{figure}

We construct a polygon $P'$ with $m+2$ vertices and a point $s'\in {\rm int}(P')$ such that $U$ is a pocket of $V_0(s')$ in $P'$,
and every chord of $P'$ that crosses $ab$ is part of a diffuse reflection path that starts at $s$ and enters the interior of $U$ after one reflection in $P'$.
The polygon $P'$ is bounded by the common boundary $\partial P\cap U$ and a polygonal path $(a,x,y,b)$, where $x$ is in a small neighborhood of $a$ such that $x$ and $U$ lie on the same side of line $ab$, and $y$ lies on the edge of $P$ that contains $b$ in the exterior of $U$. Place $s'\in {\rm int}(P')$ on the line $ab$ such that $a$ is in the relative interior of $s'b$.

Polygon $P'$ has $m+2$ vertices (since $b$ lies in the interior of an edge of $P$). The diffuse reflection diameter of a polygon with $m+2$ vertices is $\floor{(m+2)/2}-1 =\floor{m/2}$ from~\cite{Us}. Consequently, every point $t\in {\rm int}(U)$ can be reached from $s'$ after at most $\floor{m/2}$ diffuse reflections in $P'$.
Since a reflection path from $s'$ to any point $t\in {\rm int}(U)$ in $P'$ corresponds to an $s$-to-$t$ reflection path in the original polygon $P$
with at most $k$ more reflections, every $t\in {\rm int}(U)$ can be reached from $s$ after at most $k+\floor{m/2}$ diffuse reflections in $P$.
\end{proof}

Lemmas~\ref{lem:allpockets} and~\ref{lem:saturated} yield the following for dependent pockets of~$V_0(s)$.

\begin{lemma}\label{lem:dependent}
Let $U$ be a pocket of $V_0(s)$ of size $m$.
If each pocket dependent on $U$ has size at most $m'< m$,
  then ${\rm int}(U)\subseteq R_{\floor{(m+m')/2}}$.
\end{lemma}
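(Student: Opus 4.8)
The plan is to reduce the dependent-pocket case to the two already-proved lemmas by a case analysis on whether the pocket $U$ is saturated in $R_0=V_0(s)$. Since $s$ is in general position and $U$ has a dependent pocket, $U$ is not independent, hence (by Remark~\ref{remark:1}, which says at level $k=0$ independent is equivalent to saturated) $U$ is unsaturated. So we cannot apply Lemma~\ref{lem:saturated} to $U$ directly at level $0$; instead we first let the construction run until all pockets dependent on $U$ have been swallowed, at which point $U$ reappears as a saturated pocket of some later region and Lemma~\ref{lem:saturated} applies.

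Concretely, here is the sequence of steps I would carry out. First, I would observe that any pocket $U'$ dependent on $U$ has size $m'<m$, so by Lemma~\ref{lem:allpockets} we have ${\rm int}(U')\subseteq R_{m'-1}\subseteq R_{m-2}$; that is, after $m-2$ reflections the interior of every pocket dependent on $U$ is covered. Second, I would argue that once the interiors of all pockets dependent on $U$ are covered, no chord of $P$ crossing the window $ab$ of $U$ can have its non-$U$ endpoint in an uncovered region: such a chord would have to leave $R_0$ through the window of some pocket $U'$, and that pocket is dependent on $U$ precisely because a chord crosses both windows. Hence $U$ (or rather the sub-pocket of $R_{m-2}$ with lead edge the lead edge of $U$) is a \emph{saturated} pocket of $R_{m-2}$. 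Here one must be slightly careful — as Remark~\ref{remark:1} warns, an independent pocket need not be saturated at later levels because the white boundary lines are excluded — but since every chord across $ab$ now terminates in covered territory (by openness one can nudge the endpoint into ${\rm int}(R_{m-2})$), saturation does hold. Third, I would note the pocket's size is still at most $m$ (pockets only shrink), so applying Lemma~\ref{lem:saturated} with $k=m-2$ gives ${\rm int}(U)\subseteq R_{(m-2)+\floor{m/2}}$. Finally I would compare $(m-2)+\floor{m/2}$ with the claimed bound $\floor{(m+m')/2}$: since $m'\le m-1$, we have $\floor{(m+m')/2}\le\floor{(2m-1)/2}=m-1$, whereas $(m-2)+\floor{m/2}\ge m-2+\lceil m/2\rceil - $ \dots this is \emph{larger} than $m-1$ for $m\ge 4$, so a naive application is not good enough.

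This tells me the real argument must be sharper: instead of waiting the full $m-2$ steps, I should wait only $\floor{m'/2}$-ish steps by applying Lemma~\ref{lem:saturated} to the dependent pockets too (they are independent of each other by Proposition~\ref{prop:independent}, but may themselves be unsaturated). A cleaner route: let $j$ be the first index at which all pockets dependent on $U$ have their interiors covered. By Lemma~\ref{lem:allpockets} applied to the largest dependent pocket, $j\le m'-1$. But in fact, running the $R_k$ construction \emph{inside each dependent pocket} is exactly the situation of Lemma~\ref{lem:saturated} one level up — a dependent pocket $U'$ of size $m'$, once its own dependent pockets are gone, becomes saturated, etc. The combinatorially tight statement is that covering a size-$m'$ pocket takes $\floor{m'/2}$ reflections once it is saturated, and the chord-crossing structure means $U$ becomes saturated as soon as the dependent pockets are \emph{saturated-covered}. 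So I expect $j=\floor{m'/2}$ (using the sizes are $\le m'$ and the $\lfloor\cdot/2\rfloor$ diameter bound), and then Lemma~\ref{lem:saturated} on $U$ at level $j$ gives ${\rm int}(U)\subseteq R_{\floor{m'/2}+\floor{m/2}}=R_{\floor{(m+m')/2}}$ — the last equality holding because $m$ and $m'$ need not have the same parity, so one should instead carry $\floor{m'/2}+\floor{m/2}\le\floor{(m+m')/2}$, which is always true.

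The main obstacle, then, is the saturation bookkeeping: pinning down exactly why $U$ is saturated in $R_{\floor{m'/2}}$ (not merely independent), and why $\floor{m'/2}$ reflections suffice to cover the interiors of \emph{all} dependent pockets simultaneously — this needs that the dependent pockets of $U$ are pairwise independent (Proposition~\ref{prop:independent}, since they are all on one side of $ab$ — wait, they need not be; a left and a right pocket can both be dependent on $U$, so one must handle left-dependent and right-dependent pockets of $U$ separately, each family being pairwise independent). Handling those two families and verifying the window of $U$ is not reopened by either family is the delicate part; the arithmetic $\floor{m'/2}+\floor{m/2}\le\floor{(m+m')/2}$ and the reduction to Lemma~\ref{lem:saturated} are then routine.
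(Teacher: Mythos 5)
Your proposal identifies the right ingredients (Lemma~\ref{lem:allpockets}, Lemma~\ref{lem:saturated}, the saturation issue from Remark~\ref{remark:1}) but leaves two genuine gaps that the paper's proof is specifically engineered to close.

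First, the step ``$j=\lfloor m'/2\rfloor$'' is unjustified, and your suggested justification is circular. You want to argue that the pockets dependent on $U$ are covered within $\lfloor m'/2\rfloor$ reflections by applying Lemma~\ref{lem:saturated} to them once ``their own dependent pockets are gone.'' But each dependent pocket $U'$ of $U$ has \emph{$U$ itself} among its dependent pockets (dependence is symmetric), and $U$ is the largest and last to be covered, so $U'$ never becomes saturated ``first.'' All Lemma~\ref{lem:allpockets} gives unconditionally is $j\le m'-1$, and you already observe that plugging $j=m'-1$ together with the crude bound $\lfloor m/2\rfloor$ into Lemma~\ref{lem:saturated} overshoots $\lfloor(m+m')/2\rfloor$. (The arithmetic can be rescued if one also uses the \emph{shrunk} size of the residual sub-pocket of $R_{m'-1}$ inside $U$, but then one needs a bound like ``$R_{m'-1}$ weakly covers $m'$ edges of $U$, leaving a single sub-pocket of size at most $m-m'+1$,'' which is itself a nontrivial accounting claim that needs the pp:Us+-style counting the paper carries out explicitly.)

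Second, and more importantly, your saturation claim for the residual sub-pocket in $U$ is exactly the trap that Remark~\ref{remark:1} warns about, and the parenthetical ``by openness one can nudge the endpoint into ${\rm int}(R_{m-2})$'' does not work: the chord endpoint lies on $\partial P$, and the white boundary segments that $R_k$ omits are also on $\partial P$, so there is nowhere to nudge to. An independent sub-pocket of $R_k$ can genuinely be unsaturated (Fig.~\ref{fig:remark}). The paper's proof confronts this head-on: it tracks $\mu_k$, the number of edges of $U$ weakly covered by $R_k$, notes $\mu_{m'}\ge m'+1$, and splits on whether $\mu$ grows by at least two per subsequent step. If it does, the arithmetic $m'+\lceil(m-m'-1)/2\rceil=\lfloor(m+m')/2\rfloor$ closes the proof with no saturation claim needed; if at some step $\mu$ grows by only one, the paper shows the current sub-pocket is unsaturated and gets covered by a triangle $W_{ab}$, whence the \emph{next} sub-pocket has a window collinear with the edge through $b$ and \emph{is} saturated, at which point Lemma~\ref{lem:saturated} applies with the residual size. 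Your sketch skips the triangle argument that makes the saturation certification rigorous, so as written the proof does not go through.
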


\begin{proof}
For every $k\in \mathbb{N}_0$, let $\mu_k$ denote the number of edges of $P$ on the boundary of $U$ that are weakly covered by $R_k$. We have $\mu_0=1$, and if $\mu_k=m$, then ${\rm int}(U)\subseteq R_k$.
By Proposition~\ref{pp:Us+}, $\mu_{m'}\geq m'+1$  (i.e., at most $m-m'-1$ more edges
have to be weakly covered).

By Lemma~\ref{lem:allpockets}, $R_{m'-1}$ contains the interior of all pockets of $V_0(s)$ that depend on $U$. Consequently, if $R_{m'-1}$ has only one pocket inside $U$, it must be independent (but not necessarily saturated, cf. Remark~\ref{remark:1}). By definition, ${\rm cl}(R_{m'-1})\subseteq R_{m'}$, and so $R_{m'}$ also contains the boundaries of all pockets of $V_0(s)$ that depend on $U$. Consequently, if $R_{m'}$ has exactly one pocket inside $U$, it must be saturated.

We distinguish between two possibilities. First assume $\mu_{m'+k}\geq \min(m'+2k+1,m)$ for all $k\geq 1$ (that is, at least two more edges in $U$ get weakly covered until all edges in $U$ are exhausted). Then ${\rm int}(U)\subseteq R_{m'+\ceil{(m-m'-1)/2}} = R_{\floor{(m+m')/2}}$.

Otherwise, let $k\geq 1$ be the first index such that $\mu_{m'+k}=m'+2k<m$. Since $\mu_{m'+k-1}\geq m'+2(k-1)+1=m'+2k-1$ by assumption and $\mu_{m'+k} \geq  \mu_{m'+k-1}+1$ by Proposition~\ref{pp:Us}, we have $\mu_{m'+k} = \mu_{m'+k-1}+1$. This means that $R_{m'+k-1}$ has exactly one pocket in $U$, say $U_{ab}\subset U$, and $R_{m'+k+1}$ weakly covers only one new edge of $U_{ab}$ (e.g., pocket $U_5\subset U_1$ in Fig.~\ref{fig:remark}). This is possible only if $U_{ab}$ is unsaturated. Then the region $R_{m'+k}$ is extended by $W_{ab}=V_0(p_{ab})$ for a point $p_{ab}$ close to $b$. Since $W_{ab}$ weakly covers only one new edge, the lead edge of $U_{ab}$, which incident to $a$. Therefore, $W_{ab}$ is a triangle bounded by $ab$, the lead edge of $U_{ab}$, and the edge the contains $b$. It follows that $R_{m'+k}$ also has exactly one pocket in $U$, say $U_{a'b'}$, where the window $a'b'$ is collinear with the edge of $P$ that contains $b$. Hence the pocket $U_{a'b'}$ is \emph{saturated}:
of every chord that crosses $a'b'$, one endpoint is either in $W_{ab}\subset R_{m'+k+1}$ or in ${\rm cl}(R_{m'+k})\subset R_{m'+k+1}$. By Lemma~\ref{lem:saturated}, the interior of this pocket is contained in $R_{m'+k+ \ceil{(m-m'-2k-1)/2}} =R_{\floor{(m+m')/2}}$, as claimed.
\end{proof}

\subsection{Proof of Lemma~\ref{lem:condition}}
We prove a slightly more general statement than Lemma~\ref{lem:condition}.

\begin{lemma}\label{lem:condition+}
We have ${\rm int}(P)\subseteq V_k(s)$ if the pockets of $V_0(s)$ satisfy these conditions:
\begin{enumerate}\itemsep -2pt
    \item every pocket has size at most $2k+1$; and
    \item the sum of the sizes of any two dependent pockets is at most $2k+1$.
\end{enumerate}
\end{lemma}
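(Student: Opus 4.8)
The plan is to reduce everything to a per-pocket argument and then feed it to the three covering lemmas of the previous subsection. Concretely, I will show that under the two hypotheses every pocket $U$ of $V_0(s)$, say of size $m$ (so $m\le 2k+1$ by hypothesis~1), satisfies ${\rm int}(U)\subseteq R_k$. This suffices: recall that $R_k\subseteq V_k(s)$; moreover, if $V_0(s)$ has any pocket at all then $k\ge 1$, because a pocket of $V_0(s)$ always has size at least $2$ (since $s$ lies on no line through two vertices of $P$, the reflex vertex $a$ is not an endpoint of the edge of $P$ containing the window endpoint $b$, so that edge puts a second vertex of $P$ on the pocket boundary), and then ${\rm cl}(R_0)=V_0(s)\subseteq R_1\subseteq R_k$; since ${\rm int}(P)$ is covered by ${\rm int}(V_0(s))$, the relative interiors of the windows, and the pocket interiors, we obtain ${\rm int}(P)\subseteq R_k\subseteq V_k(s)$. (If $V_0(s)$ has no pocket the claim is immediate.) So the whole proof comes down to covering one pocket $U$.

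Case~1: $U$ is independent of all other pockets. By Remark~\ref{remark:1} --- this is exactly where general position of $s$ enters --- the pocket $U$ is \emph{saturated} as a pocket of $R_0=V_0(s)$. Applying Lemma~\ref{lem:saturated} with $k=0$ gives ${\rm int}(U)\subseteq R_{\floor{m/2}}$, and $\floor{m/2}\le\floor{(2k+1)/2}=k$, hence ${\rm int}(U)\subseteq R_k$.

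Case~2: $U$ is dependent on some other pocket. Let $m'$ be the maximum size of a pocket dependent on $U$; the set of such pockets is nonempty, and by Proposition~\ref{prop:independent} it lies entirely on the side of the window $ab$ opposite to $U$. Hypothesis~2, applied to $U$ and a largest pocket dependent on it, gives $m+m'\le 2k+1$. If $m'<m$, then every pocket dependent on $U$ has size at most $m'<m$, so Lemma~\ref{lem:dependent} applies and yields ${\rm int}(U)\subseteq R_{\floor{(m+m')/2}}\subseteq R_{\floor{(2k+1)/2}}=R_k$. If instead $m'\ge m$, then $2m\le m+m'\le 2k+1$; since $2m$ is even and $2k+1$ odd this forces $m\le k$, and Lemma~\ref{lem:allpockets} gives ${\rm int}(U)\subseteq R_{m-1}\subseteq R_{k-1}\subseteq R_k$. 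In every case ${\rm int}(U)\subseteq R_k$, which completes the proof of Lemma~\ref{lem:condition+}.

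The step I expect to require the most care is Case~2: Lemma~\ref{lem:dependent} is asymmetric (it needs the dependent pockets strictly smaller than $U$), so it cannot be invoked when a dependent pair has nearly equal sizes; the saving point is that hypothesis~2 then pins each of the two sizes down to at most $k$, which is precisely the regime in which the coarser Lemma~\ref{lem:allpockets} already suffices. Everything else is bookkeeping: checking that a pocket has size at least $2$ (to dispose of $k=0$), that window interiors lie in $R_k$ for $k\ge 1$ via ${\rm cl}(R_0)\subseteq R_1$, and finally that Lemma~\ref{lem:condition} follows from Lemma~\ref{lem:condition+} by substituting $k=\floor{(n-2)/4}$ and verifying $\floor{n/2}-1\le 2\floor{(n-2)/4}+1$.
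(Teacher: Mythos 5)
Your proof is correct and follows essentially the same strategy as the paper's: cover each pocket $U$ of $V_0(s)$ by $R_k$ using Lemma~\ref{lem:saturated} (via Remark~\ref{remark:1}) when $U$ is independent, and using Lemma~\ref{lem:dependent} or Lemma~\ref{lem:allpockets} when $U$ is dependent. The only cosmetic difference is the organization of cases: the paper first disposes of all pockets of size at most $k+1$ via Lemma~\ref{lem:allpockets}, so that a remaining dependent pocket with $m\geq k+2$ automatically has every partner of size $m'\leq 2k+1-m\leq k-1<m$; you instead split the dependent case on whether the largest partner size $m'$ is smaller than $m$ or not, falling back to Lemma~\ref{lem:allpockets} when $m'\geq m$ (forcing $m\leq k$). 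Both give the same bound, and your extra remarks on window points and why $k\geq 1$ when a pocket exists are correct but not strictly necessary for the conclusion.
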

\begin{proof}
Consider the pockets of $V_0(s)$. By Lemma~\ref{lem:allpockets},
the interior of every pocket of size at most $k+1$ is contained in $R_k$.
It remains to consider the pockets $U$ of size $m$ for $k+2\leq m\leq 2k+1$.
We distinguish between two cases.

\noindent{\bf Case~1: a pocket $U$ of size $m$ is independent of all
other pockets of $V_0(s)$.} Then $U$ is saturated (cf. Remark~\ref{remark:1}).
By Lemma~\ref{lem:saturated}, the interior of $U$ is
contained in $R_{\floor{m/2}}\subseteq R_k\subseteq V_k(s)$.

\noindent{\bf Case~2: a pocket $U$ of size $m$ is dependent on some other
pockets of $V_0(s)$.}
Any other pocket dependent on $U$ has size at most
  $m'=2k+1-m \leq k-1 < m$ by our assumption.
Lemma~\ref{lem:dependent} implies that
  the interior of $U$ is contained in $R_{\floor{(m+(2k+1-m))/2}}=R_k\subseteq V_k(s)$.
\end{proof}

\begin{proof}[of Lemma~\ref{lem:condition}]
Invoke Lemma~\ref{lem:condition+} with $k=\floor{(n-2)/4}$, and note that $2k+1=2\floor{(n-2)/4}+1\geq \floor{n/2}-1$.
\end{proof}

\subsection{Double Violators}
\label{ssec:double}

Recall that the sum of sizes of the pockets of $V_0(s)$ is at most $n$, the number of vertices of $P$.
It is, therefore, possible that several pockets or dependent pairs of pockets violate
conditions $\mathbf{C}_1$ or $\mathbf{C}_2$ in Lemma~\ref{lem:condition}.
We say that a point $s\in {\rm int}(P)$ is a
\emph{double violator} if $V_0(s)$ has either (i) two disjoint pairs of dependent pockets, each
pair with total size at least $\floor{n/2}$, or (ii) a pair of dependent pockets of total size at
least $\floor{n/2}$ and an independent pocket of size at least $\floor{n/2}$. (We do not worry
about the possibility of two independent pockets, each of size at least $\floor{n/2}$.)
In this section, we show that if there is a double violator
$s\in {\rm int}(P)$, then there is a point $s'\in {\rm int}(P)$ (possibly $s'=s$) for which
${\rm int}(P)\subseteq V_{\floor{(n-2)/4}}(s')$, and such an $s'$ can be found in $O(n)$ time.

The key technical tool is the following variant of Lemma~\ref{lem:dependent} for
a pair of dependent pockets that are adjacent to a common edge (that is, \emph{share} an edge).

\begin{lemma}\label{lem:double}
Let $U_{ab}$ and $U_{a'b'}$ be two dependent pockets of $V_0(s)$ such that neither is dependent on any other pocket, and points $b$ and $b'$ lie in the same edge of $P$. Let the size of $U_{ab}$ be $m$ and $U_{a'b'}$ be $m'$.
Then $R_{\floor{(m+m'-1)/2}}$ contains the interior of both $U_{ab}$ and $U_{a'b'}$.
\end{lemma}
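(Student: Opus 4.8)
The plan is to mimic the two-case analysis of Lemma~\ref{lem:dependent}, but exploit the extra structure provided by the hypothesis that $U_{ab}$ and $U_{a'b'}$ share an edge $e$ of $P$ (the edge containing both $b$ and $b'$) and that neither depends on any pocket outside the pair. First I would set up the bookkeeping exactly as before: let $\mu_k$ (resp.\ $\mu_k'$) be the number of edges of $P$ on $\partial U_{ab}$ (resp.\ $\partial U_{a'b'}$) weakly covered by $R_k$, so $\mu_0=\mu_0'=1$ since $e$ is weakly covered by $V_0(s)$. By Proposition~\ref{pp:Us+}, $\mu_k\ge\min(k+1,m)$ and $\mu_k'\ge\min(k+1,m')$. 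Without loss of generality assume $m'\le m$. By Lemma~\ref{lem:allpockets}, ${\rm int}(U_{a'b'})\subseteq R_{m'-1}$, and by taking closures, ${\rm cl}(R_{m'-1})\subseteq R_{m'}$ contains the boundary of $U_{a'b'}$ as well; in particular, once $k\ge m'$, the only pocket of $R_k$ meeting $U_{ab}\cup U_{a'b'}$ lies inside $U_{ab}$, and since $U_{ab}$ no longer has any dependent partner weakly uncovered, that remaining pocket is \emph{saturated}.

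The crux is what we gain by the shared edge. After $U_{a'b'}$ is fully covered (at step $m'-1$, with its boundary absorbed by step $m'$), the remaining pocket $U_{a''b''}\subseteq U_{ab}$ of $R_{m'}$ has its window $a''b''$ resting against the edge $e$, so a chord crossing it can reach the part of $R_{m'}$ lying in (the now-covered) $U_{a'b'}$-side — this is precisely why the pocket is saturated and why we save: we do not pay the extra ``$+1$'' that the unsaturated-pocket case in Lemma~\ref{lem:dependent} costs. Concretely, at step $m'$ we have $\mu_{m'}\ge m'+1$ by Proposition~\ref{pp:Us+} (so at most $m-m'-1$ edges of $U_{ab}$ remain), and the remaining pocket is saturated of size $m-m'$ (counting the edge containing $b''$ plus the $m-m'-1$ uncovered edges). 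Applying Lemma~\ref{lem:saturated} to this saturated pocket gives ${\rm int}(U_{ab})\subseteq R_{m'+\floor{(m-m')/2}}$, and a short calculation $m'+\floor{(m-m')/2}=\floor{(m+m')/2}$ — but we actually want the sharper $\floor{(m+m'-1)/2}$, which requires being a touch more careful about exactly when the saturated pocket appears and its precise size.

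To get the stated $\floor{(m+m'-1)/2}$ bound, I would track the two pockets \emph{simultaneously} rather than sequentially: each application of Proposition~\ref{pp:Us} advances the edge count in \emph{each} surviving pocket by at least one, and since the two pockets abut along $e$, at the moment one of them vanishes the other inherits a window flush with $e$ — hence is saturated one step earlier than a generic dependent pocket would be. The parity saving of one comes from the fact that the single shared edge $e$ was double-counted as ``covered'' in $\mu_0+\mu_0'=2$ at step $0$ but corresponds to only one edge of $P$; propagating this through the $\min$-recursions, the total number of covered-edge-increments needed drops by one compared to treating the pockets as independent chains. I would formalize this with an induction on $k$ on the pair $(\mu_k,\mu_k')$, split into the case where every step covers $\ge 2$ new edges across the union (giving the bound directly) and the case of a step covering only one, which as in Lemma~\ref{lem:dependent} forces an unsaturated pocket whose successor pocket has its window collinear with $e$ and is therefore saturated, at which point Lemma~\ref{lem:saturated} closes the argument.

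\textbf{Main obstacle.} The delicate point is the exact accounting that yields $\floor{(m+m'-1)/2}$ rather than the weaker $\floor{(m+m')/2}$: one must verify that the saturated sub-pocket of $U_{ab}$ appears with the right size and at the right step, and in particular that the ``shared edge'' really does let a chord escape into the covered region on the other side. Handling the interaction of the floors/ceilings with the two possible parities of $m+m'$, together with the sub-case where a one-new-edge step occurs (as in the last paragraph of the proof of Lemma~\ref{lem:dependent}), is where the real work lies; everything else is a direct transcription of the machinery already developed.
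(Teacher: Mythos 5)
Your bookkeeping setup ($\mu_k,\mu_k'$, both start at $1$ because the shared edge is weakly covered) matches the paper, but the argument that is supposed to produce the $-1$ saving is not there, and you say so yourself in the ``main obstacle'' paragraph. Let me be concrete about why both routes you sketch fall short and what the paper actually does.

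Your first (sequential) route is essentially a re-run of Lemma~\ref{lem:dependent}: exhaust $U_{a'b'}$ by step $m'-1$, absorb its boundary by step $m'$, then treat the surviving sub-pocket of $U_{ab}$ as (eventually) saturated and invoke Lemma~\ref{lem:saturated}. You correctly compute that this only yields $R_{\lfloor (m+m')/2\rfloor}$, which is one too weak. Moreover the intermediate claim that ``the remaining pocket $U_{a''b''}$ of $R_{m'}$ has its window $a''b''$ resting against the edge $e$'' is not justified: after $m'$ rounds of growth there is no reason the window of the surviving sub-pocket is flush with, or even near, the shared edge $e$. (Compare Remark~\ref{remark:1} and the last paragraph of the proof of Lemma~\ref{lem:dependent}: there the saturated pocket's window becomes collinear with an edge only because of a very specific one-new-edge degeneracy, and that produces the weaker $\lfloor (m+m')/2\rfloor$, not $\lfloor (m+m'-1)/2\rfloor$.) Your second route (track $(\mu_k,\mu_k')$ simultaneously and attribute the $-1$ to a ``double-counting of $e$'') is an intuition, not an argument; the per-pocket counters $\mu_k,\mu_k'$ do not actually double-count anything, and the induction you outline still bottoms out in the same place: you need a geometric reason a sub-pocket becomes saturated one round earlier than Lemma~\ref{lem:dependent} would give.

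The paper's proof gets the saving from a single, early case split at $k=1$. By Proposition~\ref{pp:Us}, $\mu_1+\mu_1'\ge 4$. If $\mu_1+\mu_1'\ge 5$, the extra covered edge at step $1$ propagates through the recursion of Lemma~\ref{lem:dependent} and shaves the index by one, yielding $\lfloor (m+m'-1)/2\rfloor$. The interesting case is $\mu_1+\mu_1'=4$: then $R_1$ adds precisely one edge to each of $U_{ab}$ and $U_{a'b'}$, and since the growth came from $V_0(p_{ab})$ and $V_0(p_{a'b'})$ with $p_{ab}$, $p_{a'b'}$ near $b$, $b'$ on the \emph{same} edge $e$, the two surviving sub-pockets lie on the same side of the line $bb'$ and are therefore \emph{already saturated at step $1$}, with sizes $m-1$ and $m'-1$. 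Lemma~\ref{lem:saturated} then finishes with index $1+\lfloor \max(m-1,m'-1)/2\rfloor=\lfloor (\max(m,m')+1)/2\rfloor\le\lfloor (m+m'-1)/2\rfloor$ (using $\min(m,m')\ge 2$). That observation --- saturation is detected at $k=1$, not at $k\approx m'$, because of where $b$ and $b'$ sit on the shared edge --- is the missing idea; without it, neither of your two routes closes the gap between $\lfloor (m+m')/2\rfloor$ and $\lfloor (m+m'-1)/2\rfloor$.
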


\begin{proof}
For every $k\in \mathbb{N}_0$, let $\mu_k$ (resp., $\mu_k'$) denote the number of edges of $P$ on the boundary of $U_{ab}$ (resp., $U_{a'b'}$) that are weakly covered by $R_k$. We have $\mu_0=1$ and $\mu_0'=1$ (the edge containing $b$ and $b'$ is weakly covered by $V_0(s)$). Proposition~\ref{pp:Us} guarantees $\mu_1+\mu_1'\geq 4$. If $\mu_1+\mu_1'\geq 5$, then the proof of Lemma~\ref{lem:dependent} readily implies that $R_{\floor{(m+m'-1)/2}}$ contains the interior of both $U_{ab}$ and $U_{a'b'}$.

Assume now that $\mu_1+\mu_1'=4$. This means that $R_1$ weakly covers precisely one new edge from each of $U_{ab}$ and $U_{a'b'}$. Recall that $U_{ab}$ and $U_{a'b'}$ are unsaturated, and $R_1$ covers the part of $U_{ab}$ (resp., $U_{a'b'}$) visible from a point near $b$ (resp., $b'$). See Fig.~\ref{fig:double1}(a). It follows that $R_1$
has exactly one pocket in each of $U_{ab}$ and $U_{a'b'}$, and both pockets are on the same
side of the line $bb'$. Hence these pockets are saturated. They have size $m-1$ and $m'-1$, respectively.
By Lemma~\ref{lem:saturated}, the interiors of both $U_{ab}$ and $U_{a'b'}$ are covered by $R_{1+\floor{\max(m-1,m'-1)/2}}= R_{\floor{(\max(m,m')+1)/2}}\subseteq R_{\floor{(m+m'-1)/2}}$.
\end{proof}

\begin{figure}[htp]
  \centering
  \includegraphics[width=.95\textwidth]{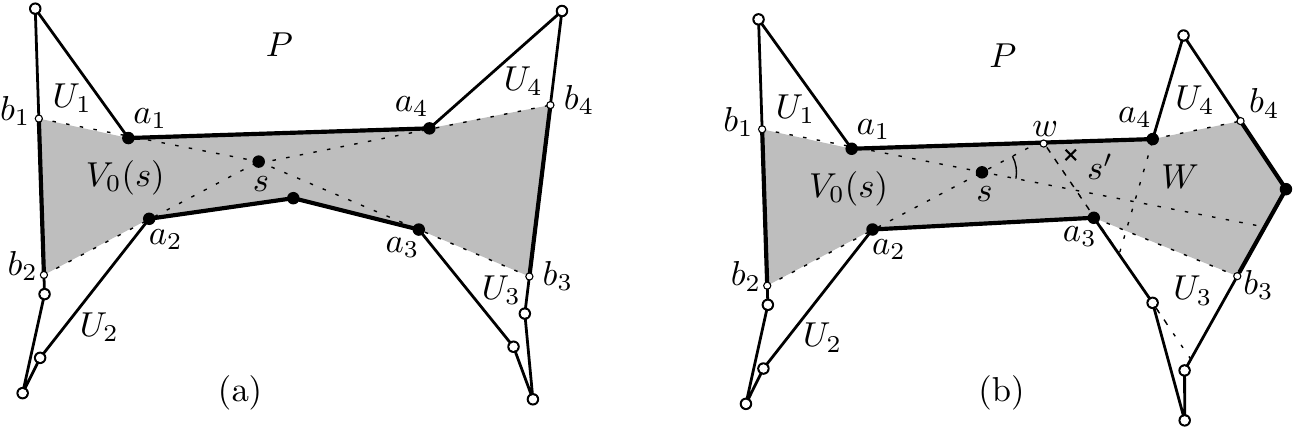}
  \caption{\label{fig:double1}
A polygon $P$ with $n=13$ vertices where $V_0(s)$ has four pockets: two pairs of dependent pockets, the
sum of sizes of each pair is $\floor{n/2} =6$.
(a) One extra vertex lies on $\partial P$ between two independent pockets.
(b) One extra vertex lies on $\partial P$ between two dependent pockets.}
\end{figure}

\begin{lemma}\label{lem:violate1}
Suppose that $V_0(s)$ has two disjoint pairs of dependent pockets, each
pair of total size at least $\floor{n/2}$. Then there is a point $s'\in {\rm int}(P)$ such that
${\rm int}(P)\subseteq V_{\floor{(n-2)/4}}(s')$, and such a point $s'$ can be computed in $O(n)$ time.
\end{lemma}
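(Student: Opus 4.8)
\textbf{Proof proposal for Lemma~\ref{lem:violate1}.}

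The plan is to push the four pockets of $V_0(s)$ through the covering lemmas of the previous subsections, and to relocate the light source only when forced to.

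First I would record how rigid the situation is. The four pockets making up the two dependent pairs have pairwise disjoint vertex sets (distinct reflex apexes, and pairwise separated arcs of $\partial P$), so their sizes sum to at least $2\lfloor n/2\rfloor\ge n-1$; since the pockets of $V_0(s)$ have total size at most $n$, it follows that $V_0(s)$ has no further pockets, that at most one vertex of $P$ lies strictly inside $V_0(s)$ (the \emph{extra vertex}), and that the two pair-totals $M_1,M_2$ lie in $[\lfloor n/2\rfloor,\lceil n/2\rceil]$, with $M_i=\lfloor n/2\rfloor$ whenever the extra vertex is present. By Proposition~\ref{prop:independent} each dependent pair consists of one left and one right pocket, and I would reduce to the case that no pocket is dependent on a pocket from the other pair (cross‑dependence can either be ruled out or absorbed by applying the same analysis to the actual maximal family of mutually dependent pockets).

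Next, fix a pair $\{U,U'\}$ with $m=|U|\ge m'=|U'|$ and $M=m+m'$. Lemma~\ref{lem:allpockets} gives $\mathrm{int}(U')\subseteq R_{m'-1}$, and since $m'\le M/2\le\lceil n/2\rceil/2$ one checks $m'-1\le\lfloor(n-2)/4\rfloor$ in every residue of $n$; the same lemma gives $\mathrm{int}(U)\subseteq R_{m-1}=R_{M/2-1}\subseteq R_{\lfloor(n-2)/4\rfloor}$ when $m=m'$, so a balanced pair is always covered with $s'=s$. For an unbalanced pair, Lemma~\ref{lem:dependent} gives $\mathrm{int}(U)\subseteq R_{\lfloor M/2\rfloor}$, which is $\subseteq R_{\lfloor(n-2)/4\rfloor}$ except exactly when $M$ attains its largest admissible value, where $\lfloor M/2\rfloor=\lfloor(n-2)/4\rfloor+1$. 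For such an ``offending'' pair I would instead invoke Lemma~\ref{lem:double}: if the two pockets share an edge of $P$, it covers both interiors with $R_{\lfloor(M-1)/2\rfloor}$, and since each pocket has size at least $2$ we have $\max(m,m')\le M-2$, which makes $R_{\lfloor(M-1)/2\rfloor}=R_{\lfloor(n-2)/4\rfloor}$ — with the single further subtlety that when $n\equiv1\pmod 4$ and $M$ is odd one must also get $M$ to drop by one (see below).

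It therefore remains to handle an offending pair $\{U,U'\}$ whose two pockets do \emph{not} already share an edge — the configuration of Fig.~\ref{fig:double1}(b), where the visible portion of $\partial P$ between the two windows runs from a window foot to a reflex apex, possibly through the single extra vertex, which must be convex (otherwise it would induce a fifth pocket). I would move $s$ to a point $s'$ obtained by a small rotation about the reflex apex of the larger window, chosen so that that window's foot slides across the one intervening vertex; after the move the two pockets of the pair have their feet on a common edge of $P$, Lemma~\ref{lem:double} applies, the other pair and the absence of other pockets and the dependence structure are unchanged (perturb $s'$ to keep general position), and the pair‑total changes by at most one — and when $n\equiv1\pmod4$ the move is arranged so that the larger pocket sheds exactly one vertex, bringing $M$ down to $\lfloor n/2\rfloor$. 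I expect the relocation step to be the main obstacle: verifying that a suitable $s'$ exists and that it preserves the pocket/dependence structure needs a careful local analysis of $\partial P$ near the two offending windows, leaning on the rigidity (at most one extra vertex, no fifth pocket). Finally, computing $V_0(s)$, its pockets, their sizes, the dependence relation and the shared‑edge condition takes $O(n)$ time, and deciding whether $s'=s$ works and otherwise constructing the rotated $s'$ and re‑checking are again $O(n)$‑time steps, for $O(n)$ total.
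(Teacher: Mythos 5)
Your proposal takes a genuinely different route at the crucial step. The paper identifies one pair, say $U_1,U_2$, of joint size $\lfloor n/2\rfloor$ that already shares an edge, restricts the new source $s'$ to the wedge $W$ bounded by $\overrightarrow{a_1s}$ and $\overrightarrow{a_2s}$ (which guarantees that $a_1,a_2$ continue to induce pockets that grow only and still share an edge), and pushes $s'$ towards the lead edge of $U_3$ or $U_4$ so that $a_3$ or $a_4$ becomes directly visible. The offending pair thus \emph{shrinks} to joint size at most $\lfloor n/2\rfloor-1$, and Lemmas~\ref{lem:condition} and~\ref{lem:double} finish. Your plan instead fixes the offending pair by rotating $s$ about a reflex apex so that a window foot slides across the intervening vertex and both feet land on a common edge, then appeals to Lemma~\ref{lem:double}. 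The two mechanisms are not the same: the paper's move never aims to create a shared edge for the bad pair --- it eliminates the bad pair's excess size.

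There are two concrete gaps in your version. First, the geometry of the rotation works against you. If you rotate $s$ about the apex $a$ of the larger pocket $U_{ab}$ so that the foot $b$ slides past the intervening convex vertex $u$ onto $b'$'s edge, then $b$ is moving \emph{away} from $a$ along $\partial P$, and the arc of $\partial P$ bounding $U_{ab}$ (from $a$ through the lead edge to $b$) gains $u$; the pocket's size goes \emph{up} by one, not down. Your claim that ``the move is arranged so that the larger pocket sheds exactly one vertex, bringing $M$ down to $\lfloor n/2\rfloor$'' therefore does not follow, and for $n\equiv 1\pmod 4$ (where you yourself note $M$ must drop) the argument collapses: Lemma~\ref{lem:double} at $M=\lceil n/2\rceil$ yields $R_{\lfloor(n-1)/4\rfloor}$, one more than $\lfloor(n-2)/4\rfloor$. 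Second, you do not control the side effect of the relocation on the other dependent pair. The paper's wedge $W$ is precisely the device that guarantees $U_1',U_2'$ remain a shared-edge pair while $s'$ moves; a free small rotation about $a$ has no analogous guarantee, and you flag this yourself as ``the main obstacle'' without resolving it. (A smaller bookkeeping slip: Lemma~\ref{lem:dependent} overshoots $\lfloor(n-2)/4\rfloor$ not only at the largest admissible $M$ --- for $n\equiv 0,1\pmod 4$ it already overshoots at $M=\lfloor n/2\rfloor$ --- so the case split you set up does not quite match the arithmetic.) The reduction to ``no cross-pair dependence'' is also asserted rather than proved, although it is likely forced by the cyclic arrangement of the four windows.
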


\begin{proof}
The sum of the sizes of these four pockets is at least $2\floor{n/2}$. If $n$ is even, then the two dependent pairs each have size $n/2$, they use all $n$ vertices of $P$, and both dependent pairs share an edge. If $n$ is odd, then either (i) the two dependent pairs have sizes $\floor{n/2}$ and $\ceil{n/2}$, resp., using all $n$ vertices of $P$, and both dependent pairs share an edge; or (ii) the two dependent pairs each have size $\floor{n/2}$, leaving one extra vertex, which may lie on the boundary between two independent pockets (Fig.~\ref{fig:double1}(a)), or between two dependent pockets (Fig.~\ref{fig:double1}(b)). In all cases, there is at least one dependent pair with joint size $\floor{n/2}$ that share an edge.

If the two dependent pairs each have size $\floor{n/2}$ and each share an edge (Fig.~\ref{fig:double1}(a)), then their interiors are covered by $R_k$ for $k=\floor{(\floor{n/2}-1)/2} =\floor{(n-2)/4}$ by Lemma~\ref{lem:double}.
This completely resolves that case that $n$ is even.

Assume now that $n$ is odd. Denote the four pockets by $U_1,\ldots , U_4$, induced by the reflex vertices $a_1,\ldots , a_4$ in counterclockwise order along $\partial P$, such that $U_1$ and $U_2$ are dependent with joint size $\floor{n/2}$ and share an edge; and $U_3$ and $U_4$ are dependent but either has joint size $\ceil{n/2}$ or do not share any edge. Refer to Fig.~\ref{fig:double1}(b). Note that $a_2a_3$ and $a_4a_1$ are edges of $P$. Let $W$ be the wedge bounded by the rays $\overrightarrow{a_1s}$ and $\overrightarrow{a_2s}$ (and disjoint from both $a_1$ and $a_2$). For every point $s'\in {\rm int}(P)\cap W$ in this wedge, $a_1$ and $a_2$ induce pockets $U_1'$ and $U_2'$, respectively, such that $U_1\subseteq U_1'$ and $U_2\subseteq U_2'$, and they also share an edge. Compute the intersection of region ${\rm int}(P)\cap W$ with the two lines containing the lead edges of $U_3$ and $U_4$.
Let $w$ be a closest point to $s$ on these segments, and let $s'\in {\rm int}(P)\cap W$ be a point close to $w$ in general position such that it can see all of the lead edge for $U_3$ or $U_4$. By construction, vertex $a_3$ or $a_4$ is not incident to any pocket of $V_0(s')$. Consequently, the total size of all pockets of $V_0(s')$ in $U_3$ and $U_4$ is at most $\floor{n/2}-1$. By Lemmas~\ref{lem:condition} and~\ref{lem:double}, $V_{\floor{(n-2)/4}}(s')$ contains the interiors
of all pockets of $V_0(s')$, as claimed.
\end{proof}

\begin{figure}[htp]
  \centering
  \includegraphics[width=.85\textwidth]{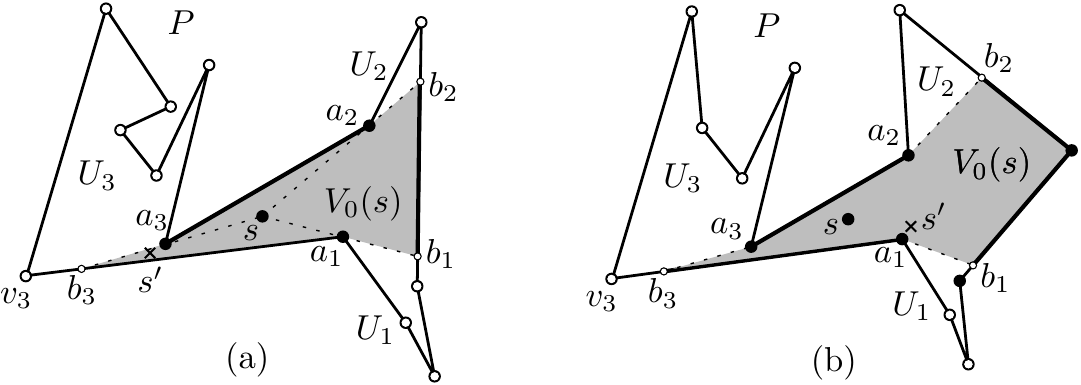}
  \caption{\label{fig:double2}
A polygon $P$ with $n=13$ vertices where $V_0(s)$ has three pockets: two dependent pockets of total size $\floor{n/2} =6$ and an independent pocket of size $\floor{n/2}=6$.
(a) One extra vertex lies on $\partial P$ between two independent pockets.
(b) One extra vertex lies on $\partial P$ between two dependent pockets}
\end{figure}

\begin{lemma}\label{lem:violate2}
Suppose that $V_0(s)$ has a pair of dependent pockets of total size at least $\floor{n/2}$ and
an independent pocket of size at least $\floor{n/2}$. Then there is a point $s'\in {\rm int}(P)$ with
${\rm int}(P)\subseteq V_{\floor{(n-2)/4}}(s')$, and $s'$ can be computed in $O(n)$ time.
\end{lemma}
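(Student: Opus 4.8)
The plan is to mimic the proof of Lemma~\ref{lem:violate1}. Write $k=\floor{(n-2)/4}$; let $U_1,U_2$ be the dependent pair, induced by reflex vertices $a_1,a_2$, and let $U_3$ be the independent pocket, induced by $a_3$. Since $|U_1|+|U_2|\ge\floor{n/2}$, $|U_3|\ge\floor{n/2}$, and the sizes of all pockets of $V_0(s)$ sum to at most $n$, at most one vertex of $P$ fails to lie on $\partial U_1\cup\partial U_2\cup\partial U_3$. So if $n$ is even then $|U_1|+|U_2|=|U_3|=n/2$ and the three pockets exhaust all vertices; if $n$ is odd there is at most one extra vertex, lying on a visible arc of $\partial P$ either between $U_1$ and $U_2$ (Fig.~\ref{fig:double2}(b)) or between the pair and $U_3$ (Fig.~\ref{fig:double2}(a)), and in the all-vertices-used odd case one of the two groups $\{U_1,U_2\}$, $\{U_3\}$ carries size $\ceil{n/2}$. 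In every case $U_3$ is saturated, since $s$ is in general position (Remark~\ref{remark:1}).

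I would begin by relocating the light source. As in Lemma~\ref{lem:violate1}, let $W$ be the wedge bounded by $\overrightarrow{a_1s}$ and $\overrightarrow{a_2s}$ and disjoint from $a_1,a_2$; for every $s'\in{\rm int}(P)\cap W$ the vertices $a_1,a_2$ induce pockets $U_1'\supseteq U_1$ and $U_2'\supseteq U_2$ that share an edge, and neither is dependent on any pocket lying inside $U_3$ (because $U_3$ is independent of $U_1,U_2$). Choosing $s'\in{\rm int}(P)\cap W$ near the point of ${\rm int}(P)\cap W$ closest to the line through the lead edge of $U_3$ (the analogue of $w$ in Lemma~\ref{lem:violate1}), the point $s'$ sees the whole lead edge of $U_3$, so $a_3$ induces no pocket of $V_0(s')$ and the pockets of $V_0(s')$ inside $U_3$ have total size at most $|U_3|-1$. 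These inner pockets are pairwise independent (Proposition~\ref{prop:independent}), independent of $U_1',U_2'$, hence saturated (Remark~\ref{remark:1}). If, moreover, $|U_1'|+|U_2'|\le\floor{n/2}$ and each inner pocket has size at most $\floor{n/2}-1$ (automatic when $|U_3|=\floor{n/2}$), then, using $\floor{n/2}-1\le 2k+1$, Lemma~\ref{lem:double} puts ${\rm int}(U_1'\cup U_2')$ in $R_{\floor{(\floor{n/2}-1)/2}}=R_k$ and Lemma~\ref{lem:saturated} puts the interior of each inner pocket in $R_k$; hence ${\rm int}(P)\subseteq R_k\subseteq V_k(s')$ and we are done. (Alternatively, once both the pair's region and $U_3$'s region have total size at most $\floor{n/2}-1$, Lemma~\ref{lem:condition} finishes directly.)

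The delicate part is the leftover odd-$n$ configurations -- precisely the cases $n\equiv 1\pmod 4$ -- in which, after the wedge move, one group still has size $\ceil{n/2}$: either $|U_1'|+|U_2'|=\ceil{n/2}$ because the wedge absorbed an extra vertex between $U_1$ and $U_2$, or $|U_3|=\ceil{n/2}$ so the lone pocket left inside $U_3$ may still have size $\floor{n/2}$. In each such case the offending group has a spare vertex, and one must choose $s'$ to expose one \emph{additional} edge -- the lead edge of $U_1$ (or $U_2$), respectively a second edge of $U_3$ -- so that the pair drops to joint size $\floor{n/2}$, respectively the inner pockets of $U_3$ drop to size at most $\floor{n/2}-1$, after which Lemmas~\ref{lem:double} and~\ref{lem:saturated} apply as above. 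Showing that a single $s'\in{\rm int}(P)\cap W$ can realize all of these exposures at once -- without re-inflating $|U_1'|+|U_2'|$ and without creating new dependences between $U_1',U_2'$ and the pockets inside $U_3$ -- is the crux, and follows the same cell-of-an-arrangement reasoning as in Lemma~\ref{lem:violate1}: intersect ${\rm int}(P)\cap W$ with the $O(n)$ lines spanned by pairs of vertices of $P$ and take $s'$ in general position in a suitable cell. Everything involved -- $V_0(s)$, the pockets and their windows, the wedge $W$, the $O(n)$ line intersections -- is computable in $O(n)$ time, so $s'$ is found in $O(n)$ time, proving the lemma.
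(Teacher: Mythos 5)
Your proposal follows the same broad strategy as the paper (relocate the source to shrink one of the offending groups, then invoke Lemmas~\ref{lem:condition}, \ref{lem:saturated}, and~\ref{lem:double}), but it grafts the wedge construction from Lemma~\ref{lem:violate1} onto this lemma, whereas the paper picks $s'$ at an explicitly computed location (near the intersection of the line through $a_2a_3$ with the segment $a_1b_3$, or near the vertex $u_3$). That difference is not cosmetic, and your version has genuine gaps.

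First, you assert that for every $s'\in {\rm int}(P)\cap W$ the pockets $U_1'$ and $U_2'$ ``share an edge.'' In Lemma~\ref{lem:violate1} that claim is valid only because $U_1$ and $U_2$ are \emph{assumed} to share an edge; here you explicitly allow the case where the extra vertex lies between $b_1$ and $b_2$, in which case $U_1$ and $U_2$ do not share an edge, and the pockets $U_1',U_2'$ need not share one either. The paper handles exactly this situation by a direct case analysis and a different choice of $s'$ (near $u_3$), arguing separately about whether $u_3=a_1$.

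Second, you conclude that $U_1',U_2'$ are ``independent of any pocket lying inside $U_3$ (because $U_3$ is independent of $U_1,U_2$).'' Independence is not monotone in this direction: Proposition~\ref{prop:sliding-pockets}(ii) transfers independence only to \emph{sub}pockets, but $U_1'$ and $U_2'$ are \emph{super}sets of $U_1$ and $U_2$. A chord could conceivably cross the enlarged window of $U_1'$ and the window of a new pocket inside $U_3$. This is precisely why the paper never relies on such a transfer; it instead ensures enough vertices drop out of $U_3$ (seeing both $v_3$ and the lead edge) so that condition $\mathbf{C}_2$ can be verified directly, or applies Lemma~\ref{lem:double} to the shared-edge pair.

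Third, for the $n\equiv 1\pmod 4$ cases you say the crux is showing a single $s'$ realizes all the required exposures simultaneously and defer to ``the same cell-of-an-arrangement reasoning as in Lemma~\ref{lem:violate1}.'' That is the part of the argument that actually needs to be proved, and the paper does prove it by concrete geometric choices (the point on $a_1b_3$ hit by the line of $a_2a_3$, or a point near $u_3$). Moreover, you speak of intersecting with ``the $O(n)$ lines spanned by pairs of vertices''; there are $\Theta(n^2)$ such lines, so the claimed $O(n)$ running time does not follow from that route — the paper achieves $O(n)$ time precisely because it only needs a constant number of explicitly identified chords and ray shots, not a full arrangement restricted to the wedge.
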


\begin{proof}
The sum of the sizes of these three pockets is at least $2\floor{n/2}$.
This implies that $V_0(s)$ has no other pocket, and so the independent pocket is saturated (cf. Remark~\ref{remark:1}). If $n$ is even, then the two dependent pockets have total size $n/2$ and share an edge, and the independent pocket has size $n/2$.
If $n$ is odd, then either (i) the three pockets use all $n$ vertices of $P$, and the dependent pockets share an edge (Fig.~\ref{fig:double2}(a)); or (ii) the dependent pair and the independent pocket each have size $\floor{n/2}$, leaving one extra vertex, which may lie on the boundary between two independent pockets, or between two dependent pockets (Fig.~\ref{fig:double2}(b)).
Denote the three pockets by $U_1$, $U_2$, and $U_3$, induced by the reflex vertices $a_1$, $a_2$, and $a_3$ in counterclockwise order along $\partial P$, such that $U_1$ and $U_2$ are dependent; and $U_3$ is independent.
By Proposition~\ref{prop:independent}, $U_1$ and $U_2$ have opposite orientation, so we may assume without loss of generality that $U_2$ and $U_3$ have opposite orientation (say, left and right).

First suppose that $U_3$ has size $\ceil{n/2}$. Refer to Fig.~\ref{fig:double2}(a). Then $U_1$ and $U_2$ have joint size $\floor{n/2}$ and share an edge, and by Lemma~\ref{lem:double}, $R_{\floor{(n-2)/4}}$ contains the interior of both $U_1$ and $U_2$. Since all $n$ vertices are incident to pockets, $a_2a_3$ is an edge of $P$, and $a_1b_3$ is contained in an edge of $P$, say $a_1b_2 \subset a_1v_3$. Since $a_3b_3$ is a window of $V_0(s)$, the supporting line of $a_2a_3$ intersects segment $a_1b_3$. Let $s'\in {\rm int}(p)$ be a point close to the intersection of line $a_2a_3$ and segment $a_1b_3$. Then $a_1$ and $a_2$ induce pockets $U_1'$ and $U_2'$, respectively, such that $U_1\subseteq U_1'$, $U_2\subseteq U_2'$, and they share an edge. Both vertex $v_3$ and the lead edge of $U_3$ are directly visible from $s'$, they are not part of any pocket of $V_0(s')$. Consequently, the total size of pockets of $V_0(s')$ inside $U_3$ is at most $\ceil{n/2}-2$. By Lemma~\ref{lem:condition}, $V_{\floor{(n-2)/4}}(s')$ contains the interiors of all pocket of $V_0(s')$.

Now suppose that $n$ is odd and $U_3$ has size $\floor{n/2}$. Refer to Fig.~\ref{fig:double2}(b).
Denote the edge of $P$ that contains $b_3$ by $u_3v_3$ such that $v_3\in \partial U_3$ (and possibly $u_3=a_1$).
Let $s'\in {\rm int}(P)$ be a point in a small neighborhood of $u_3$. Then $s'$ directly sees $v_3$,
and similarly to the previous case, $V_{\floor{(n-2)/4}}(s')$ contains the interior of all pockets of $V_0(s')$ inside $U_3$. If $U_1$ and $U_2$ jointly have size $\ceil{n/2}=\floor{n/2}+1$, then they share an edge and $u_3=a_1$.
In this case $s'$ can see the lead edge of $U_1$, the total size of all pockets of $V_0(s')$ inside $U_1$ and $U_2$ is at most $\floor{n/2}$, and if it equals $\floor{n/2}$, then two of those pockets are dependent and share an edge. If $U_1$ and $U_2$ jointly have size $\floor{n/2}$, then $P$ has one ``unaffiliated'' vertex that does not belong to any pocket of $V_0(s')$ (Fig.~\ref{fig:double2}(b)). If $u_3 = a_1$, then $s'$ can see the lead edge of $U_1$, and thus the total size of all pockets of $V_0(s')$ inside $U_1$ and $U_2$ is at most $\floor{n/2}-1$. If $u_3\neq a_1$, then the unaffiliated vertex is $u_3$, hence $U_1$ and $U_2$ share an edge. Consequently, the total size of all pockets of $V_0(s')$ inside $U_1$ and $U_2$ is at most $\floor{n/2}$, and if it equals $\floor{n/2}$, then two of those pockets are dependent and share an edge. By Lemmas~\ref{lem:condition} and~\ref{lem:double}, ${\rm int}(P)\subseteq V_{\floor{(n-2)/4}}(s')$.
\end{proof}

\section{Finding a Witness Point}
\label{sec:center}

In Section~\ref{ssec:kernel}, we show that in every simple polygon $P$ in general position, there is a point $s\in {\rm int}(P)$ that satisfies condition $\mathbf{C}_1$.
In Section~\ref{ssec:witness}, we pick a point $s\in {\rm int}(P)$ that satisfies condition $\mathbf{C}_1$, and move it continuously until either (i) it satisfies both conditions $\mathbf{C}_1$ and $\mathbf{C}_2$, or (ii) it becomes a double violator. In both cases, we find a witness point for Theorem~\ref{thm:radius} (by Lemmas~\ref{lem:condition},~\ref{lem:violate1}, and~\ref{lem:violate2}).

\subsection{Generalized Kernel}
\label{ssec:kernel}

Let $P$ be a simple polygon with $n$ vertices. Recall that the set of points from which the entire polygon $P$ is visible is the \emph{kernel} of $P$, denoted $K(P)$, which is the intersection of all halfplanes bounded by a supporting line of an edge of $P$ and facing towards the interior of $P$. Lee and Preparata~\cite{LeePreparata79} designed an optimal $O(n)$ time algorithm for computing the kernel of simple polygon with $n$ vertices. We now define a generalization of the kernel. For an integer $q\in \mathbb{N}_0$, let $K_q(P)$ denote the set of points $s\in {\rm int}(P)$
such that every pocket of $V_0(s)$ has size at most $q$. Clearly, $K(P)=K_0(P)=K_1(P)$, and $K_q(P)\subseteq K_{q+1}(P)$ for all $q\in \mathbb{N}_0$. The set of points that satisfy condition $\mathbf{C}_1$ is $K_{\floor{n/2}-1}(P)$.

\begin{figure}[htp]
  \centering
  \includegraphics[width=\textwidth]{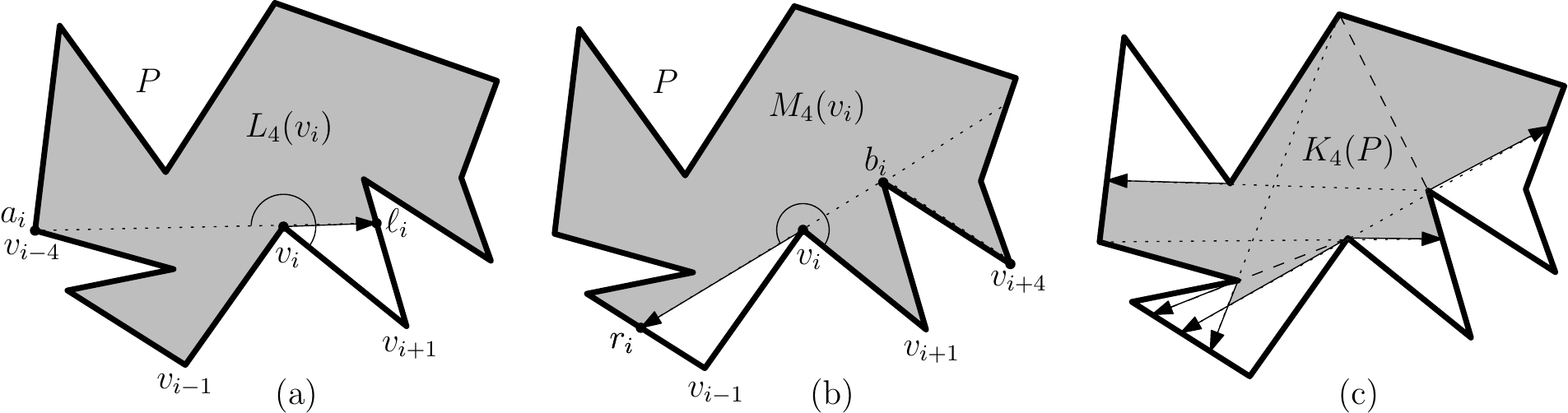}
  \caption{\label{fig:kernel}
(a) Polygon $L_4(v_i)$.
(b) Polygon $M_4(v_i)$.
(c) Polygon $K_4(P)$.}
\end{figure}

For every reflex vertex $v$, we define two polygons $L_q(v)\subset P$ and $M_q(v)\subset P$:
let $L_q(v)$ (resp. $M_q(v)$) be the set of points $s\in P$ such that $v$ does not induce a left (resp., right) pocket of size more than $q$ in $V_0(s)$. We have

$$K_q(P) = \bigcap_{v \mbox{\ \tiny\rm reflex}} \left(L_q(v) \cap M_q(v) \right).$$

We show how to compute the polygons $L_q(v)$ and $M_q(v)$. Refer to Fig.~\ref{fig:kernel}. Denote the vertices of $P$ by $(v_0,v_1,\ldots , v_{n-1})$, and use arithmetic modulo $n$ on the indices.
For a reflex vertex $v_i$, let $v_ia_i$ be the first edge of the shortest (geodesic) path from $v_i$ to $v_{i-q}$ in $P$. If the chord $v_ia_i$ and $v_iv_{i+1}$ meet at a reflex angle, then $v_ia_i$ is on the boundary of the \emph{smallest} left pocket of size at least $q$ induced by $v_i$ (for any source $s\in P$). In this case, the ray $\overrightarrow{a_iv_i}$ enters the interior of $P$, and we denote by $\ell_i$ the first point hit on $\partial P$. The polygon $L_q(v_i)$ is the part of $P$ lying on the left of the chord $\overrightarrow{v_i\ell_i}$. However, if the chord $v_ia_i$ and $v_iv_{i+1}$ meet at convex angle, then every left pocket induced by $v_i$ has size less than $q$, and we have $L_q(v_i)=P$.
Similarly, let $v_ib_i$ be the first edge of the shortest path from $v_i$ to $v_{i+q}$. Vertex $v_i$ can induce a right pocket of size more than $q$ only if $b_iv_i$ and $v_iv_{i-1}$ make a reflex angle. In this case, $v_ib_i$ is the boundary of the \emph{largest} right pocket of size at most $q$ induced by $v_i$, the ray $\overrightarrow{b_iv_i}$  enters the interior of $P$, and hits $\partial P$ at a point $m_i$, and $M_q(v_i)$ is the part of $P$ lying on the right of the chord $\overrightarrow{v_im_i}$. if $b_iv_i$ and $v_iv_{i-1}$ meet at a convex angle, then $M_q(v_i)=P$.

Note that every set $L_q(v_i)$ (resp., $M_q(v_i)$) is \emph{$P$-convex} (a.k.a.  \emph{geodesic convex}),
that is, $L_i(v_i)$ contains the shortest path between any two points in $L_q(v_i)$ with respect to $P$~\cite{BKOW14,DEH04,Tou86}. Since the intersection of $P$-convex polygons is $P$-convex, $K_q(P)$ is also $P$-convex for every $q\in \mathbb{N}_0$.
There exists a point $s\in {\rm int}(P)$ satisfying condition $\mathbf{C}_1$ if and only if $K_{\floor{n/2}-1}(P)$ is nonempty. We prove $K_{\floor{n/2}-1}(P)\neq \emptyset$ using a Helly-type result by Breen~\cite{Breen} (cf.~\cite{Breen98,Molnar57}).

\begin{theorem}[\cite{Breen}]\label{thm:Breen}
Let $\mathcal{P}$ be a family of simple polygons in the plane. If every three (not necessarily distinct)
members of $\mathcal{P}$ have a simply connected union and every two members of $\mathcal{P}$
have a nonempty intersection, then $\bigcap_{P \in \mathcal{P}}{P} \neq \emptyset$.
\end{theorem}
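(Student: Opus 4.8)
The plan is to prove Breen's theorem from first principles by a compactness reduction followed by induction on the number of members, using as the only machinery the Mayer--Vietoris sequence applied to closed polygonal regions. Since every member of $\mathcal{P}$ is compact, $\bigcap_{P\in\mathcal{P}}P\neq\emptyset$ as soon as every finite subfamily of $\mathcal{P}$ has a nonempty intersection, and each finite subfamily inherits both hypotheses; so it suffices to treat the case $\mathcal{P}=\{P_1,\dots,P_n\}$ finite. I would in fact prove the slightly stronger statement in which the members are compact, simply connected (equivalently, contractible) polygonal regions, since the intermediate objects produced below are of this form and every simple polygon is such a region. The cases $n\le 2$ are immediate, the case $n=2$ being exactly the pairwise-intersection hypothesis.

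Two auxiliary facts drive everything, both instances of Mayer--Vietoris for a closed polygonal cover $\{A,B\}$ of $A\cup B$ (made rigorous by a small open thickening, or via the CW Mayer--Vietoris sequence). Fact~(i): if $A$ and $B$ are compact simply connected polygonal regions with $A\cap B\neq\emptyset$ and $A\cup B$ simply connected, then $A\cap B$ is connected, and hence simply connected, since a connected planar polygonal region has free fundamental group and a free group with trivial abelianization is trivial. Fact~(ii): if $A,B$ are simply connected and $A\cap B$ is connected, then $A\cup B$ is simply connected. The base case $n=3$ is the geometric heart: apply Fact~(i) with $A=P_1$ and $B=P_2\cup P_3$, which is legitimate because $P_2\cup P_3$ is simply connected by the hypothesis on the triple $(P_2,P_2,P_3)$ and $P_1\cup P_2\cup P_3$ is simply connected by the hypothesis on $(P_1,P_2,P_3)$. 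Then $P_1\cap(P_2\cup P_3)=(P_1\cap P_2)\cup(P_1\cap P_3)$ is connected; if $P_1\cap P_2\cap P_3$ were empty, this set would be a disjoint union of the two nonempty closed sets $P_1\cap P_2$ and $P_1\cap P_3$, contradicting its connectedness. Hence $P_1\cap P_2\cap P_3\neq\emptyset$.

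For the inductive step, assume $n\ge 4$ and the statement for families of size $n-1$, and merge the last two members: set $\mathcal{P}'=\{P_1,\dots,P_{n-2},\,P_{n-1}\cap P_n\}$. Here $P_{n-1}\cap P_n$ is a compact simply connected polygonal region by Fact~(i) (using that $P_{n-1}\cup P_n$ is simply connected, by the hypothesis on $(P_{n-1},P_{n-1},P_n)$), so $\mathcal{P}'$ is a family of $n-1$ admissible regions, and it remains to verify the two hypotheses for $\mathcal{P}'$. Every pair in $\mathcal{P}'$ intersects: for $i\le n-2$, $P_i\cap(P_{n-1}\cap P_n)=P_i\cap P_{n-1}\cap P_n\neq\emptyset$ by the $n=3$ case applied to $\{P_i,P_{n-1},P_n\}$. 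For the triples in $\mathcal{P}'$, the only case not already covered by the hypotheses on $\mathcal{P}$ is a triple $\{P_i,P_j,\,P_{n-1}\cap P_n\}$ with $i,j\le n-2$ (possibly $i=j$). Write $X=P_i\cup P_j$, $Y=P_{n-1}$, $Z=P_n$; by distributivity $X\cup(Y\cap Z)=(X\cup Y)\cap(X\cup Z)$. Now $X\cup Y$ and $X\cup Z$ are simply connected by the hypothesis on $\mathcal{P}$; their union $X\cup Y\cup Z$ is simply connected by Fact~(ii), whose connectedness requirement $(X\cup Y)\cap Z=(P_i\cap P_n)\cup(P_j\cap P_n)\cup(P_{n-1}\cap P_n)$ holds since each of the three pieces is connected by Fact~(i) and any two of them meet (their intersections are triple intersections of members of $\mathcal{P}$, nonempty by the $n=3$ case); and $(X\cup Y)\cap(X\cup Z)=X\cup(Y\cap Z)$ is nonempty and connected because $X$ and $Y\cap Z$ are each connected and meet (in $X\cap Y\cap Z\supseteq P_i\cap P_{n-1}\cap P_n\neq\emptyset$). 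By Fact~(i), $P_i\cup P_j\cup(P_{n-1}\cap P_n)$ is therefore simply connected. The induction hypothesis then yields a point in $\bigcap\mathcal{P}'$, which lies in $P_{n-1}\cap P_n$ and in each $P_i$, hence in $\bigcap_{i=1}^{n}P_i$.

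The main obstacle is precisely this inductive step: every application of Fact~(i) or~(ii) carries a connectivity side condition on an intersection, and essentially all of these side conditions are supplied by invoking the $n=3$ case, so the base case functions as a reusable lemma inside the induction, and the ``not necessarily distinct'' clause is used repeatedly to obtain simple connectivity of two-fold unions such as $P_i\cup P_j$ and $P_i\cup P_n$. A secondary, more routine concern is setting up Mayer--Vietoris for closed polygonal regions and checking that the merged members, although possibly geometrically degenerate (pinched at points, say), remain compact contractible polygonal regions, so that the strengthened induction hypothesis applies without circularity.
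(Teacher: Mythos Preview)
The paper does not supply its own proof of this theorem; it is quoted as a known result of Breen and invoked as a black box in the proof of Lemma~\ref{lem:kernel}. There is therefore nothing in the paper to compare your proposal against.

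That said, your argument is correct and follows the classical pattern for planar Helly-type results of this kind. The compactness reduction to finite families is standard; Facts~(i) and~(ii) are correct consequences of Mayer--Vietoris (with the caveat you note, that for closed polygonal covers one must pass to a CW structure or use an open thickening); the $n=3$ base case is clean; and the inductive step---in particular the verification that $P_i\cup P_j\cup(P_{n-1}\cap P_n)$ is simply connected via the distributive identity $(X\cup Y)\cap(X\cup Z)=X\cup(Y\cap Z)$ together with the auxiliary simple connectivity of the four-fold union $X\cup Y\cup Z$---is handled carefully, with all connectivity side conditions discharged by repeated appeal to the $n=3$ case. Your strengthening of the induction hypothesis from simple polygons to compact contractible planar polygonal regions is indeed needed so that the merged member $P_{n-1}\cap P_n$ is admissible, and the concern about pinched intersections is real but routine once one works in that larger class.
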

\begin{figure}[htp]
  \centering
  \includegraphics[width=.85\textwidth]{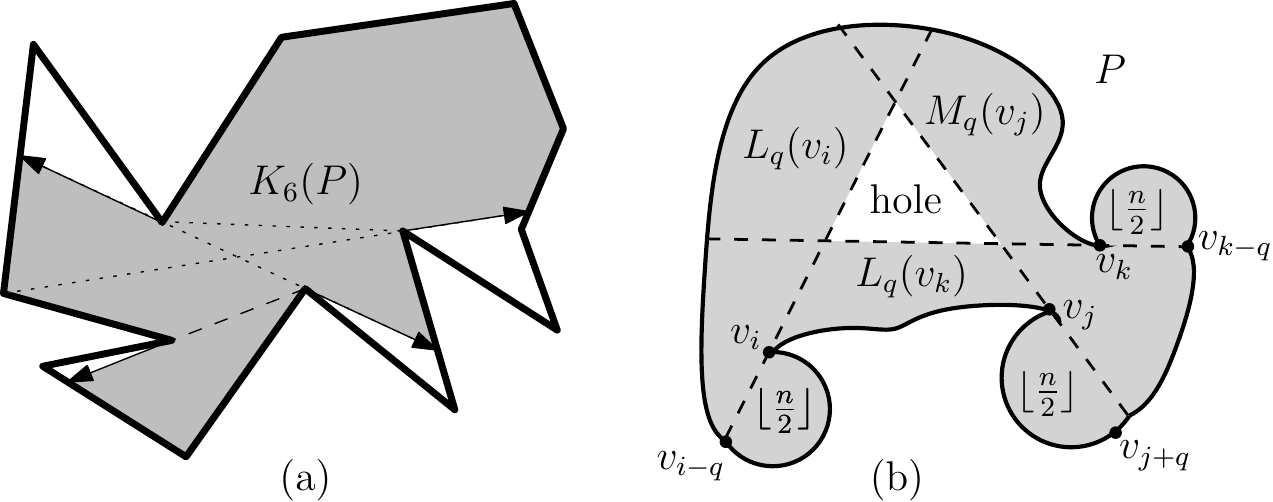}
  \caption{\label{fig:kernel2}
(a) A simple polygon $P$ with $n=14$ vertices, and the generalized kernel $K_{\floor{n/2}-1}(P)=K_6(P)$.
(b) A schematic picture of a triangular hole in the union of three polygons in $P$.}
\end{figure}
\begin{lemma}\label{lem:kernel}
  For every simple polygon $P$ with $n\geq 3$ vertices, $K_{\floor{n/2}-1}(P)$ has nonempty interior.
\end{lemma}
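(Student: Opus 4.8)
\textbf{Proof plan for Lemma~\ref{lem:kernel}.}

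The plan is to apply Breen's Helly-type theorem (Theorem~\ref{thm:Breen}) to the finite family $\mathcal{P}$ consisting of all the polygons $L_q(v)$ and $M_q(v)$ for reflex vertices $v$, where $q=\floor{n/2}-1$. Since $K_q(P)$ is their intersection and every member of $\mathcal{P}$ is a simple polygon contained in $P$, it suffices to verify the two hypotheses of Breen's theorem: (a) every two members have nonempty intersection, and (b) every three members have a simply connected union. Once $K_q(P)\neq\emptyset$ is established, a short separate argument gives that its interior is nonempty, since each $L_q(v)$ and $M_q(v)$ is the intersection of $P$ with a halfplane whose bounding line is in general position with the vertices, so the intersection cannot degenerate to a segment or a point unless $P$ itself does.

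First I would set up the key combinatorial observation: each $L_q(v_i)$ (resp.\ $M_q(v_i)$) is the part of $P$ on one side of a chord $\overrightarrow{v_i\ell_i}$ (resp.\ $\overrightarrow{v_im_i}$) emanating from the reflex vertex $v_i$, and the ``forbidden'' pocket that this chord cuts off has size at least $q+1=\floor{n/2}$, hence contains at least $\floor{n/2}$ vertices of $P$ on its boundary (not counting $v_i$ itself, or counting carefully). This is where the choice $q=\floor{n/2}-1$ is used crucially: two such forbidden pockets, each containing roughly half the vertices, cannot be vertex-disjoint, and in fact their boundaries must overlap substantially. For part (a), I would argue that if $L_q(u)\cap L_q(v)=\emptyset$ (or any mixed pair), then $P$ is split by the two chords into three pieces, with the two forbidden pockets disjoint; counting vertices, the two pockets would together contain more than $n$ vertices, a contradiction. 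The case analysis must handle left/left, right/right, and left/right pairs; Proposition~\ref{prop:independent} and the wedge argument in its proof help rule out or constrain the left/right case.

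For part (b), the simply-connected-union condition, the obstruction to simple connectivity would be a bounded ``hole'' in $L \cup L' \cup L''$ (three members of $\mathcal{P}$) lying inside $P$; schematically a triangular region bounded by the three chords, as in Fig.~\ref{fig:kernel2}(b). I would show such a hole cannot exist by a vertex-counting argument again: the three chords bounding the hole would cut off three forbidden pockets, pairwise sharing little, whose total vertex count exceeds $n$. More carefully, a triangular hole forces the three pockets to be pairwise ``crossing'' in a way that, together with each pocket having $\geq \floor{n/2}$ boundary vertices, is impossible for $n$ vertices total. Here I expect the main obstacle: carefully bounding the overlaps among three pockets and pinning down exactly which vertices can be shared (the two endpoints of each chord, the inducing reflex vertices) so that the inequality $3\cdot\floor{n/2} - (\text{shared vertices}) \leq n$ fails. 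The parity of $n$ and the exact definition of ``size'' (vertices on the pocket boundary) need to be tracked precisely, and degenerate configurations where a chord passes through a vertex are excluded by the general-position assumption on $s$; but once the counting is organized around ``each pocket eats $\lceil n/2\rceil$ vertices and three pockets can share only a bounded number,'' the contradiction follows and Breen's theorem yields $K_{\floor{n/2}-1}(P)\neq\emptyset$.
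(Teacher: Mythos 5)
Your plan follows exactly the paper's strategy: apply Breen's Helly-type theorem (Theorem~\ref{thm:Breen}) to the family $\{L_q(v_i),M_q(v_i)\}$ with $q=\lfloor n/2\rfloor-1$, verifying pairwise intersection and simply connected triple unions by vertex counting. That is the right idea, and you correctly identify that the crux is bookkeeping which vertices each chord separates.

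However, your counting runs in the wrong direction, and this is not a cosmetic slip. The region on the side of the chord $\overrightarrow{v_i\ell_i}$ that is cut away (your ``forbidden pocket'') has \emph{at most} $\lceil n/2\rceil$ vertices of $P$ on its boundary, not at least $\lfloor n/2\rfloor$. What has a guaranteed \emph{lower} bound is the retained set $L_q(v_i)$ itself: it is incident to the $\lfloor n/2\rfloor$ consecutive vertices $v_{i-q},\dots,v_i$ plus at least one further vertex, hence $\geq \lfloor n/2\rfloor+1$. This flip matters twice. For pairwise intersection, ``two disjoint forbidden pockets would exceed $n$ vertices'' is false; even the corrected inequality $L_q(u)\subseteq P\setminus L_q(v)$ yields $\lfloor n/2\rfloor+1\leq\lceil n/2\rceil$, which is a contradiction only for even $n$ and becomes an equality for odd $n$, so no contradiction. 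The paper instead counts the retained sets, observes $2(\lfloor n/2\rfloor+1)>n$ so any two share a vertex, and then promotes ``share a vertex'' to ``interiors intersect'' by a short case analysis (shared vertex must be the common apex $v_i$ of $L_i$ and $M_i$, which is a reflex vertex of both). That promotion step is entirely missing from your outline and is what makes the lemma give a nonempty \emph{interior} rather than just a nonempty set. For the triple-union condition the same flip is fatal: the three forbidden regions all contain the hole and overlap substantially, so their vertex counts cannot be added. What are pairwise disjoint are the three retained arcs $\{v_i,\dots,v_{i\mp(\lfloor n/2\rfloor-1)}\}$ on the far side of each chord from the hole; summing those gives $3\lfloor n/2\rfloor>n$ for $n\geq 4$, which is the contradiction. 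You flag ``bounding the overlaps among three pockets'' as the expected obstacle, but with the correct (retained-side) accounting there is no overlap to bound; the difficulty you anticipate is an artifact of counting the wrong side of each chord.
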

\begin{proof}
When $n=3$, we have $K_{\floor{n/2}-1}(P) = K_0(P) = P$.
Now assume $n > 3$.

We apply Theorem~\ref{thm:Breen} for the polygons $L_{\floor{n/2}-1}(v_i)$ and
$M_{\floor{n/2}-1}(v_i)$ for all reflex vertices $v_i$ of $P$. By definition,
$L_{\floor{n/2}-1}(v_i)$ is incident to $v_{i-\floor{n/2}+1},\ldots, v_i$ on or left of $\overrightarrow{v_ia_i}$, and similarly $M_{\floor{n/2}-1}(v_i)$) is incident to $v_i, \ldots, v_{i+\floor{n/2}-1}$ or or right or $\overrightarrow{v_ib_i}$.
Furthermore, $L_{\floor{n/2}-1}$ (resp., $M_{\floor{n/2}-1}$)
is incident to at least one additional vertex right of $\overrightarrow{v_ia_i}$
(resp., left of $\overrightarrow{v_ib_i}$). Thus, each of these sets is incident to
at least $\floor{n/2}+1$ vertices of $P$. Since $2(\floor{n/2}+1) > n$,
any two of these sets are incident to a common vertex of $P$ by the pigeonhole principle.

Recall that $L_i$ and $M_i$ are each bounded by part of the boundary of $P$
and possibly a chord incident to $v_i$. Consequently, if two of these sets are incident
to two or more common vertices of $P$ then their interiors intersect. If they are incident
to precisely one common vertex of $P$, then the common vertex, say $v_i$, is incident to
both boundary chords, hence the two sets are $L_i$ and $M_i$. In this case, however,
$v_i$ is a reflex vertex of both $L_i$ and $M_i$, and so their interiors intersect.

It remains to show that the union of any three of them is simply connected.
Suppose, to the contrary, that there are three sets whose union has a hole. Since each set is bounded by a chord of $P$, the hole must be a triangle bounded by the three chords on the boundary of the three polygons. Refer to Fig.~\ref{fig:kernel2}(b).
Each of these chords is incident to a reflex vertex of $P$ and is collinear with \emph{another} chord of $P$ that weakly separates the vertices
$\{v_i,v_{i+1}, \ldots, v_{i+\floor{n/2}-1}\}$ or
$\{v_i,v_{i-1}, \ldots, v_{i-\floor{n/2}+1}\}$ from the hole.
Figure~\ref{fig:kernel2}(b) shows a schematic image.
The latter three chords together weakly separate disjoint sets of vertices
  of total size at least $3\floor{n/2}>n$ from the hole,
contradicting the fact that $P$ has $n$ vertices altogether.
\end{proof}

By Lemma~\ref{lem:kernel}, $K_{\floor{n/2}-1}(P)$ has nonempty interior,
so there is a light source $s\in {\rm int}(K_{\floor{n/2}-1}(P))$ 
that satisfies condition $\mathbf{C}_1$.

\begin{lemma}\label{lem:kernel2}
For every $q\in \mathbb{N}_0$, $K_q(P)$ can be computed in $O(n\log n)$ time.
\end{lemma}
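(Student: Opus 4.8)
The plan is to compute, for each reflex vertex $v_i$, the two chords that bound $L_q(v_i)$ and $M_q(v_i)$, and then intersect all $O(n)$ of the resulting $P$-convex polygons. The first ingredient is the shortest-path structure: for each reflex vertex $v_i$ we need the first edge $v_ia_i$ of the geodesic from $v_i$ to $v_{i-q}$, and the first edge $v_ib_i$ of the geodesic from $v_i$ to $v_{i+q}$. Building the shortest-path tree from a single source in a triangulated simple polygon takes $O(n)$ time (after an $O(n)$-time triangulation), but here the source changes with $i$. A clean way to get all of these in $O(n\log n)$ time is to triangulate $P$ once, and for each reflex vertex $v_i$ run the linear-time geodesic query only along the relevant ``fan'' — but a simpler accounting is available: fix the target index offset and observe that the set of geodesics $\{v_i \leadsto v_{i-q} : i\}$ forms a structure that can be extracted from $O(n)$ geodesic-path queries, each answerable in $O(\log n)$ time after $O(n)$-time preprocessing of a shortest-path data structure (e.g.\ the Guibas--Hershberger structure). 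Either way, all the pairs $(a_i,b_i)$, and hence (by shooting a ray from $a_i$ through $v_i$, resp.\ $b_i$ through $v_i$, a ray-shooting query, again $O(\log n)$ each after $O(n)$ preprocessing) all the chords $\overrightarrow{v_i\ell_i}$ and $\overrightarrow{v_im_i}$, are obtained in $O(n\log n)$ total time. For reflex vertices where the chord meets $v_iv_{i+1}$ (resp.\ $v_iv_{i-1}$) at a convex angle we simply record $L_q(v_i)=P$ (resp.\ $M_q(v_i)=P$) and contribute no chord.

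Having produced a collection of at most $2n$ chords of $P$, together with a designated side of each, $K_q(P)$ is exactly $P$ intersected with all the corresponding half-polygons. I would compute this intersection by the standard technique for intersecting $P$-convex (geodesically convex) sets inside a simple polygon: each chord splits $P$ into two simple pieces, and intersecting $P$-convex subpolygons of $P$ one chord at a time is, in aggregate, an $O(n\log n)$ operation — or, more directly, one can sort the chords around $\partial P$ and sweep, since $K_q(P)$ is $P$-convex (as already noted in the paragraph before Theorem~\ref{thm:Breen}) its boundary alternates between arcs of $\partial P$ and a subset of the given chords, and each chord can appear on $\partial K_q(P)$ at most once. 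A divide-and-conquer over the chords, merging two $P$-convex polygons in time linear in their complexity, gives the $O(n\log n)$ bound; alternatively one can appeal to the fact that the intersection of $n$ halfplane-like constraints inside a simple polygon (a ``geodesic kernel''-type object) is computable in $O(n\log n)$ time, which is all we need and which Lemma~\ref{lem:kernel} guarantees is nonempty.

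The main obstacle is the bookkeeping around the geodesic first-edges $v_ia_i$ and $v_ib_i$ when the source vertex $v_i$ ranges over all reflex vertices: a naive per-vertex shortest-path computation is $O(n)$ each and thus $O(n^2)$ overall, so one must either amortize across the $i$'s or route the computation through a single shortest-path/ray-shooting data structure answering each of the $O(n)$ needed queries in $O(\log n)$ time. Once that is in hand, everything else — the ray shooting to find $\ell_i$ and $m_i$, the convex/reflex-angle test that decides whether $v_i$ contributes a chord at all, and the $P$-convex intersection — is routine and stays within the $O(n\log n)$ budget. I would therefore organize the proof as: (1) preprocess $P$ for shortest-path and ray-shooting queries in $O(n\log n)$ (indeed $O(n)$) time; (2) for each reflex $v_i$, in $O(\log n)$ time obtain $a_i,b_i$, decide the angle condition, and ray-shoot to get the bounding chords; (3) intersect the resulting $O(n)$ $P$-convex polygons in $O(n\log n)$ time; conclude.
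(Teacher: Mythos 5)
Your proposal is correct and follows essentially the same route as the paper: compute the chords bounding $L_q(v_i)$ and $M_q(v_i)$ via $O(\log n)$-time first-edge-of-geodesic queries (Guibas--Hershberger) and ray-shooting queries, then form $K_q(P)$ by intersecting the resulting $P$-convex pieces. The paper implements the final intersection by the incremental ``one chord at a time'' method you list as your first option, maintaining the current intersection and locating the two new crossing points in $O(\log n)$ per chord; your alternative sweep and divide-and-conquer formulations are valid variants but not what the paper does.
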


\begin{proof}
With a shortest path data structure~\cite{GH89} in a simple polygon $P$, the first edge of the shortest path
between any two query points can be computed in $O(\log n)$ time after $O(n)$ preprocessing time. A ray shooting data structure~\cite{HS95} can answer ray shooting queries in $O(\log n)$ time after $O(n)$ preprocessing time. Therefore, any chord $\overrightarrow{v_i\ell_i}$ or $\overrightarrow{v_im_i}$ can be computed in $O(\log n)$ time.

The generalized kernel $K_q(P) = \bigcap \left(L_q(v_i) \cap M_q(v_i) \right)$, can be constructed by incrementally maintaining the intersection $K$ of some sets from $\{L_q(v_i), M_q(v_i): v_i$ is reflex$\}$. In each step, we compute the intersection of $K$ with $L_q(v_i)$ or $M_q(v_i)$. Recall that all these sets are $P$-convex (the intersection of $P$-convex sets is $P$-convex). A chord of $P$ intersects the boundary of a $P$-convex polygon $K$ in at most two points, and the intersection points can be computed in $O(\log n)$ time using a ray-shooting query in $P$ (shoot a ray along the chord, and find the intersection points with binary search along the boundary of $K$). Thus $K$ can be updated in $O(\log n)$ time. Altogether, we can compute $K_q(P)$ in $O(n\log n)$ time.
\end{proof}

\subsection{Finding a Witness}
\label{ssec:witness}

In this section, we present an algorithm that, given a simple polygon $P$ with $n$ vertices in general position, finds a witness $s\in {\rm int}(P)$ such that ${\rm int}(P)\subseteq V_{\floor{(n-2)/4}}(s)$.

Let $s_0$ be an arbitrary point in ${\rm int}(K_{\floor{n/2}-1}(P))$. Such a point exists by Lemma~\ref{lem:kernel}, and can be computed in $O(n \log n)$ time by Lemma~\ref{lem:kernel2}. We can compute the visibility polygon $V_0(s_0)$
and its pockets in $O(n)$ time~\cite{GHL+87}. The definition of $K_{\floor{n/2}-1}(P)$ ensures that $s_0$ satisfies condition $\mathbf{C}_1$ of Lemma~\ref{lem:condition}. If it also satisfies $\mathbf{C}_2$, then $s=s_0$ is a desired witness.

Assume that $s_0$ does not satisfy $\mathbf{C}_2$, that is, $V_0(s_0)$ has two dependent pockets of total size at least $\floor{n/2}$, say a left pocket $U_{ab}$ and (by Proposition~\ref{prop:independent}) a right pocket $U_{a'b'}$. We may assume that $U_{ab}$ is at least as large as $U_{a'b'}$, by applying a reflection if necessary, and so the size of $U_{ab}$ is at least $\floor{n/4}$. Refer to Fig.~\ref{fig:line}(a). Let $c\in \partial P$ be a point sufficiently close to $b$ such that segment $bc$ is disjoint from all lines spanned by the vertices of $P$, segment $s_0c$ is disjoint from the intersection of any two lines spanned by the vertices of $P$, and $s_0c\subset P$. In Lemma~\ref{lem:line} (below), we find a point on segment $s_0c$ that is a witness, or double violator, or improves a parameter (spread) that we introduce now.

For a pair of dependent pockets, a left pocket $U_{ab}$ and (by Proposition~\ref{prop:independent}) a right pocket $U_{a'b'}$, let $\spread(a,a')$ be the part of $\partial P$ clockwise from $a$ to $a'$ (inclusive), and let the \emph{size} of $\spread(a,a')$ be the number of vertices of $P$ along $\spread(a,a')$. Note that
$|\spread(a,a')|$ is at least the sum of the sizes of the two dependent pockets, as all vertices incident to the two pockets are counted. For a pair of pockets of total size at least $\floor{n/2}$, we have $\floor{n/2}\leq |\spread(a,a')|\leq n$.

\begin{figure}[htbp]
  \centering
  \includegraphics[width=0.8\textwidth]{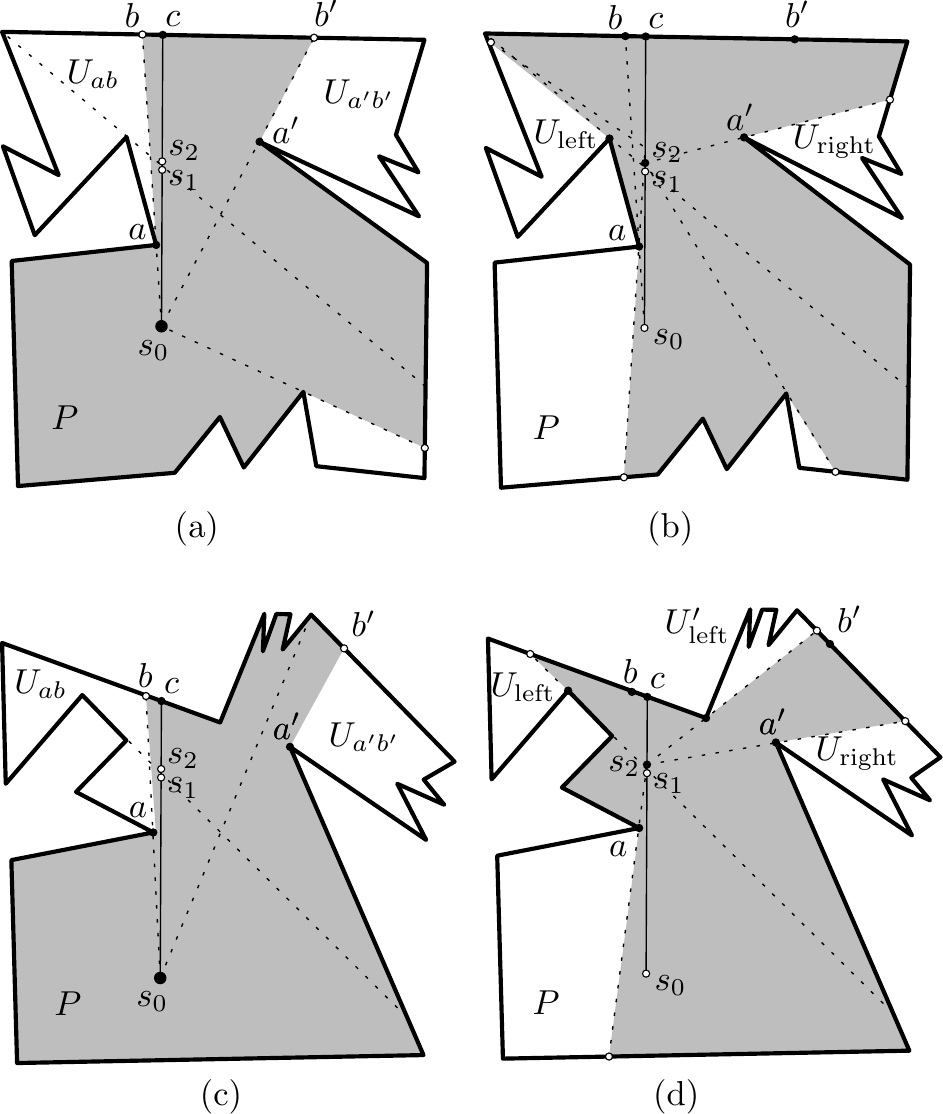}
  \caption{\label{fig:line}
(a) A polygon with $n=21$ vertices where $s_0$ violates $\mathbf{C}_2$ a pair of dependent
    pockets $U_{ab}$ and $U_{a'b'}$.
(b) Point $s_2\in s_0c$ satisfies both $\mathbf{C}_1$ and $\mathbf{C}_2$.
(c) A polygon with $n=21$ vertices where $s_0$ violates $\mathbf{C}_2$ with
    a pair of pockets $U_{ab}$ and $U_{a'b'}$ of $|\spread(a,a')|=19$.
(d) Point $s_2$ also violates $\mathbf{C}_2$ with a pair of pockets of $|\spread(a'',a')|=13$.}
\end{figure}

We introduce some terminology to trace effects of moving a point $s$ continuously in the interior of $P$.
The visibility polygons of two points are \emph{combinatorially equivalent} if there is a bijection between their pockets such that corresponding pockets are incident to the same sets of vertices of $P$.
The combinatorial changes incurred by a moving point $s$ have been thoroughly analysed in \cite{AGTZ02,BLM02,CW12}.
The set of points $s\in P$ that induces combinatorially equivalent visibility polygons $V_0(s)$ forms a \emph{cell} in the \emph{visibility decomposition} $VD(P)$ of polygon $P$. It is known that each cell is convex and there are $O(n^3)$ cells, but a line segment in $P$ intersects only $O(n)$ cells~\cite{BLM02,CD98}.
A combinatorial change in $V_0(s)$ occurs if $s$ crosses a \emph{critical line} spanned by two vertices of $P$, and the circular order of the rays from $s$ to the two vertices is reversed.
The possible changes are: (1) a pocket of size 2 appears or disappears; (2) the size of a pocket increases or decreases by one; (3) two pockets merge into one pocket or a pocket splits into two pockets.
Importantly, the combinatorics of $V_0(s)$ does not contain enough information to decide whether two pockets are dependent or independent. Proposition~\ref{prop:sliding-pockets} (below) will be crucial for checking whether two dependent pockets become independent when a point $s$ moves along a straight-line trajectory from $s=s_1$ to $s=s_2$.

\begin{proposition} \label{prop:sliding-pockets}
Let $s_1s_2$ be a line segment in ${\rm int}(P)$. Then
\begin{enumerate}[(i)]
\item Every left (resp., right) pocket of $V_0(s_2)$ induced by a
vertex on the left (right) of $\overrightarrow{s_1s_2}$
is contained in a left (right) pocket of $V_0(s_1)$.
\item Let $U_{\rm left}$ and $U_{\rm right}$ be independent pockets of $V_0(s_1)$.
Then every two pockets of $V_0(s_2)$ contained in $U_{\rm left}$ and $U_{\rm right}$, respectively, are also independent.
\end{enumerate}
\end{proposition}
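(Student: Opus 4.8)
The plan is to handle the two parts separately, with part (i) doing most of the work and part (ii) following as a short consequence.

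For part (i), I would first fix a left pocket $U$ of $V_0(s_2)$ induced by a vertex $a$ lying (strictly) on the left of $\overrightarrow{s_1s_2}$, with window $ab$, where $b$ lies on the ray from $s_2$ through $a$. The key geometric observation is that the window $ab$, being the boundary between $U$ and $V_0(s_2)$, must be a chord of $P$, and since $s_2$ sees $a$ along the segment $s_2a$, the point $s_2$ lies in $V_0(s_1)$ is not quite what we need—rather, I want to argue that $a$ already induces a left pocket of $V_0(s_1)$ that contains $U$. The cleanest route: the reflex vertex $a$ blocks visibility from $s_2$, so the ray $\overrightarrow{s_2 a}$ separates $U$ from $s_2$; I then show that the ray $\overrightarrow{s_1 a}$ likewise puts all of $U$ on the far side. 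Because $a$ is on the left of $\overrightarrow{s_1 s_2}$, the rays $\overrightarrow{s_1 a}$ and $\overrightarrow{s_2 a}$ and the segment $s_1 s_2$ bound a triangular (or degenerate) region, and $U$—lying beyond $a$ from $s_2$'s viewpoint and on the left—cannot be seen ``around'' $a$ from $s_1$ either. Concretely, I would verify that no point of $U$ is visible from $s_1$: any segment from $s_1$ into $\mathrm{int}(U)$ would have to cross the chord $ab$; near $a$ it lies on the wrong side of the reflex wedge at $a$, so it leaves $P$ or is blocked. Hence $a$ induces a left pocket $U'$ of $V_0(s_1)$, and $U \subseteq U'$ since both are the component of $P \setminus \mathrm{cl}(V_0(\cdot))$ cut off at $a$ on the left, and $V_0(s_1) \subseteq V_0(s_1 s_2) \supseteq$ the visible region seen from $s_2$ on that side. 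The symmetric argument gives the ``right'' case.

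For part (ii), suppose $U_{\rm left}$ and $U_{\rm right}$ are independent pockets of $V_0(s_1)$, and let $U_2 \subseteq U_{\rm left}$, $U_2' \subseteq U_{\rm right}$ be pockets of $V_0(s_2)$. Their windows are chords of $P$ lying inside the windows' pockets, hence inside $U_{\rm left}$ and $U_{\rm right}$ respectively. If some chord $\ell$ of $P$ crossed both windows of $U_2$ and $U_2'$, then $\ell$ (or a subsegment of it, extended within $P$ to a chord) would cross $U_{\rm left}$ and $U_{\rm right}$—more precisely it would cross the windows of $U_{\rm left}$ and $U_{\rm right}$, because to get from inside $U_{\rm left}$ to inside $U_{\rm right}$ a chord must exit each through its window. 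This would contradict the independence of $U_{\rm left}$ and $U_{\rm right}$. The one subtlety is that a chord crossing the window of $U_2$ need not immediately cross the window of $U_{\rm left}$; but since $U_2 \subseteq U_{\rm left}$ and $U_{\rm left}$ is a simply connected region whose only boundary in $\mathrm{int}(P)$ is its window, any chord with a point in $U_2 \subseteq U_{\rm left}$ and a point outside $U_{\rm left}$ must cross that window, and I would spell this out.

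The main obstacle I anticipate is part (i): making rigorous the claim that a vertex cutting off a left pocket of $V_0(s_2)$ also cuts off a (larger) left pocket of $V_0(s_1)$, when $a$ is on the left of $\overrightarrow{s_1 s_2}$. The argument hinges on a careful case analysis of the position of the window endpoint $b$ and of where the ray $\overrightarrow{s_1 a}$ first meets $\partial P$, and on using general position of $s_1$ and $s_2$ to avoid degeneracies; I would lean on a picture (as the paper does elsewhere) and on the fact that $V_0(s_1) \cup V_0(s_2) \subseteq V_0(s_1 s_2)$ together with the reflex-wedge condition at $a$ to pin down that $U$ lies entirely in the pocket of $V_0(s_1)$ behind $a$.
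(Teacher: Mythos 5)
Your part~(ii) argument matches the paper's almost exactly: a chord hitting both windows would have to cross from $U_{\rm left}$ into $U_{\rm right}$ and hence pierce both outer windows, contradicting independence. The subtlety you flag (that a chord with one endpoint in $U_2\subseteq U_{\rm left}$ and another outside $U_{\rm left}$ must exit through $U_{\rm left}$'s window) is exactly what needs saying, and you say it.

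Part~(i), however, has a genuine gap. You attempt to prove the stronger claim that \emph{$a$ itself} induces a left pocket $U'$ of $V_0(s_1)$ containing $U$. This is false in general: even with $s_1s_2\subset{\rm int}(P)$, the vertex $a$ need not be visible from $s_1$ (another reflex vertex can obscure it), and in that case $a$ lies strictly inside some pocket of $V_0(s_1)$ and induces no pocket at all. Your argument slides over this because it tries to deduce ``$a$ induces a left pocket of $V_0(s_1)$'' from the weaker fact that ``no point of $U$ is visible from $s_1$,'' but those are not equivalent; the window of the pocket of $V_0(s_1)$ that hides $U$ may be anchored at a completely different reflex vertex. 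The paper's proof splits exactly here: if $a$ is visible from $s_1$, the claim is immediate as you say; otherwise, it considers the geodesic from $s_1$ to $a$, notes that it is homotopic to the two-leg path $(s_1,s_2,a)$ and hence confined to the triangle $\Delta(s_1,s_2,a)$, and identifies the \emph{first interior vertex} of that geodesic (a reflex vertex visible from $s_1$ and lying on the left of $\overrightarrow{s_1s_2}$) as the one whose pocket swallows $U_{ab}$. Your outline never produces such a vertex and would not survive the case where $s_1a\not\subset P$; adding that geodesic step is what closes the gap.
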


\begin{proof}
(i) Let $U_{ab}$ be a left pocket of $V_0(s_2)$ induced by vertex $a$ on the left of $\overrightarrow{s_1s_2}$. If $a$ is directly visible from $s$ (i.e., $s_1a\subset P$),
then $U_{ab}$ is clearly contained in the left pocket of $V_0(s_1)$ induced by $a$.
Otherwise, consider the geodesic path from $s_1$ to $a$ in $P$. It is homotopic to the path $(s_1,s_2,a)\subseteq P$, and so it contained in the triangle~$\Delta(s_1,s_2,a)$. The first
internal vertex of this geodesic induces a left pocket of $V_0(s_1)$ that contains $U_{ab}$.

(ii) Since $U_{\rm left}$ and $U_{\rm right}$ are independent, no chord of $P$ crosses the window of both pockets. Therefore no chord of $P$ can cross the windows of two pockets lying in
$U_{\rm left}$ and $U_{\rm right}$, respectively.
\end{proof}

\begin{lemma}\label{lem:line}
Let $s_0$ be an arbitrary point in ${\rm int}(K_{\floor{n/2}-1}(P))$ and
$c \in \partial P$ as defined above.
Then there is a point $s\in s_0c$ such that one of the following statements holds:
\begin{itemize}
\item[$\bullet$] $s$ satisfies both $\mathbf{C}_1$ and $\mathbf{C}_2$;
\item[$\bullet$] $s$ is a double violator;
\item[$\bullet$] $s$ satisfies $\mathbf{C}_1$ but violates $\mathbf{C}_2$ due
                  to two pockets of whose spread is contained in $\spread(a,a')$
                  and has size at most $|\spread(a,a')|-\floor{n/4}$.
\end{itemize}
\end{lemma}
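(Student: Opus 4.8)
The plan is to move the light source $s$ from $s_0$ along the segment $s_0c$ and track what happens to the pair of dependent pockets $U_{ab}$, $U_{a'b'}$. Because $c$ is chosen close to $b$, as $s$ slides toward $c$ the window of the left pocket induced by $a$ rotates so that the lead edge of $U_{ab}$ gradually enters the visibility polygon; the intuition is that once $s$ has moved far enough, the offending left pocket either shrinks below the critical size or loses its dependence with the right pocket, while $\mathbf{C}_1$ is preserved throughout. Concretely, I would first observe that $s_0 \in K_{\floor{n/2}-1}(P)$ and that, since $K_{\floor{n/2}-1}(P)$ is $P$-convex (Section~\ref{ssec:kernel}) and $s_0c$ is a segment, every point of $s_0c$ that still lies in $K_{\floor{n/2}-1}(P)$ satisfies $\mathbf{C}_1$; so I only need to control where the trajectory leaves this generalized kernel, and handle the portion that leaves it separately. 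In fact, by the choice of $c$ (no line through two vertices meets $bc$, no intersection of two such lines meets $s_0c$), the trajectory crosses only $O(n)$ critical lines, each producing one of the elementary combinatorial events listed before the lemma.

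Next I would set up the continuity/discrete-sweep argument. Let $s_2$ be the last point on $s_0c$ at which $V_0(s)$ still has a pair of dependent pockets whose spread is contained in $\spread(a,a')$ and whose total size is at least $\floor{n/2}$; if no such point exists, i.e. already near $s_0$ the bad pair disappears, then a point just past the first relevant critical crossing satisfies both $\mathbf{C}_1$ and $\mathbf{C}_2$ (or has become a double violator at some intermediate crossing, which we also output), and we are done. So assume $s_2$ exists. The key quantitative claim is that at $s_2$, the relevant spread has \emph{strictly} shrunk — in fact by at least $\floor{n/4}$ — compared to $|\spread(a,a')|$. To see why a drop of this magnitude is forced, I would use Proposition~\ref{prop:sliding-pockets}(i): the left pocket of $V_0(s_2)$ dependent on the right pocket is contained in the old left pocket $U_{ab}$, and its inducing vertex $a''$ lies on the arc from $a$ onward; since $c$ is near $b$, the point $s_2$ lies (essentially) on the ray from some vertex through $b$, which means the boundary chord of $U_{ab}$ has swept past all vertices strictly between $a$ and $a''$ — and there are at least $\floor{n/4}$ of them because $U_{ab}$ had size at least $\floor{n/4}$ and its lead edge is the first thing to get covered. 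Thus $|\spread(a'',a')| \le |\spread(a,a')| - \floor{n/4}$. This is exactly the third bullet. Along the way, if at some crossing two dependent pairs simultaneously have joint size $\ge \floor{n/2}$, or a dependent pair of joint size $\ge \floor{n/2}$ coexists with an independent pocket of size $\ge\floor{n/2}$, we stop and output that point as a double violator.

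The main obstacle I anticipate is making rigorous the claim that the spread drops by a full $\floor{n/4}$ rather than merely decreasing: one must argue that moving $s$ toward $c \approx b$ cannot merely \emph{shuffle} which vertices are inside the left pocket (removing some while a split introduces a new nearby pocket that re-acquires dependence), but genuinely eliminates the entire initial segment of $\spread(a,a')$ corresponding to $U_{ab}$. The careful point is that the new left pocket dependent on the right pocket, being contained in $U_{ab}$ by Proposition~\ref{prop:sliding-pockets}(i), cannot be induced by a vertex earlier than $a$; and any sub-pocket of $U_{ab}$ whose window is crossed by a chord reaching the right pocket must have its inducing reflex vertex past the part of $\partial P$ that $s_2$ now sees directly — which, because $s_2$ sits near the ray through $b$, is precisely the lead-edge side of $U_{ab}$. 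Handling the bookkeeping when the trajectory exits $K_{\floor{n/2}-1}(P)$ (so $\mathbf{C}_1$ could in principle fail) requires noting that exiting the kernel means some pocket reached size $\floor{n/2}$, and one checks that such a configuration is itself either a double violator or already gives a pocket-pair structure covered by Lemmas~\ref{lem:violate1}–\ref{lem:violate2}; I would fold this into the case analysis. Everything else — the $O(n)$ bound on crossings, updating $V_0(s)$ incrementally, and identifying the stopping point — is routine given the cited visibility-decomposition machinery \cite{BLM02,CD98}.
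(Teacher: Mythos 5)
Your high-level plan --- sweep $s$ from $s_0$ towards $c$, find the last critical configuration where the original dependency persists, and do a case analysis on what $V_0(s)$ looks like just past that point --- matches the paper's. But the central quantitative step is argued incorrectly and, as written, fails.

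You set $s_2$ to be the last point where a dependent pair of joint size $\geq\floor{n/2}$ persists with spread inside $\spread(a,a')$, and then assert that at $s_2$ the spread has already shrunk by $\floor{n/4}$, with the new left pocket contained in $U_{ab}$ and its inducing vertex $a''$ at least $\floor{n/4}$ vertices past $a$. Neither half of this is right. First, at your $s_2$ the offending pair is still present, so the spread has not yet shrunk; the shrinkage can only be observed just past the critical event. Second, and more importantly, nothing forces the inducing vertex of a surviving dependent sub-pocket of $U_{ab}$ to have marched $\floor{n/4}$ vertices past $a$: at the first critical event the new inducing vertex could be just one step past $a$, and the pocket could retain almost all of its size. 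The argument that ``the boundary chord of $U_{ab}$ has swept past all vertices strictly between $a$ and $a''$ ... and there are at least $\floor{n/4}$ of them'' confuses the size of $U_{ab}$ with the distance the window has rotated; the two are unrelated. The paper's actual reason for the $\floor{n/4}$ drop is the opposite of what you argue: by the definition of the critical point, if $\mathbf{C}_2$ is still violated just past it, the left pocket of the \emph{new} violating pair is \emph{not} contained in $U_{ab}$ at all, while (in the favourable subcase) the right pocket is still inside $U_{\rm right}\subseteq U_{a'b'}$. Hence the new spread lies in the arc from $b$ to $a'$, which excludes all $\geq\floor{n/4}$ vertices on $\partial U_{ab}$. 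This is what buys the required decrease, and it is what you need to prove.

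You also omit the other subcase that the paper must treat: if the new violating pair's right pocket is disjoint from $U_{\rm right}$, then no size drop is guaranteed; instead, that pair was already present (combinatorially) just \emph{before} the critical event, so the earlier point carries two disjoint violating pairs and is a double violator (Lemma~\ref{lem:violate1}). Similarly, your treatment of the $\mathbf{C}_1$ failure is too vague: the relevant observation is that a pocket of size $\geq\floor{n/2}$ at the point after the critical event has a combinatorially equivalent counterpart at the point just before it, independent of $U_{\rm left}$ and $U_{\rm right}$, giving a double violator of the second type (Lemma~\ref{lem:violate2}). Without these two additional branches --- and with the spread-reduction argument corrected as above --- the induction on $|\spread(a,a')|$ in the proof of Theorem~\ref{thm:radius} would not terminate.
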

\begin{proof}
We move a point $s\in s_0c$ from $s_0$ to $c$ and trace the combinatorial changes
of the pockets of $V_0(s)$, and their dependencies. Initially, when $s=s_0$, all pockets
have size at most $\floor{n/2}-1$; and there are two dependent pockets, a left pocket $U_{ab}$ on the left of $\overrightarrow{s_0c}$ and, by Proposition~\ref{prop:independent}, a right pocket $U_{a'b'}$ on the right of $\overrightarrow{s_0c}$, of total size at least $\floor{n/2}$. When $s=c$, every left pocket of $V_0(s)$ on the left of $\overrightarrow{s_0c}$ is independent of any right pocket on the right of $\overrightarrow{s_0c}$.

Consequently, when $s$ moves from $s_0$ to $c$, there is a critical change from $s=s_1$ to $s=s_2$ such that $V_0(s_1)$ still has two dependent pockets of size at least $\floor{n/2}$ where the left (resp., right) pocket is on the left (right) of $\overrightarrow{s_0c}$; but $V_0(s_2)$ has no two such pockets. (See Fig.~\ref{fig:line} for examples.) Let $U_{\rm left}$ and $U_{\rm right}$ denote the two violator pockets of $V_0(s_1)$. The critical point is either a combinatorial change (i.e., the size of one of these pockets drops), or the two pockets become independent. By Proposition~\ref{prop:sliding-pockets}, we have $U_{\rm left}\subseteq U_{ab}$ and $U_{\rm right}\subset P\setminus U_{ab}$, and the spread of $U_{\rm left}$ and $U_{\rm right}$ is contained in $\spread(a,a')$. We show that one of the statements in Lemma~\ref{lem:line} holds for $s_1$ or $s_2$.

If $s_2$ satisfies both $\mathbf{C}_1$ and $\mathbf{C}_2$, then our proof is complete (Fig.~\ref{fig:line}(a-b)).
If $s_2$ violates $\mathbf{C}_1$, i.e., $V_0(s_2)$ has a pocket of size $\geq \floor{n/2}$,
then $V_0(s_1)$ also has a combinatorially equivalent pocket (which is independent of $U_{\rm left}$ and $U_{\rm right}$), and so $s_1$ is a double violator. Finally, if $s_2$ violates $\mathbf{C}_2$, i.e., $V_0(s_2)$ has two dependent pockets of total size $\floor{n/2}$, then the left pocket of this pair is not contained in $U_{ab}$ by the choice of point $c\in \partial P$. We have two subcases to consider: (i) If the right pocket of this new pair is contained in $U_{\rm right}$ (or it is $U_{\rm right}$), then we know that their spread is contained in $\spread(b,a)$ which has size at most $|\spread(a,a')|-\floor{n/4}$  (Fig.~\ref{fig:line}(c-d)). (ii) If the right pocket of the new pair is disjoint from $U_{\rm right}$, then $V_0(s_1)$ also has a combinatorially equivalent pair of pockets, which is different from $U_{\rm left}$ and $U_{\rm right}$, and so $s_1$ is a double violator.
\end{proof}

\begin{lemma}\label{lem:line2}
A point $s\in s_0c$ described in Lemma~\ref{lem:line} can be found in $O(n \log n)$ time.
\end{lemma}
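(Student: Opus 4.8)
The plan is to implement the continuous motion from the proof of Lemma~\ref{lem:line} as a discrete walk through the $O(n)$ cells of the visibility decomposition $VD(P)$ that the segment $s_0c$ crosses, spending $O(\log n)$ amortized time per cell. First I would compute the $O(n)$ critical lines that $s_0c$ meets — these are exactly the lines spanned by pairs of vertices of $P$ that cross $s_0c$, and they can be found and sorted along $s_0c$ in $O(n\log n)$ time. Between consecutive critical points the combinatorial structure of $V_0(s)$ is fixed, so the whole trajectory is partitioned into $O(n)$ events, each of one of the three types listed in Section~\ref{ssec:witness} (a size-2 pocket appears/disappears; a pocket changes size by one; two pockets merge or one splits).

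Next I would maintain, incrementally across these events, the visibility polygon $V_0(s)$ together with its list of pockets and their sizes. Starting from $V_0(s_0)$, computed in $O(n)$ time by~\cite{GHL+87}, each event touches only a bounded number of pockets and changes one size by one or splits/merges one pocket, so the update is $O(\log n)$ using a balanced-tree representation of the boundary of $V_0(s)$ (this is the standard persistence-free version of the kinetic visibility structures of~\cite{AGTZ02,BLM02,CW12}). At each step I check condition $\mathbf{C}_1$ (no pocket of size $\geq\floor{n/2}$) and, for $\mathbf{C}_2$, I only need to watch the distinguished left pocket $U_{ab}$ lying left of $\overrightarrow{s_0c}$ and the largest right pocket lying right of it: by Proposition~\ref{prop:sliding-pockets}(i) any violating pair after the critical change still has its left pocket inside $U_{ab}$, so it suffices to track whether the current descendant of $U_{ab}$ and some right pocket together have size $\geq\floor{n/2}$ and remain dependent.

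The main obstacle is that the combinatorics of $V_0(s)$ alone does not reveal whether two pockets are dependent (as stressed before Proposition~\ref{prop:sliding-pockets}), so I cannot simply read off from the cell structure the moment the distinguished pair becomes independent. I would handle this exactly as the existence proof does: dependency of a left pocket $U$ and a right pocket $U'$ is equivalent to the existence of a chord of $P$ crossing both windows, which is a single constant-size geometric test on the two windows and the boundary of $P$ between them; with a ray-shooting structure~\cite{HS95} on $P$ (built in $O(n)$ time), this test takes $O(\log n)$ time, and I invoke it only $O(n)$ times, once per event on the $O(1)$ relevant pocket pairs. The critical transition $s_1\to s_2$ of Lemma~\ref{lem:line} is then detected as the first event after which either $\mathbf{C}_2$ holds, or $\mathbf{C}_1$ is violated, or a \emph{new} dependent pair of size $\geq\floor{n/2}$ exists — each of which I check in $O(\log n)$ time at that event. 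Whichever of the three outcomes of Lemma~\ref{lem:line} occurs, the witnessing point $s_1$ or $s_2$ (and, in the last two cases, the certifying pockets or the fact that $s$ is a double violator) is produced explicitly at that event, and then the double-violator subroutine of Section~\ref{ssec:double} or a recursive call on a shorter spread (analyzed below) takes over. Summing $O(\log n)$ per event over $O(n)$ events gives the claimed $O(n\log n)$ bound, matching the $O(n\log n)$ cost of computing $s_0$ via Lemma~\ref{lem:kernel2}.
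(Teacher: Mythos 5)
Your overall plan -- walking the $O(n)$ cells of the visibility decomposition that $s_0c$ crosses, maintaining $V_0(s)$ and its pockets incrementally, and using ray-shooting to test dependence of a pair -- matches the paper's skeleton (the paper uses the persistent structure of Chen--Daescu~\cite{CD98}, which is essentially what you describe). However, the crucial step that the paper works hardest on is exactly the one you wave off: bounding the number of dependence tests. You claim you ``only need to watch the distinguished left pocket $U_{ab}$ \dots\ and the largest right pocket,'' and then that you ``invoke [the dependence test] only $O(n)$ times, once per event on the $O(1)$ relevant pocket pairs.'' Neither of these is justified. First, a pair violating $\mathbf{C}_2$ at an intermediate $s$ need not involve the largest right pocket, and its left pocket need not be the descendant of $U_{ab}$ either (Proposition~\ref{prop:sliding-pockets}(i) only places it inside \emph{some} left pocket of $V_0(s_0)$). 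Second, and more fundamentally, the combinatorics of $V_0(s)$ does not signal when a pair becomes dependent or independent, and the paper points out that $\Omega(n^2)$ dependent pairs may become independent along $s_0c$, so an unamortized ``constantly many tests per combinatorial event'' accounting is not obviously correct.

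The paper resolves both points with additional structure that your proposal is missing. It maintains the set of all \emph{large} pockets (size $\geq\lfloor n/4\rfloor$), of which there are at most four, and for each large pocket $U_{\alpha\beta}$ it keeps a single designated candidate partner $U_{\alpha'\beta'}$ -- namely the one whose inducing reflex vertex $\alpha'$ is farthest from $c$ along $\partial P$ -- together with the predicted location on $s_0c$ at which the pair would become independent. The $O(n)$ bound on dependence tests then comes from a monotonicity/amortization argument: every update either decreases the size of the tracked large pocket or advances $\alpha'$ strictly toward $c$, so each of the $O(n)$ reflex vertices and the $O(n)$ combinatorial changes contributes at most a constant number of tests. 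Without this monotone potential-function style accounting, the $O(n\log n)$ total is unsupported. So the gap in your proposal is concrete: you need to (a) track all large pockets, not just the descendant of $U_{ab}$ and the largest right pocket, and (b) supply the amortized argument that the candidate partner pointer for each large pocket only ever moves in one direction along $\partial P$, which is what caps the number of ray-shooting dependence tests at $O(n)$.
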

\begin{proof}
It is enough to show that the critical positions, $s_1$ and $s_2$, in the proof of Lemma~\ref{lem:line} can be computed in $O(n\log n)$ time. We use the persistent data structure developed
by Chen and Daescu~\cite{CD98} for maintaining the combinatorial structure of $V_0(s)$ as $s$ moves along the line segment $s_0c$. The pockets (and pocket sizes) change only at $O(n)$ points
along $s_0c$, and each update can be computed in $O(\log n)$ time.

However, the data structure in~\cite{CD98} does not track whether two pockets on opposite sides
of $s_0c$ are dependent or not. The main technical difficulty is that $\Omega(n^2)$ dependent pairs might become independent as $s$ moves along $s_0s$ (even if we consider only pairs of total  size at least $\floor{n/2}$), in contrast to only $O(n)$ combinatorial changes. We reduce the number of relevant events by focusing on only the ``large'' pockets (pockets of size at least $\floor{n/4}$), and maintaining at most one pair that violates $\mathbf{C}_2$ for each large pocket. (In a dependent pair of size $\geq \floor{n/2}$, one of the pockets has size $\geq \floor{n/4}$.)

We augment the persistent data structure in~\cite{CD98} as follows. We maintain the list of all left (resp., right) pockets of $V_0(s)$ lying on the left (right) of $\overrightarrow{s_0c}$, sorted in counterclockwise order along $\partial P$. We also maintain the set of \emph{large} pockets of size at least $\floor{n/4}$ from these two lists. There are at most 4 large pockets for any $s\in s_0c$.
For a large pocket $U_{\alpha\beta}$ of $s\in s_0c$, we maintain one possible other pocket $U_{\alpha'\beta'}$ of $V_0(s)$ such that they together violate $\mathbf{C}_2$. If there are several such pockets $U_{\alpha'\beta'}$, we maintain only the one where $\alpha'$ (the reflex vertex that induces $U_{\alpha'\beta'}$) is farthest from $c$ along $\partial P$. Thus, we maintain a set $\mathcal{U}(s)$ of at most 4 pairs $(U_{\alpha\beta}, U_{\alpha'\beta'})$. Finally, for each of pair $(U_{\alpha\beta}, U_{\alpha'\beta'})\in \mathcal{U}$, we maintain the positions $s'= sc\cap \alpha\alpha'$ where the pair $(U_{\alpha\beta}, U_{\alpha'\beta'})$ becomes independent assuming that neither $U_{\alpha\beta}$ nor $U_{\alpha'\beta'}$ goes through combinatorial changes before $s$ reaches $s'$. We use~\cite{CD98} together with these supplemental structures, to find critical points $s_1,s_2\in  s_0c$ such that $\mathcal{U}(s_1)\neq \emptyset$ but $\mathcal{U}(s_2)=\emptyset$.

We still need to show that $\mathcal{U}(s)$ can be maintained in $O(n\log n)$ time as $s$ moves from $s_0$ to $c$. A pair $(U_{\alpha\beta}, U_{\alpha'\beta'})$  has to be updated if $U_{\alpha\beta}$ or $U_{\alpha'\beta'}$ undergoes a combinatorial change, or if they become independent (i.e., $s\in \alpha\alpha'$). Each large pocket undergoes $O(n)$ combinatorial changes by Proposition~\ref{prop:sliding-pockets}. Note also that there are $O(n)$ reflex vertices along the boundary $\partial P$ between $a$ and $a'$ (these vertices are candidates to become $\alpha'$).
No update is necessary when $\beta$ or $\beta'$ changes but $U_{\alpha\beta}$ remains large and the total size of the pair is at least $\floor{n/2}$. If the size of $U_{\alpha\beta}$ drops below $\floor{n/4}$, we can permanently eliminate the pair from $\mathcal{U}$. In all other cases, we search for a new vertex $\alpha'$, by testing the reflex vertices that induce pockets from the current $\alpha'$ towards $c$ along $\partial P$ until we either find a new pocket $U_{\alpha'\beta'}$ or determine that $U_{\alpha\beta}$ is not dependent of any other pocket with joint size $\geq \floor{n/2}$. We can test dependence between $U_{\alpha\beta}$ and a candidate for $U_{\alpha'\beta'}$ in $O(\log n)$ time (test $\alpha\alpha'\subset P$ by a ray shooting query). Each update of $(U_{\alpha\beta}, U_{\alpha'\beta'})$ decreases the size of the large pocket $U_{\alpha\beta}$ or moves the vertex $\alpha'$ closer to $c$. Therefore, we need to test dependence between only $O(n)$ candidate pairs of pockets. Overall, the updates to $\mathcal{U}(s)$ take $O(n \log n)$ time.
\end{proof}

We are now ready to prove Theorem~\ref{thm:radius}.
\begin{proof}[of Theorem~\ref{thm:radius}]
Let $P$ be a simple polygon with $n \geq 3$ vertices.
Compute the generalized kernel $K_{\floor{n/2}-1}(P)$,
and pick an arbitrary point $s_0\in {\rm int}(K_{\floor{n/2}-1}(P))$,
which satisfies $\mathbf{C}_1$. If $s_0$ satisfies $\mathbf{C}_2$, too,
then ${\rm int}(P)\subseteq V_\floor{(n-2)/4}(s_0)$ by Lemma~\ref{lem:condition}.
Otherwise, there is a pair of dependent pockets, $U_{ab}$ and $U_{a'b'}$,
of total size at least $\floor{n/2}$ and $\floor{n/2}\leq {\rm spread}(a,a')\leq n$.
Invoke Lemma~\ref{lem:line} up to three times to find a point $s\in {\rm int}(P)$
that either satisfies both $\mathbf{C}_1$ and $\mathbf{C}_2$, or is a double violator.
If $s$ satisfies $\mathbf{C}_1$ and $\mathbf{C}_2$ then Lemma~\ref{lem:condition}
completes the proof. If $s$ is a double violator, apply Lemma~\ref{lem:violate1} or Lemma~\ref{lem:violate2} as appropriate to complete the proof. The overall running time
of the algorithm is $O(n\log n)$ from the combination of Lemmas~\ref{lem:violate1},~\ref{lem:violate2},~\ref{lem:kernel2}, and~\ref{lem:line2}.

For every $k\geq 1$, the diffuse reflection diameter of the zig-zag polygon (cf. Fig.~\ref{fig:diffuse-ex1}) with $n=4k+2$ vertices is $k=\floor{(n-2)/4}$. By introducing up to 3 dummy vertices on the boundary of a zig-zag polygon, we obtain $n$-vertex polygons $P_n$ with $R(P_n)=\floor{(n-2)/4}$ for all $n\geq 6$. Finally, every simple polygon with $n=3$, 4, or 5 vertices is star-shaped, and so its diffuse reflection radius is $0=\floor{(n-2)/4}$.
\end{proof}

\section{Approximate Diffuse Reflection Radius}

In this section, we prove Theorem~\ref{thm:apx-compute-radius} and show how to \emph{approximate} the diffuse reflection radius $R(P)$ of a given polygon $P$ up to an additive error of at most 1 in polynomial time (a similar strategy works for approximating the diffuse reflection diameter $D(P)$, as well).

\begin{proof}[of Theorem~\ref{thm:apx-compute-radius}]
Let $P$ be a simple polygon with vertex set $V$ in general position, where $|V| = n$.
We wish to compute an integer $k\in \mathbb{N}$ such that $k-1\leq R(P)\leq k+1$,
and a point $s\in {\rm int}(P)$ such that ${\rm int}(P)\subseteq V_{k+1}(s)$ in polynomial time.
We prove the claim by analyzing the following algorithm:
\begin{enumerate} \itemsep2pt
\item[]\texttt{ApproxDiffuseRadius}$(P)$
  \item For each vertex $v$ of $P$, find two points $v^-$ and $v^+$ in the relative interior of the two edges of $P$ incident to $v$ such that no line through a pair of vertices in
       $V\setminus \{v\}$ separates them from $v$. Let $Q=\{v^-,v^+:v\in V\}$.
  \item Find the minimum integer $k\geq 0$ such that $C_k=\bigcap_{q\in Q} V_k(q)\cap {\rm int}(P)$
       is nonempty by binary search over $k \in \{0, \dots, \lfloor (n-2)/4 \rfloor\}$.
\item Return $k$ and an arbitrary point $s\in C_k$.
\end{enumerate}

We first show that \texttt{ApproxDiffuseRadius}$(P)$ runs in polynomial time in~$n$.
We can find a suitable set $Q=\{v^-,v^+:v\in V\}$ in $O(n^3)$ time by computing, for each $v\in V$,
the intersection points between the $O(n^2)$ lines through a pair of vertices in $V\setminus \{v\}$ and
the two edges of $P$ incident to $v$.
Then, $v^-$ and $v^+$ can be picked as points on the relative interiors of the two edges incident
to $v$ between $v$ and the closest intersection point.
The combinatorial complexity of a region $V_k(q)$
is at most $O(n^9)$, but the set of the boundary points $V_k(q)\cap \partial P$ consists
of only~$O(n^4)$ line segments~\cite{ADI+06}.  Given $V_{k-1}(q)\cap \partial P$,
we can compute $V_k(q)\cap \partial P$ by taking the union of the visibility regions
for~$O(n^4)$ line segments in $O(n^5)$ time~\cite{BLM02,CW12}. Instead of computing the
regions~$V_k(q)$, we iteratively maintain~$V_k(q)\cap \partial P$ for all
$k=0,\ldots , \lfloor (n-2)/4 \rfloor$ and $q\in Q$, in $O(n^2\cdot n^5)=O(n^7)$ time.

For each $k$, we find $C_k=\bigcap_{q\in Q} V_k(q)\cap {\rm int}(P)$ as follows.
First compute the intersection of the boundary segments
$B_{k-1}=\bigcap_{q\in Q} (V_{k-1}(q)\cap \partial P)$, which consists of $O(n^5)$
line segments, in $O(n^5)$ time. Then compute $C_k$ as the set of points in $P$
visible from any point in $B_{k-1}$ in $O(n^{11})$ time~\cite{ADI+06}. The binary search
tries $O(\log n)$ values of $k$, and so the total running time is $O(n^{11} \lg n)$.

Next we show that the minimum integer $k$ for which $C_k\neq \emptyset$ approximates~$R(P)$.
First, we prove that there is no $t\in {\rm int}(P)$ for which ${\rm int}(P)\subseteq V_{k-2}(t)$.
By the choice of $k$, there is no $t\in {\rm int}(P)$ for which $Q\subseteq V_{k-1}(t)$ or $\partial P \subseteq V_{k-1}(t)$ (since $Q\subset \partial P$).
Then by~\cite{Us} and Proposition~\ref{prop:independent}, ${\rm int}(P)\subseteq V_{k-2}(t)$ implies $\partial P\subseteq V_{k-1}(t)$ for any $t\in {\rm int}(P)$.

Let $s$ be an arbitrary point in $C_k$. By the choice of $k$, we have
$Q\subseteq V_k(s)$. We now show that ${\rm int}(P)\subseteq V_{k+1}(s)$.
Let $t\in {\rm int}(P)$ be an arbitrary point in the interior of $P$.
In any triangulation of $P$, point $t$ lies in some triangle~$\Delta(v_1 v_2 v_3)$, and so $t$ is directly visible from
$v_j^-$ or $v_j^+$ for $j \in \{1,2,3\}$. Since $Q\subseteq V_k(s)$,
there is a diffuse reflection path from $s$ to these boundary points
with at most~$k$ interior vertices. By appending one new segment to this
path, we obtain a diffuse reflection path from $s$ to $t$ with at most
$k+1$ interior vertices.
\end{proof}

\section{Conclusions}
\label{sec:con}

Theorem~\ref{thm:radius} establishes the upper bound of $\lfloor (n-2)/4\rfloor$ for the diffuse refection radius $R(P)$ of a simple polygon $P$ with $n$ vertices. This bound is the best possible. For a given instance $P$, we can approximate $R(P)$ up to an additive error of~2 (Theorem~\ref{thm:apx-compute-radius}). However, no polynomial-time algorithm is known for computing $R(P)$ for a given polygon $P$, or for computing the diffuse reflection center of $P$. Similarly, we know that the diffuse reflection diameter~$D(P)$ of a simple polygon with $n$ vertices is at most $\lfloor (n-2)/2\rfloor$, and this bound is the best possible~\cite{Us}, but no polynomial-time algorithm is known for computing~$D(P)$ or a diametric pair of points for a given polygon $P$.

We believe the general position assumptions about $P$ and choice of light sources
  can be avoided at the cost of more complicated analysis taking
  caution to properly handle collinear chords.

In the remainder of this section, we show that the diffuse reflection center of a polygon $P$ may not be connected or $P$-convex, and that in general there is no containment relation between the geodesic center and the diffuse reflection center. These constructions explain, in part, why it remains elusive to efficiently compute the diffuse reflection center and radius.

 \begin{figure}[htbp]
    \centering
    \includegraphics[scale=1.0]{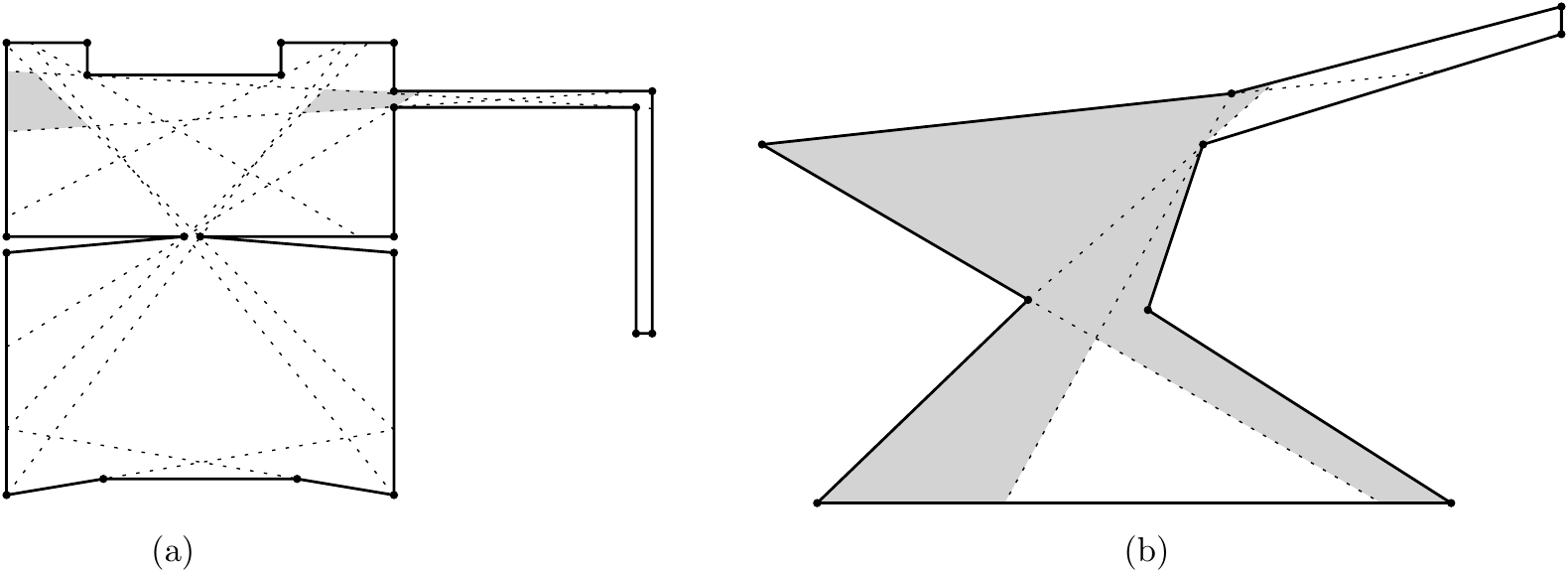}
    \caption{(a) A polygon whose diffuse reflection center is disconnected.
    (b) A polygon whose diffuse reflection center is not geodesic convex.}
    \label{fig:disconnected-center}
  \end{figure}

\smallskip
\noindent{\bf Shape of the diffuse reflection center.}
While the link center is geodesic convex and connected~\cite{LPS+88}, it turns out
that we have no such guarantees on the shape of the diffuse reflection center.
There are polygons with disconnected diffuse reflection centers (Fig.~\ref{fig:disconnected-center}(a)),
and there are polygons whose diffuse reflection centers are connected but not geodesic convex (Fig.~\ref{fig:disconnected-center}(b)).

\begin{figure}[hb!]
    \centering
    \includegraphics[scale=1.0]{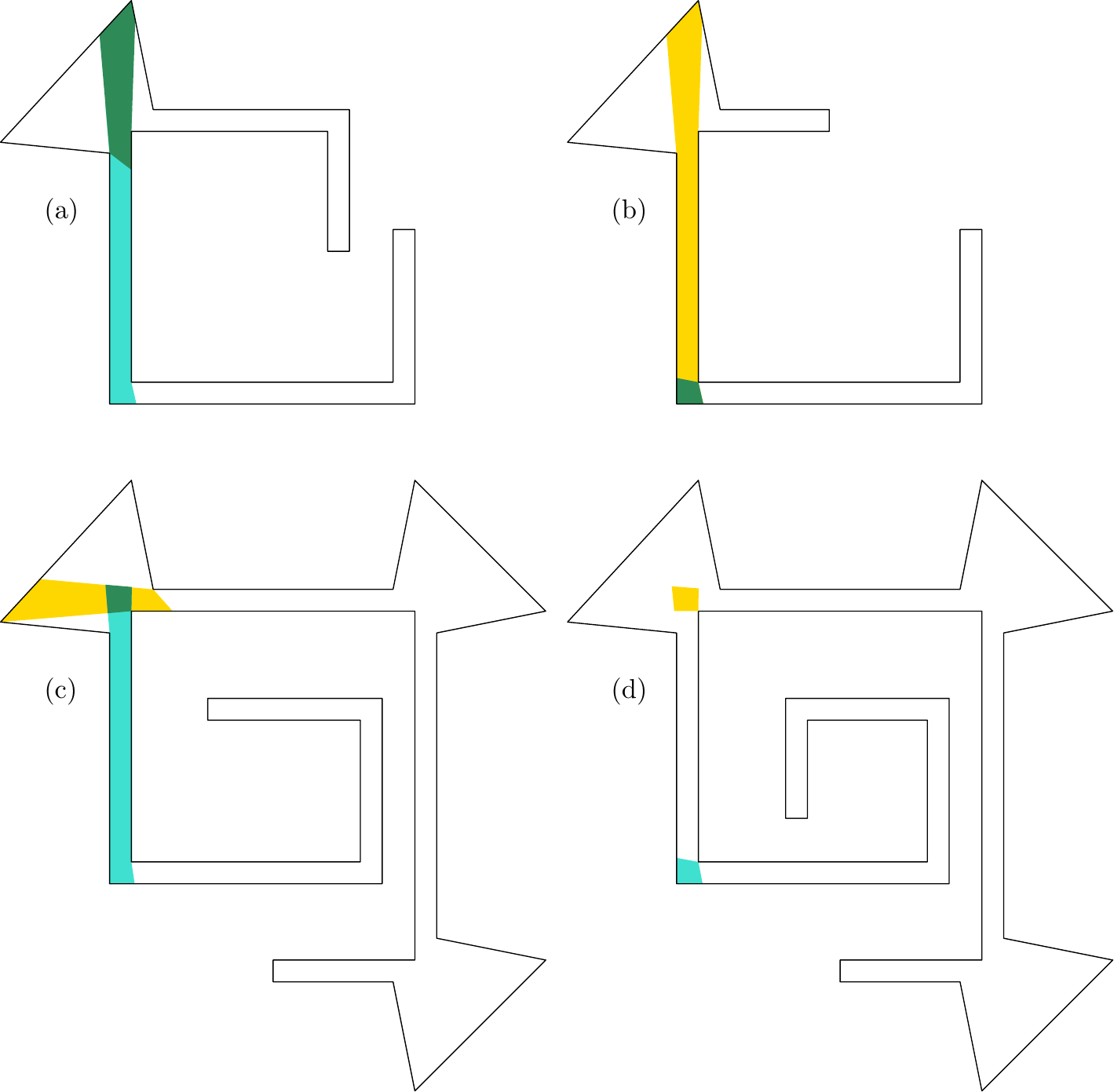}
    \caption{Examples of four inclusion relationships between the diffuse reflection and link centers.
    The diffuse center, link centers, and their intersection are colored yellow, blue, and green, respectively.
    The diffuse and link radii for the polygons in clockwise order from the upper left are 2, 2, 4, 4 and 2, 1, 3, 3, respectively.}
    \label{fig:diffuse-vs-link}
\end{figure}

Furthermore, there is no clear relationship between the two centers;
Fig.~\ref{fig:diffuse-vs-link} illustrates that there exists simple polygons
with each of the following properties:
\begin{enumerate}[(a)] \itemsep2pt
\item the diffuse reflection center is strictly contained in the link center;
\item the diffuse reflection center strictly includes the link center;
\item neither center contains the other but they are not disjoint;
\item the diffuse reflection center and the link center are disjoint.
\end{enumerate}

\bibliographystyle{spmpsci}
\bibliography{radius-arxiv}

\end{document}